\title{Connecting Robust Shuffle Privacy and Pan-Privacy}
\author{Victor Balcer\thanks{Harvard University, \href{mailto:vbalcer@g.harvard.edu}{\texttt{vbalcer@g.harvard.edu}}. Supported by NSF grant CNS-1565387}
\and Albert Cheu\thanks{Northeastern University, \href{mailto:cheu.a@husky.neu.edu}{\texttt{cheu.a@husky.neu.edu}}. Supported by NSF grants CCF-1718088, CCF-
1750640, and CNS-1816028.}
\and Matthew Joseph\thanks{Google New York,
\href{mailto:mtjoseph@google.com}{\texttt{mtjoseph@google.com}}. Part of this work done while a graduate student at the University of Pennsylvania}
\and Jieming Mao\thanks{Google New York, \href{mailto:maojm@google.com}{\texttt{maojm@google.com}}.}}
\newtheorem{theorem}{Theorem}[section]
\newtheorem{fact}[theorem]{Fact}
\newtheorem{lemma}[theorem]{Lemma}
\newtheorem{corollary}[theorem]{Corollary}
\newtheorem{claim}[theorem]{Claim}
\theoremstyle{definition}
\newtheorem{definition}[theorem]{Definition}
\newcommand{\bD}{\mathbf{D}}
\newcommand{\bT}{\mathbf{T}}
\newcommand{\bU}{\mathbf{U}}
\newcommand{\cA}{\mathcal{A}}
\newcommand{\cH}{\mathcal{H}}
\newcommand{\cM}{\mathcal{M}}
\newcommand{\cP}{\mathcal{P}}
\newcommand{\cQ}{\mathcal{Q}}
\newcommand{\cR}{\mathcal{R}}
\newcommand{\cS}{\mathcal{S}}
\newcommand{\cX}{\mathcal{X}}
\newcommand{\cY}{\mathcal{Y}}
\newcommand{\cZ}{\mathcal{Z}}
\newcommand{\Ber}[1]{\mathbf{Ber}\left(#1\right)}
\newcommand{\tBer}[1]{\mathbf{Ber}(#1)}
\newcommand{\Bin}{\mathbf{Bin}}
\newcommand{\Lap}[1]{\mathbf{Lap}(#1)}
\newcommand{\Pois}{\mathbf{Pois}}
\newcommand{\SG}{\mathbf{SG}}
\newcommand{\eps}{\varepsilon}
\newcommand{\ones}{\!\vec{\,1}}
\newcommand{\zo}{\{0,1\}}
\newcommand{\comments}{0	}
\newcommand{\vb}[1]{\ifnum\comments=1\textcolor{violet}{[VB: #1]}\else \fi}
\newcommand{\ac}[1]{\ifnum\comments=1\textcolor{orange}{[AC: #1]}\else \fi}
\newcommand{\mj}[1]{\ifnum\comments=1\textcolor{teal}{[MJ: #1]}\else \fi}
\newcommand{\jm}[1]{\ifnum\comments=1\textcolor{brown}{[JM: #1]}\else \fi}
\newcommand{\de}{\mathsf{DE}}
\newcommand{\poly}{\mathsf{poly}}
\newcommand{\A}{\mathcal{A}}
\newcommand{\In}{\mathcal{I}}
\newcommand{\Ou}{\mathcal{O}}
\newcommand{\Z}{\mathbb{Z}}
\newcommand{\E}[2]{\mathbb{E}_{#1}\left[ #2 \right]}
\newcommand{\ex}[2]{{\ifx&#1& \mathbb{E} \else \underset{#1}{\mathbb{E}} \fi \left[#2\right]}}
\newcommand{\half}{\ensuremath{\frac{1}{2}}}
\newcommand{\mde}{\mathsf{DE}}
\newcommand{\mor}{\mathsf{OR}}
\newcommand{\mmod}{\mathsf{MOD}}
\newcommand{\N}{\mathbb{N}}
\renewcommand{\P}[2]{\mathbb{P}_{#1}\left[#2\right]}
\newcommand{\R}{\mathbb{R}}
\newcommand{\tc}{\tilde c}
\newcommand{\tv}[2]{\|#1 - #2\|_\mathrm{TV}}
\newcommand{\ut}{\mathsf{UT}}
\newcommand{\Var}[2]{\text{Var}_{#1}\left[#2\right]}
\newcommand{\zsum}{\mathsf{ZSUM}}
\begin{document}

\maketitle

\begin{abstract}
In the \emph{shuffle model} of differential privacy, data-holding users send randomized messages to a secure shuffler, the shuffler permutes the messages, and the resulting collection of messages must be differentially private with regard to user data. In the \emph{pan-private} model, an algorithm processes a stream of data while maintaining an internal state that is differentially private with regard to the stream data. We give evidence connecting these two apparently different models.

Our results focus on \emph{robustly} shuffle private protocols, whose privacy guarantees are not greatly affected by malicious users. First, we give robustly shuffle private protocols and upper bounds for counting distinct elements and uniformity testing. Second, we use pan-private lower bounds to prove robustly shuffle private lower bounds for both problems. Focusing on the dependence on the domain size $k$, we find that robust approximate shuffle privacy and approximate pan-privacy have additive error $\Theta(\sqrt{k})$ for counting distinct elements. For uniformity testing, we give a robust approximate shuffle private protocol with sample complexity $\tilde O(k^{2/3})$ and show that an $\Omega(k^{2/3})$ dependence is necessary for any robust pure shuffle private tester. Finally, we show that this connection is useful in both directions: we give a pan-private adaptation of recent work on shuffle private histograms and use it to recover further separations between pan-privacy and interactive local privacy.
\end{abstract}

\section{Introduction}
\label{sec:intro}
Differential privacy~\cite{DMNS06} guarantees that an algorithm's output is quantifiably insensitive to small changes in its input. This insensitivity ensures that differentially private algorithms are not greatly affected by any one data point, which in turn provides privacy for data contributors. The basic differential privacy framework is the foundation for many models, including central~\cite{DMNS06}, local~\cite{DMNS06, BNO08, KLNRS11}, pan-~\cite{DNPRY10}, blended~\cite{AKZHL17}, and most recently shuffle~\cite{CSUZZ19} privacy. 

This work focuses on pan-privacy and shuffle privacy. A pan-private algorithm receives a stream of raw data and processes it one element at a time. After seeing each element, the algorithm updates its internal state to incorporate information from the new element and then proceeds to the next element in the stream. At the end of the stream, the algorithm processes its final internal state to extract and output useful information. Privacy constrains the internal state and output to be differentially private functions of the stream: changing one element of the stream must not greatly affect the joint distribution of any one internal state and the final output.

In shuffle privacy, there is no algorithm receiving a stream of data. Instead, the data is distributed, and  each user holds a single data point. The users follow a prescribed protocol where each employs a randomizer function to compute messages based on their data point and then sends these messages to a secure shuffler\footnote{ In practice, this may be a cryptographic protocol for multi-party shuffling rather than a trusted third-party shuffler.}. The shuffler randomly permutes the messages and releases the shuffled collection of messages publicly. Finally, an analyzer processes the public shuffler output to extract useful information. Here, privacy constrains the shuffler's output: changing one user's data point must not greatly affect the distribution for the messages released by the shuffler. Privacy is over the random coins of the users' randomizers and the shuffler. Since the analyzer only post-processes the shuffler's output, its actions do not affect users' privacy guarantees (see Fact~\ref{fact:post} for details on post-processing).

One possible complication of the shuffle model is that, without further restrictions, the privacy guarantees of shuffle private protocols are not necessarily robust to malicious users. For example, it is possible for a shuffle private protocol to place all responsibility for ``noisy messages'' on a single user. In that case, compromising that single user would destroy the privacy guarantee for \emph{all} users of the protocol. To avoid this weakness, we focus on protocols that satisfy \emph{robust} shuffle privacy. These protocols still guarantee privacy for honest users in the presence of (a limited number of) malicious users.

\subsection{Our Contributions}
We give several results connecting robust shuffle privacy and pan-privacy. Details and comparisons appear in Figure~\ref{fig:results}. 

\begin{enumerate}
	\item We construct a protocol for counting distinct elements that satisfies robust approximate shuffle privacy and obtains additive error $O(\sqrt{k}/\eps)$ (Theorem~\ref{thm:de_upper}). We then strengthen and adapt a lower bound from pan-privacy to show that $\Omega(\sqrt{k/\eps})$ additive error is necessary for robust approximate shuffle privacy when the number of users $n \geq k$ (Theorem~\ref{thm:de-lower-bound}). In contrast, adding $\Lap{1/\eps}$ noise to the true distinct count guarantees $O(1/\eps)$ error in the central model.
	\item We construct a protocol for uniformity testing that satisfies robust approximate shuffle privacy\footnote{An early version of this paper incorrectly claimed a tester that satisfies robust \emph{pure} shuffle privacy.} with sample complexity dependence on $k$ of $\tilde O(k^{2/3})$ (Theorem~\ref{thm:ut-app-upper-bound}). We again adapt a lower bound from pan-privacy to show that any protocol satisfying robust \emph{pure} shuffle privacy requires $\Omega(k^{2/3})$ samples (Theorem~\ref{thm:ut-pure-lower-bound}). These two bounds are not directly comparable --- the lower bound requires conditions that the upper bound does not meet --- but they offer partial evidence for the general connection between the two models.
	\item We show how to adapt recent work on shuffle private histograms~\cite{BC20} for pan-privacy (Theorem~\ref{theorem:hist-pan}). As a corollary, pan-privacy inherits the same separations from interactive local privacy as the shuffle model for ``support identification'' problems.
\end{enumerate}

\begin{figure*}[h!]
\renewcommand*{\arraystretch}{1.8}
\makebox[\textwidth][c]{
\begin{tabular}{|c|c|c|c|}
 \hline
 \multirow{2}{*}{Privacy Type} & Histograms & $\alpha$-Uniformity Testing & Distinct Elements \\
& ($\ell_\infty$ error) & (sample complexity) & (additive error) \\ \hline \hline
Central  & $\Theta\big(\frac{1}{\eps}\log\big(\min\big(\frac{1}{\delta}, k\big)\big)\big)$ ~\cite{BNS16, HT10, BBKN10} & $\Theta\Big(\frac{\sqrt{k}}{\alpha^2} + \frac{\sqrt{k}}{\alpha\sqrt{\eps}} + \frac{k^{1/3}}{\alpha^{4/3}\eps^{2/3}} + \frac{1}{\alpha\eps}\Big)$ \cite{ASZ18} & $\Theta\big(\frac{1}{\eps}\big)$~\cite{DMNS06} * \vphantom{\bigg(}\\
\hline
\multirow{2}{*}{Pan} & \multirow{2}{*}{$O\big(\frac{1}{\eps^2}\log\big(\frac{1}{\delta}\big)\big)$}  & $O\Big(\frac{k^{2/3}}{\alpha^{4/3} \eps^{2/3}} + \frac{\sqrt{k}}{\alpha^2} + \frac{\sqrt{k}}{\alpha \eps}\Big)$ \cite{AJM19} * & $O\!\left(\frac{\sqrt{k}}{\eps}\right)$ \cite{DNPRY10} * \\
 & & $\Omega\Big(\frac{k^{2/3}}{\alpha^{4/3} \eps^{2/3}} + \frac{\sqrt{k}}{\alpha^2}  + \frac{1}{\alpha\eps}\Big)$ \cite{AJM19} * & $\Omega\Big(\tfrac{\sqrt{k}}{\sqrt{\eps}}\Big)$ \vphantom{\bigg(} \\
\hline
\multirow{2}{*}{Robust Shuffle} & \multirow{2}{*}{$O\big(\frac{1}{\eps^2}\log\big(\frac{1}{\delta}\big)\big)$ \cite{BC20}}  & $O\Big(\Big[\frac{k^{2/3}}{\alpha^{4/3}\eps^{2/3}} + \frac{\sqrt{k}}{\alpha^2} + \frac{\sqrt{k}}{\alpha \eps}\Big]\sqrt{\log\big(\tfrac{k}{\delta}\big)}\Big)$ & $O\Big(\frac{\sqrt{k}}{\eps}\Big)$ \\
 & & $\Omega\Big(\frac{k^{2/3}}{\alpha^{4/3} \eps^{2/3}} + \frac{\sqrt{k}}{\alpha^2} +  \frac{1}{\alpha\eps}\Big)$ * & $\Omega\Big(\tfrac{\sqrt{k}}{\sqrt{\eps}}\Big)$ \vphantom{\bigg(} \\
\hline
\end{tabular}}
\caption{Overview of our main results given a data domain of size $k$, $\eps=O(1)$ and $\delta<1/n<\eps$. All bounds hold with constant probability. Results marked by * hold when $\delta=0$. Uncited results are new in this work.}
\label{fig:results}
\end{figure*}

\subsection{Related Work}
Dwork, Naor, Pitassi, Rothblum, and Yekhanin~\cite{DNPRY10} introduced pan-privacy and, for a data domain of size $k$, showed how to pan-privately count distinct elements to additive accuracy $O(\sqrt{k}/\eps)$. Mir, Muthukrishnan, Nikolov, and Wright~\cite{MMNW11} constructed a low-memory analogue with the same accuracy guarantee and a matching (for $n \geq k$) lower bound based on linear program decoding. Both works employ \emph{user-level} pan-privacy, where one user may contribute many elements to the stream, and a neighboring stream may replace all of a user's contributions. We instead imitate Amin, Joseph, and Mao~\cite{AJM19} and study \emph{record-level} pan-privacy. Here, neighboring streams differ in at most one stream element. For uniformity testing,~\citet{AJM19} gave tight (in the domain size $k$) $\Theta(k^{2/3})$ sample complexity bounds for pure pan-privacy. Acharya, Sun, and Zhang~\cite{ASZ18} showed that $\Theta(\sqrt{k})$ is the optimal dependence under approximate central privacy, and $\Theta(k)$ is optimal for the strictly more private model of sequentially interactive local privacy~\cite{ACFT19, AJM19} (see Section~\ref{sec:prelims} for a definition of local privacy).

Building on the empirical work of Bittau, Erlingsson, Maniatis, Mironov, Raghunathan, Lie, Rudominer, Kode, Tinnes, and Seefeld~\cite{BEMMR+17}, Cheu, Smith, Ullman, Zeber, and Zhilyaev~\cite{CSUZZ19} and Erlingsson, Feldman, Mironov, Raghunathan, Talwar, and Thakurta~\cite{EFMRTT19} independently and simultaneously introduced different formal definitions of shuffle privacy.~\citet{CSUZZ19} defined a model where a single noninteractive batch of messages is shuffled, while~\citet{EFMRTT19} considered shuffled \emph{users} who may participate in an interactive protocol. We follow most of the shuffle privacy literature and use the former variant. We also focus on the multi-message model where each user may send multiple messages to the shuffler. For the problem of shuffle private bit summation, a line of papers~\cite{CSUZZ19, BBGN19, BBGN20, GGKMPV20} has obtained $O(1/\eps)$ accuracy tight with that of the central model. For the problem of computing histograms, the best known shuffle private $\ell_\infty$ guarantee differs by a $1/\eps$ factor from the centrally private guarantee~\cite{GGKPV19, BC20}. 

We briefly discuss past work touching on robust shuffle privacy.~\citet{CSUZZ19} showed that any single-message $(\eps,\delta)$-shuffle private protocol on $n$ users also grants $(\eps + \ln(n), \delta)$-\emph{local privacy}. This is a form of robustness, as it guarantees privacy even when all other users are malicious. However, it only holds for single-message protocols, which are strictly weaker than multi-message protocols. Along similar lines, Balcer and Cheu~\cite{BC20} showed that any single-message $\eps$-shuffle private protocol is also $\eps$-locally private. Balle, Bell, Gasc\'on and Nissim~\cite{BBGN20} discussed the effect of malicious users on the accuracy guarantee of a shuffle protocol. In contrast, our notion of robust shuffle privacy guarantees privacy even with (a controlled fraction of) malicious users in multi-message protocols.

\subsection{Organization}
Basic definitions appear in the Preliminaries (Section~\ref{sec:prelims}). Further specific preliminaries appear in their corresponding sections. We start with distinct elements (Section~\ref{sec:distinct}), then cover uniformity testing (Section~\ref{sec:uniform}) and histograms (Section~\ref{sec:histogram}). We conclude with a brief discussion and some further questions in Section~\ref{sec:conc}.

\section{Preliminaries}
\label{sec:prelims}
Throughout this work, we use $[k] := \{1, 2, \ldots, k\}$ and $\mathbb{N} := \{1, 2, \ldots \}$.

\subsection{Differential Privacy}
\label{sec:prelims_dp}
We define a dataset $\vec{x} \in \cX^n$ to be an ordered tuple of $n$ rows where each row is drawn from a data universe $\cX$ and corresponds to the data of one user. Two datasets $\vec{x},\vec{x}\,' \in \cX^n$ are considered \emph{neighbors} (denoted as $\vec{x} \sim \vec{x}\,'$) if they differ in at most one row.

\begin{definition}[Differential Privacy \cite{DMNS06}]
An algorithm $\cM: \cX^n \rightarrow \cZ$ satisfies \emph{$(\eps, \delta)$-differential privacy} if, for every pair of neighboring datasets $\vec{x}$ and $\vec{x}\,'$ and every subset $T \subset \cZ$,
	$$\P{}{\cM(\vec{x}\vphantom{'}) \in T} \le e^\eps \cdot \P{}{\cM(\vec{x}\,') \in T} + \delta.$$
When $\delta > 0$, we say $\cM$ satisfies \emph{approximate differential privacy}. When $\delta = 0$, we say $\cM$ satisfies \emph{pure differential privacy} and we omit the $\delta$ parameter.
\end{definition}

Because this definition assumes that the algorithm $\cM$ has ``central'' access to compute on the entire raw dataset, we sometimes call this \emph{central} differential privacy. Two common facts about differentially privacy will be useful; proofs appear in Chapter 2 of the survey by Dwork and Roth~\cite{DR14}. First, privacy is preserved under post-processing. 

\begin{fact}
\label{fact:post}
	For $(\eps,\delta)$-differentially private algorithm $\A : \cX^n \to \cZ$ and arbitrary random function $f : \cZ \to \cZ'$, $f \circ \A$ is $(\eps,\delta)$-differentially private.
\end{fact}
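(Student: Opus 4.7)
The plan is to proceed in two stages: first handle the case when $f$ is deterministic, then extend to the general case by conditioning on the internal randomness of $f$.

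For the deterministic case, fix neighboring datasets $\vec{x} \sim \vec{x}\,'$ and an arbitrary measurable subset $T' \subseteq \cZ'$. The key observation is that $\{f(\A(\vec{x})) \in T'\} = \{\A(\vec{x}) \in f^{-1}(T')\}$, so taking $T := f^{-1}(T') \subseteq \cZ$ reduces the event to one about $\A$'s output. Applying the differential privacy guarantee of $\A$ with this $T$ gives
\[
\P{}{f(\A(\vec{x})) \in T'} = \P{}{\A(\vec{x}) \in T} \le e^\eps \P{}{\A(\vec{x}\,') \in T} + \delta = e^\eps \P{}{f(\A(\vec{x}\,')) \in T'} + \delta,
\]
which is exactly what we need.

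For the general case, I would represent the random function $f$ as a deterministic function $\tilde f : \cZ \times \cR \to \cZ'$ together with a random seed $R$ drawn from some distribution on $\cR$ independent of $\A$'s internal randomness, so that $f(z)$ is distributed as $\tilde f(z, R)$. Condition on $R = r$: then $\tilde f(\cdot, r)$ is a deterministic post-processor, and the previous paragraph gives
\[
\P{}{\tilde f(\A(\vec{x}), r) \in T'} \le e^\eps \P{}{\tilde f(\A(\vec{x}\,'), r) \in T'} + \delta.
\]
Taking the expectation over $R$ on both sides (linearity of expectation) yields $\P{}{f(\A(\vec{x})) \in T'} \le e^\eps \P{}{f(\A(\vec{x}\,')) \in T'} + \delta$, completing the proof.

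There is no real obstacle here; the statement is the standard post-processing fact for differential privacy. The only minor subtlety is ensuring that the randomness of $f$ is independent of the randomness of $\A$ so that conditioning on $R$ preserves the distribution of $\A(\vec{x})$, which is implicit in the definition of composing two randomized algorithms. Measurability of $f^{-1}(T')$ is also standard given the usual conventions on what constitutes a ``random function.''
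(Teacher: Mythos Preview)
Your proof is correct. The paper does not actually give its own proof of this fact but instead defers to Chapter~2 of the Dwork--Roth survey~\cite{DR14}; the argument there is exactly the two-step approach you wrote (deterministic preimage reduction, then average over the post-processor's randomness), so your proposal matches the cited proof essentially verbatim.
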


This means that any computation based solely on the output of a differentially private function does not affect the privacy guarantee. Second, privacy composes neatly.

\begin{fact}
\label{fact:comp}
	For $(\eps_1, \delta_1)$-differentially private $\A_1$ and $(\eps_2, \delta_2)$-differentially private $\A_2$, $\A_3$ defined by $\A_3(D) = (\A_1(D), \A_2(D))$ is $(\eps_1 + \eps_2, \delta_1 + \delta_2)$-differentially private.
\end{fact}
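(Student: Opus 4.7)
The plan is to fix neighbors $D \sim D'$ and an arbitrary event $T$ in the joint output space of $\A_1$ and $\A_2$. Since $\A_3$ runs its two subroutines with independent random coins, the slice identity
$$\P{}{\A_3(D) \in T} = \ex{z \sim \A_1(D)}{\P{}{\A_2(D) \in T_z}},$$
with $T_z := \{z' : (z,z') \in T\}$, reduces the composition to one application of $\A_2$'s DP followed by one application of $\A_1$'s DP. First, I would apply $(\eps_2,\delta_2)$-DP to the inner probability for each fixed $z$, obtaining the upper bound $e^{\eps_2}\P{}{\A_2(D') \in T_z} + \delta_2$. Then I would view the remaining outer expectation as $\mathbb{E}[f(\A_1(D))]$ for the $[0,1]$-valued test function $f(z) := \P{}{\A_2(D') \in T_z}$ and apply $(\eps_1,\delta_1)$-DP of $\A_1$. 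Because DP is phrased for events rather than expectations, this step uses the layer-cake identity $\mathbb{E}[f(\A_1(D))] = \int_0^1 \P{}{\A_1(D) \in E_t}\,dt$ with $E_t := \{z : f(z) \ge t\}$, bounding each $\P{}{\A_1(D) \in E_t}$ by $e^{\eps_1}\P{}{\A_1(D') \in E_t} + \delta_1$ and integrating.

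The main (minor) obstacle is the additive $\delta$ bookkeeping. The layer-cake approach by itself only produces $\P{}{\A_3(D) \in T} \le e^{\eps_1+\eps_2}\P{}{\A_3(D') \in T} + e^{\eps_2}\delta_1 + \delta_2$, which is slightly weaker than the target $\delta_1 + \delta_2$. To sharpen it, I would use the standard ``bad set'' decomposition: partition the output space of $\A_1$ into $B := \{z : \P{}{\A_1(D) = z} > e^{\eps_1}\P{}{\A_1(D') = z}\}$ and its complement; $\A_1$'s DP directly controls the excess mass of $B$ under $\A_1(D)$ (contributing at most $\delta_1$ overall), while the pointwise density-ratio bound on the complement produces a clean $e^{\eps_1+\eps_2}$ factor with no spurious $\delta$ blow-up. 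Since Fact~\ref{fact:comp} is a standard textbook result, I would defer to Chapter 3 of~\cite{DR14} for the detailed calculation rather than grind through it.
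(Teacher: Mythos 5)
The paper itself offers no proof of Fact~\ref{fact:comp}; it defers to~\cite{DR14}, so there is no in-paper argument to compare against. Evaluated on its own terms, your sketch correctly diagnoses that the naive slice-then-layer-cake route yields the weaker guarantee $e^{\eps_2}\delta_1 + \delta_2$. The issue is that the proposed bad-set fix, as you describe it, does not actually close this gap. Partition by $B := \{z : \P{}{\A_1(D)=z} > e^{\eps_1}\P{}{\A_1(D')=z}\}$: the excess mass on $B$ contributes at most $\delta_1$, and on $B^c$ the pointwise ratio is at most $e^{\eps_1}$. But you must still bound $\P{}{\A_2(D)\in T_z}$ by $e^{\eps_2}\P{}{\A_2(D')\in T_z}+\delta_2$ under the envelope $e^{\eps_1}\sum_{z\notin B}\P{}{\A_1(D')=z}(\cdots)$, so the $\delta_2$ term picks up an $e^{\eps_1}$ factor. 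The blow-up has not vanished; it has only shifted from $e^{\eps_2}\delta_1$ to $e^{\eps_1}\delta_2$.

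The missing ingredient is to cap the inner probability at $1$ \emph{before} invoking $\A_1$'s guarantee. Set $g(z):=\min\left(1,\, e^{\eps_2}\P{}{\A_2(D')\in T_z}\right)$. Using $\A_2$'s privacy together with the trivial bound by $1$ and the inequality $\min(1,a+b)\le\min(1,a)+b$ for $a,b\ge 0$, we get $\P{}{\A_2(D)\in T_z}\le g(z)+\delta_2$, hence $\P{}{\A_3(D)\in T}\le\ex{z\sim\A_1(D)}{g(z)}+\delta_2$ with the $\delta_2$ already outside any exponential factor. Since $g$ is $[0,1]$-valued, your layer-cake argument (or equally well your bad-set argument applied to $g$) gives $\ex{z\sim\A_1(D)}{g(z)}\le e^{\eps_1}\ex{z\sim\A_1(D')}{g(z)}+\delta_1\le e^{\eps_1+\eps_2}\P{}{\A_3(D')\in T}+\delta_1$, which yields exactly $\delta_1+\delta_2$. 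Your instinct toward an excess-mass decomposition is in the right neighborhood --- it is essentially how one proves the approximate-max-divergence characterization underlying the treatment in~\cite{DR14} --- but the claim that the density-ratio bound on $B^c$ alone produces ``no spurious $\delta$ blow-up'' does not hold without this capping step.
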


One useful centrally private algorithm is the \emph{binomial mechanism}. We state the version given by Ghazi, Golowich, Kumar, Pagh, and Velingker~\cite{GGKPV19}.
\begin{lemma}[Binomial Mechanism \cite{DKMMN06,GGKPV19}]
\label{lem:b_noise}
	Let $f \colon \cX^n \to \Z$ be a 1-sensitive function, i.e. $|f(\vec{x}) - f(\vec{x}\,')| \leq 1$ for all neighboring datasets $\vec{x}, \vec{x}\,' \in \cX^n$. Fix any $\ell\in\N$, $p\in(0,1)$,  $\eps > 0$, and $\delta \in (0,1)$ such that $$\ell \cdot \min(p,1-p) \geq 10\cdot \left( \tfrac{e^\eps+1}{e^\eps-1}\right)^2 \cdot  \ln\left(\tfrac{2}{\delta} \right).$$ The algorithm that samples $\eta \sim \Bin(\ell, p)$ and outputs $f(\vec{x}) + \eta$ is $(\eps,\delta)$-differentially private. The error is $O(\frac{1}{\eps}\sqrt{\log \frac{1}{\delta}})$.
\end{lemma}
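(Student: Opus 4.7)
}

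The plan is to reduce the privacy claim to showing that the two shifted distributions $\Bin(\ell,p)$ and $\Bin(\ell,p)+1$ are $(\eps,\delta)$-indistinguishable. First, for any neighboring $\vec{x} \sim \vec{x}\,'$, 1-sensitivity gives $|f(\vec{x})-f(\vec{x}\,')|\in\{0,1\}$. The case of equal values is trivial, so by symmetry it suffices to bound, for every measurable $T\subseteq\Z$,
\[
	\Pr[\Bin(\ell,p) \in T] \;\leq\; e^\eps \Pr[\Bin(\ell,p)+1 \in T] + \delta
\]
and the analogous inequality with the roles swapped.

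The main step is to analyze the pointwise likelihood ratio. Writing $X\sim\Bin(\ell,p)$ and using the standard binomial PMF,
\[
	r(k) \;:=\; \frac{\Pr[X=k]}{\Pr[X=k-1]} \;=\; \frac{(\ell-k+1)\,p}{k\,(1-p)}.
\]
I would then identify the ``safe'' set $G \subseteq \{0,1,\dots,\ell\}$ on which $r(k)\in[e^{-\eps},e^\eps]$. A direct manipulation shows $r(k)\le e^\eps$ iff $k\ge (\ell+1)p/(p+(1-p)e^\eps)$, and symmetrically for the lower bound; centering around the mean $\ell p$ one gets $G = \{k : |k-\ell p|\le \tau\}$ for some threshold $\tau$ whose exact form is controlled by the factor $\frac{e^\eps-1}{e^\eps+1}$. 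A short calculation gives $\tau \;\gtrsim\; \frac{e^\eps-1}{e^\eps+1}\cdot \ell\min(p,1-p)$, which is the origin of the $\left(\frac{e^\eps+1}{e^\eps-1}\right)^2$ factor appearing in the hypothesis.

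Next, I would invoke a Chernoff/Hoeffding bound on $X \sim \Bin(\ell,p)$ to show $\Pr[X \notin G] \leq \delta$. Concretely, the multiplicative Chernoff bound yields $\Pr[|X-\ell p|>\tau]\le 2\exp\!\big(-\tau^2/(3\ell\min(p,1-p))\big)$, and the hypothesis $\ell\min(p,1-p)\ge 10\,\big(\tfrac{e^\eps+1}{e^\eps-1}\big)^2\ln(2/\delta)$ was chosen precisely so that this probability is at most $\delta$. Given this, for any $T$,
\[
	\Pr[X\in T] \;\le\; \Pr[X \in T\cap G] + \delta \;\le\; e^\eps\Pr[X+1 \in T\cap G] + \delta \;\le\; e^\eps \Pr[X+1\in T] + \delta,
\]
where the middle inequality uses that, on $G$, the ratio of PMFs of $X$ and $X+1$ is bounded by $e^\eps$ (this is an index shift of the bound on $r(k)$). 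The symmetric inequality is proved identically. Post-processing (Fact~\ref{fact:post}) then extends indistinguishability of the noise to indistinguishability of $f(\vec{x})+\eta$ versus $f(\vec{x}\,')+\eta$.

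Finally, the accuracy bound is immediate: $\eta-\ell p$ has standard deviation $\sqrt{\ell p(1-p)}$, and taking $\ell p(1-p) = \Theta\!\big(\log(1/\delta)/\eps^2\big)$ (the minimum permitted by the hypothesis) together with a Chebyshev or sub-Gaussian tail bound yields constant-probability error $O\!\big(\tfrac{1}{\eps}\sqrt{\log(1/\delta)}\big)$. I expect the main obstacle to be bookkeeping in the likelihood-ratio step: matching the approximation $\log r(k)\approx -(k-\ell p)/(\ell p(1-p))$ to an exact inequality using $\tfrac{e^\eps-1}{e^\eps+1}$ rather than $\eps$, so that the constants line up cleanly with the stated hypothesis.
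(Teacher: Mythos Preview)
The paper does not supply its own proof of this lemma: it is stated as a known result, with citations to \cite{DKMMN06,GGKPV19}, and later (in the appendix on robustness of the \citet{BC20} protocol) the paper explicitly remarks that Lemma~\ref{lem:b_noise} is a special case of a more general lemma from Appendix~C of \cite{GGKPV19}. So there is no ``paper proof'' to compare against.

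That said, your proposal is the standard argument and matches how the cited references establish the result. Reducing to indistinguishability of $\Bin(\ell,p)$ and $\Bin(\ell,p)+1$, computing the pointwise ratio $r(k) = \tfrac{(\ell-k+1)p}{k(1-p)}$, carving out a ``good'' set $G$ around the mean on which $r(k)\in[e^{-\eps},e^{\eps}]$, and bounding $\Pr[X\notin G]\le\delta$ via a Chernoff bound is exactly the template used in \cite{GGKPV19}. Your identification of $\tau \gtrsim \tfrac{e^\eps-1}{e^\eps+1}\cdot \ell\min(p,1-p)$ as the width of $G$ is the right computation and explains the $\big(\tfrac{e^\eps+1}{e^\eps-1}\big)^2$ in the hypothesis. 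The accuracy remark is likewise correct, with the caveat (which you already flag) that the stated $O\big(\tfrac{1}{\eps}\sqrt{\log(1/\delta)}\big)$ error presumes $\ell\min(p,1-p)$ is taken at (a constant multiple of) the lower bound in the hypothesis; for larger $\ell$ the noise, and hence the error, grows.
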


\subsection{Pan-privacy}
\label{sec:prelims_pan}
\emph{Pan-privacy} is defined for a different setting. Unlike centrally private algorithms, pan-private algorithms are online: they receive raw data one element at a time in a stream. At each step in the stream, the algorithm receives a data point, update its internal state based on this data point, and then proceeds to the next element. The only way the algorithm ``remembers'' past elements is through its internal state. As in the case of datasets, we say that two streams $\vec{x}$ and $\vec{x}\,'$ are \emph{neighbors} if they differ in at most one element. Pan-privacy requires the algorithm's internal state and output to be differentially private with regard to neighboring streams.

\begin{definition}[Online Algorithm]
An \emph{online algorithm} $\cQ$ is defined by an internal algorithm $\cQ_{\In}$ and an output algorithm $\cQ_{\Ou}$. $\cQ$ processes a stream of elements through repeated application of $\cQ_{\In} \colon \cX \times \In \to \In$, which (with randomness) maps a stream element and internal state to an internal state. At the end of the stream, $\cQ$ publishes a final output by executing $\cQ_{\Ou}$ on its final internal state. 
\end{definition}

\begin{definition}[Pan-privacy \cite{DNPRY10, AJM19}]
\label{def:pan}
	Given an online algorithm $\cQ$, let $\cQ_{\In}(\vec{x})$ denote its internal state after processing stream $\vec{x}$, and let $\vec{x}_{\leq t}$ be the first $t$ elements of $\vec{x}$. We say $\cQ$ is \emph{$(\eps,\delta)$-pan-private} if, for every pair of neighboring streams $\vec{x}$ and $\vec{x}\,'$, every time $t$ and every set of internal state, output state pairs $T \subset \In \times \Ou$,
    \begin{equation}
    \label{eq:pan}
    \P{\cQ}{\big(\cQ_{\In}(\vec{x}_{\leq t}), \cQ_{\Ou}(\cQ_{\In}(\vec{x}))\big) \in T} \leq e^\eps\cdot\P{\cQ}{\big(\cQ_{\In}(\vec{x}\,'_{\!\leq t}), \cQ_{\Ou}(\cQ_{\In}(\vec{x}\,'))\big) \in T} + \delta.
    \end{equation}
    When $\delta = 0$, we say $\cQ$ is \emph{$\eps$-pan-private}.
\end{definition}

Taken together, these requirements protect against an adversary that sees any one internal state of $\cQ$ as well as its final output\footnote{As shown by~\citet{AJM19}, the precise assumption of one internal state intrusion is necessary. Pan-privacy against multiple internal state intrusions collapses to the much stronger notion of local privacy.}. By the output requirement, any pan-private algorithm also satisfies central differential privacy. The key additional contribution of pan-privacy is the maintenance of the differentially private internal state. This strengthens the central privacy guarantee by protecting data contributors against future events. For example, a user may trust the current algorithm operator but wish to protect themselves against the possibility that the operator will be acquired or subpoenaed in the future. Under pan-privacy, post-processing (Fact~\ref{fact:post}) ensures that future views of the pan-private algorithm's state will be differentially private with respect to past data. 

Our definition of pan-privacy is the specific variant given by~\citet{AJM19}. This version guarantees record-level (uncertainty about the presence of any single stream element) rather than user-level (uncertainty about the presence of any one data universe element) privacy. We use this variant because, like the shuffle model, we assume each data contributor has a single data point.

\subsection{Shuffle Privacy}
\label{sec:prelims_shuffle}
The shuffle model of differential privacy views the dataset as a distributed object where each of $n$ users holds a single row. Each user provides their data point as input to a randomizing function and securely submits the resulting randomized outputs to a shuffler. The shuffler permutes the users' outputs and releases the shuffled messages. It is this collection of messages that needs to satisfy differential privacy: altering one user's data point must not greatly change the distribution of the shuffled messages. 

In this way, the shuffle model strengthens the privacy guarantee that users receive. Here, users need only trust that (1) there is a secure way to shuffle the randomized messages\footnote{A more detailed description of a shuffler and its implementation details appears in the work of~\citet{BEMMR+17}.} and (2) sufficiently many users follow the protocol. In contrast, central differential privacy requires users to trust a third party algorithm operator to securely store and compute on the raw data. 

\begin{definition}[Shuffle Model \cite{BEMMR+17, CSUZZ19}]
A protocol $\cP$ in the \emph{shuffle model} consists of three randomized algorithms:
\begin{itemize}
\item
    A \emph{randomizer} $\cR: \cX \rightarrow \cY^*$ mapping data to (possibly variable-length) vectors. The length of the vector is the number of messages sent. If, on all inputs, the probability of sending a single message is 1, then the protocol is said to be \emph{single-message}. Otherwise, the protocol is \emph{multi-message}.
\item
    A \emph{shuffler} $\cS: \cY^* \rightarrow \cY^*$ that concatenates message vectors and then applies a uniformly random permutation to the messages. 
\item
    An \emph{analyzer} $\cA: \cY^* \rightarrow \cZ$ that computes on a permutation of messages.
\end{itemize}
As $\cS$ is the same in every protocol, we identify each shuffle protocol by $\cP = (\cR, \cA)$. We define its execution on input $\vec{x}\in\cX^n$ as
$$
\cP(\vec{x}) := \cA(\cS(R(x_1), \ldots, R(x_n))).
$$
We assume that $\cR$ and $\cA$ have access to $n$ and an arbitrary amount of public randomness.
\end{definition}

With this setup, we use the following definition of shuffle differential privacy.
\begin{definition} [Shuffle Differential Privacy \cite{CSUZZ19}]
\label{def:shuffle_dp}
	A protocol $\cP = (\cR, \cA)$ is \emph{$(\eps, \delta)$-shuffle differentially private} if, for all $n\in \N$, the algorithm $(\cS \circ \cR^n) := \cS(\cR(x_1), \ldots, \cR(x_n))$ is $(\eps, \delta)$-differentially private. The privacy guarantee is over the internal randomness of the users' randomizers and not the public randomness of the shuffle protocol.
\end{definition}

For brevity, we typically call these protocols ``shuffle private.'' We now sketch an example of a shuffle private protocol. Consider the setting where each user $i$'s data point is two bits $(x_{i,1},x_{i,2})$ and the goal is to sum both across users, $\sum_{i=1}^n x_{i,1}$ and $\sum_{i=1}^n x_{i,2}$. A simple randomizer independently flips each bit with probability $p$ and outputs the resulting two randomized bits. When there are $n$ users, the shuffler takes in $2n$ messages and outputs one of the $(2n)!$ permutations of the messages uniformly at random. The analyzer can then recover unbiased estimates by rescaling according to the randomization probability $p$ and the number of users $n$. Lemma \ref{lem:b_noise} implies that setting $p \approx \tfrac{\log(1/\delta)}{\eps^2n}$ guarantees $(\eps,\delta)$-shuffle privacy, so the analyzer may recover estimates with error $\approx \tfrac{\sqrt{\log(1/\delta)}}{\eps}$.

Note, however, that Definition~\ref{def:shuffle_dp} assumes all users follow the protocol. It does not account for malicious users that aim to make the protocol less private. A simple attack is for such users to drop out: for $\gamma \leq 1$, let $\cS\circ\cR^{\gamma n}$ denote the case where only $\gamma n$ out of $n$ users execute $\cR$. Because the behavior of the randomizer may depend on $n$, $\cS\circ\cR^n$ may satisfy a particular level of differential privacy but $\cS\circ\cR^{\gamma n}$ may not\footnote{Note that, with respect to differential privacy, dropping out is ``the worst'' malicious users can do. This is because adding messages from malicious users to those from honest users is a post-processing of $\cS\circ \cR^{\gamma n}$. If $\cS\circ \cR^{\gamma n}$ is already differentially private for the outputs of the $\gamma n$ users alone, then differential privacy's resilience to post-processing (Fact~\ref{fact:post}) ensures that adding other messages does not affect this guarantee. Hence, it is without loss of generality to focus on drop-out attacks.}. Ideally, the privacy guarantee should not suffer too much from a small number of malicious users. This motivates a \emph{robust} variant of shuffle privacy.

\begin{definition} [Robust Shuffle Differential Privacy]
\label{def:robust_shuffle_dp}
	Fix $\gamma\in(0,1]$. A protocol $\cP=(\cR,\cA)$ is \emph{$(\eps,\delta, \gamma)$-robustly shuffle differentially private} if, for all $n\in\N$ and $\gamma' \geq \gamma$, the algorithm $\cS \circ \cR^{\gamma'  n}$ is $(\eps, \delta)$-differentially private. In other words, $\cP$ guarantees $(\eps, \delta)$-shuffle privacy whenever at least a $\gamma$ fraction of users follow the protocol.
\end{definition}

As with generic shuffle differential privacy, we often shorthand this as ``robust shuffle privacy.'' Note that we define robustness with regard to privacy rather than accuracy. A robustly shuffle private protocol promises its users that their privacy will not suffer much from a limited fraction of malicious users. It does not make any guarantees about the accuracy of the analysis. We state our accuracy guarantees under the assumption that all users follow the protocol. In general, we assume $\gamma \in \{1/n, 2/n, \ldots, 1\}$ to avoid ceilings and floors.

We emphasize that robust shuffle privacy is not implied by the generic shuffle privacy of Definition~\ref{def:shuffle_dp}. This is easy to see if we relax the model to allow users to execute different randomizers. In this case, all responsibility to add noise may rest on one user. For example, if each user has a single-bit datum, all users can report their data without noise while one designated user reports multiple randomized bits. If that designated user drops out then the remaining output clearly fails to be differentially private.
(In particular, an attacker who knows $n-1$ of the users' data learns the $n^{th}$ user's data as well.)

At the same time, many existing shuffle protocols are robustly shuffle private. This stems from a common protocol structure: each user contributes some noisy messages (e.g. Bernoulli bits) so that the union of $n$ of these sets (e.g. a binomial distribution) suffices for a target level of differential privacy. A fraction of malicious users worsens the privacy guarantee, but the effect is limited due to the noise contributions of the remaining honest users. For an example of a formal argument, see work by Ghazi, Pagh, and Velingker~\cite{GPV19}. We also give robust shuffle private adaptations of past work on histograms (Claim~\ref{claim:robust_bc}).
\section{Distinct Elements}
\label{sec:distinct}
We begin with the basic problem of counting distinct elements. Without loss of generality, define the data universe to be $\cX = [k]$. For all $\vec{x} \in [k]^n$, let $D(\vec{x})$ denote the number of distinct elements in $\vec{x}$, i.e. $D(\vec{x}) := |\{j \in [k] \mid \exists i \text{ where } x_i=j\}|$.

\begin{definition}[Distinct Elements Problem]
An algorithm $\cM$ solves the \emph{$(\alpha,\beta)$-distinct elements problem} on input length $n$ if for all $\vec{x}\in [k]^n$, $\P{\cM}{ |\cM(\vec{x})- D(\vec{x})| \leq \alpha} \ge 1-\beta$.
\end{definition}

\subsection{Upper Bound for Robust Shuffle Privacy}\label{sec:distinct-upper}
The main idea of our protocol is to reduce the distinct elements problem to computing the $\mor$ function: $D(\vec{x}) = \sum_{i=1}^k ((x_1 = i) \vee \ldots \vee (x_n = i))$. If we can estimate each of these $k$ $\mor$ functions to relatively good accuracy, then we can appropriately de-bias their sum to estimate $D(\vec{x})$. 

The main problem is now to compute $\mor$ under robust shuffle privacy. We start with a basic (and suboptimal) centrally private solution to $\mor$: output a sample from $\tBer{1/2}$ if $\mor(\vec{x}) = 1$, and output a sample from $\tBer{p= 1/2e^\eps}$ otherwise. By concentration across the $k$ instances of $\mor$, this protocol achieves accuracy roughly $O(\sqrt{k} / \eps)$.

The key property of the above solution is that we can simulate it in the shuffle model using a technique from ~\citet{BBGN19b}. The first step is to equate a sample from $\tBer{p}$ with the sum (mod 2) of $n$ samples from some other distribution $\tBer{p'}$; if every user reports messages drawn from $\tBer{p'}$, the sum (mod 2) is exactly $\tBer{p}$. But if there is any user who reports $\tBer{1/2}$, the sum is drawn from $\tBer{1/2}$. So we can exactly simulate the basic centrally private solution above once we can implement modular arithmetic in the shuffle model. Ishai, Kushilevitz, Ostrovsky, and Sahai \cite{IKOS06} provide a solution using additive shares:
\begin{theorem}[\cite{IKOS06,BBGN19b}]
\label{thm:ikos}
There exists a shuffle protocol $\cP_\mmod=(\cR_\mmod,\cA_\mmod)$ that receives input $\vec{x}\in\zo^n$ along with security parameter $\sigma > 0$ and outputs $\sum_{i=1}^n x_i$ mod 2 if all users are honest. If only $\gamma n \geq 2$ users are honest, let $\vec{h}$ denote the vector of their values. Then $\tv{(\cS\circ\cR^{\gamma n}_\mmod)(\vec{h})}{(\cS\circ\cR^{\gamma n}_\mmod)(\vec{w})} < 2^{-\sigma}$, where $\vec{w} := (\sum h_i \mod 2, 0, \dots, 0)$. Each honest user sends $O(\sigma + \log n)$ one-bit messages.
\end{theorem}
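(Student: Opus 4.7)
The plan is to construct $\cP_\mmod$ via XOR secret sharing and reduce the security claim to a coefficient-norm estimate for products of binomial generating functions.

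\textbf{Construction and correctness.} Set $m = \lceil \log_2(\gamma n) + \sigma + 2 \rceil$. The randomizer $\cR_\mmod$, on input $x \in \{0,1\}$, samples $r_1, \ldots, r_{m-1}$ independently from $\tBer{1/2}$, puts $r_m := x \oplus \bigoplus_{j < m} r_j$, and outputs the $m$ bits. The analyzer $\cA_\mmod$ returns the XOR of every received message. When all users are honest, associativity of XOR gives $\bigoplus_{i,j} r_{i,j} = \bigoplus_i x_i = \sum_i x_i \pmod 2$, so correctness is immediate; each honest user sends $m = O(\sigma + \log n)$ one-bit messages.

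\textbf{Reduction to a count statistic.} Write $n' = \gamma n$ and $N = n'm$. The shuffler's output is a uniformly random permutation of $N$ bits, so its law depends on the input only through the count $K$ of ones; hence the target TV distance equals the TV distance between the laws of $K$ under $\vec h$ and under $\vec w$. Both $\vec h$ and $\vec w$ satisfy $\bigoplus_i v_i = T$, so I introduce the reference distribution $K^\star \sim \Bin(N,1/2)$ conditioned on $K^\star \equiv T \pmod 2$, and aim to show that for every $\vec v$ with $\bigoplus_i v_i = T$, the law of $K$ given $\vec v$ is within TV distance $2^{-\sigma-1}$ of $K^\star$. The theorem then follows by the triangle inequality.

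\textbf{Generating-function comparison.} With $a = (1+z)^m$ and $b = (1-z)^m$, the PGFs are $\mathbb{E}[z^{K\mid \vec v}] = \prod_i [a + (-1)^{v_i} b]/2^N$ and $\mathbb{E}[z^{K^\star}] = [a^{n'} + (-1)^T b^{n'}]/2^N$. Expanding the product, the $k=0$ and $k=n'$ contributions match the reference and cancel, leaving a difference of the form $2^{-N}\sum_{k=1}^{n'-1} \tau_k\, a^{n'-k} b^k$ with $|\tau_k| \leq \binom{n'}{k}$. Since TV equals half the coefficient $\ell_1$ norm of this polynomial, the estimate reduces to bounding $\|a^{n'-k} b^k\|_1$. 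I plan to use the factorization $(1+z)^p(1-z)^q = (1-z^2)^{\min(p,q)}(1 \pm z)^{|p-q|}$ together with submultiplicativity of the coefficient $\ell_1$ norm under polynomial multiplication to deduce $\|a^{n'-k}b^k\|_1 \leq 2^{m \max(k, n'-k)}$, so $\|a^{n'-k}b^k/2^N\|_1 \leq 2^{-m\min(k, n'-k)}$. Summing and using the binomial theorem, the total is $O(n'/2^m)$ once $m \gtrsim \log n'$, and the chosen $m$ pushes this below $2^{-\sigma-1}$.

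\textbf{Main obstacle.} The crux is the $\ell_1$ coefficient bound. A direct Parseval or Cauchy--Schwarz route loses a $\sqrt{N}$ factor and would force $m = \Omega(\sigma + \log N)$, exceeding the target $O(\sigma + \log n)$. Factoring through $(1-z^2)^{\min(p,q)}$ provides the sharper bound needed to keep the message count linear in $\sigma$ and $\log n$.
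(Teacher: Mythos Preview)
The paper does not prove this statement at all: Theorem~\ref{thm:ikos} is quoted as a black-box result from \cite{IKOS06,BBGN19b} and used without argument, so there is no in-paper proof to compare against. Your proposal therefore supplies something the paper simply omits.

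On its merits, your argument is sound. The reduction to the Hamming weight $K$ is correct (the shuffler output is a sufficient statistic for the count of ones), the PGF identities are right, and the key step---factoring $(1+z)^{p}(1-z)^{q}=(1-z^{2})^{\min(p,q)}(1\pm z)^{|p-q|}$ and using submultiplicativity of the coefficient $\ell_1$ norm---does give $\|a^{n'-k}b^{k}\|_{1}\le 2^{m\max(k,n'-k)}$, hence a contribution $\le\binom{n'}{k}2^{-m\min(k,n'-k)}\le (n'2^{-m})^{\min(k,n'-k)}$ to the TV bound. Summing the geometric series and halving yields $O(n'2^{-m})$, which is $\le 2^{-\sigma-1}$ for your choice of $m$, and the triangle inequality through $K^\star$ finishes the bound.

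One small correction: you set $m=\lceil \log_{2}(\gamma n)+\sigma+2\rceil$, but $\gamma$ is not a protocol parameter---the randomizer sees only $x$, $\sigma$, and $n$, and the $\gamma n$ in the statement is adversarially determined after the fact. You should take $m=\lceil \log_{2} n+\sigma+2\rceil$ (or any $m\ge \log_{2} n+\sigma+2$); since $n'\le n$ this only strengthens the bound, and $m$ remains $O(\sigma+\log n)$ as claimed. With that fix the proof goes through.
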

Our distinct elements protocol $\cP_\de = (\cR_\de, \cA_\de)$ will choose the security parameter $\sigma$ and invoke $\cR_\mmod,\cA_\mmod$ as subroutines. The pseudocode appears in Algorithms~\ref{alg:de-upper-randomizer} and~\ref{alg:de-upper-analyzer}.

\begin{algorithm}[h]
\caption{Randomizer $\cR_\de$}
\label{alg:de-upper-randomizer}
\KwIn{user data $x \in [k]$; number of users $n$, privacy~parameters~$\eps > 0$~and~$\delta \in (0,1)$}
\KwOut{message vector $\vec{y} \in ([k] \times \{0,1\})^*$}
\BlankLine
\BlankLine
Initialize output messages $\vec{y} \gets \emptyset$

Set $p' \gets \frac{1 - (1 - e^{-\eps})^{1/n}}{2}$

Set $\sigma \gets \log\left(\frac{e^\eps+1}{\delta}\right)$


\For{domain element $j \in [k]$}{
	\If{$x = j$}{Draw $u^{(j)} \sim \tBer{1/2}$}
	\Else{ Draw $u^{(j)} \sim \tBer{p'}$ }
	
	Obtain messages $\ell \in \{0,1\}^m \gets \cR_\mmod(u^{(j)})$
	
	\For{$t \in [m]$}{
		Append $(j,\ell_t)$ to $\vec{y}$
	}
}

\Return $\vec{y}$
\end{algorithm}

\begin{algorithm}[H]
\caption{Analyzer $\cA_\de$}
\label{alg:de-upper-analyzer}
\KwIn{message vector $\vec{y} \in ([k] \times \zo)^*$; number of users $n \in \N$, privacy~parameters~$\eps > 0$~and~$\delta \in (0,1)$}
\KwOut{$z \in \R$}
\BlankLine
\BlankLine
Set $\sigma \gets \log\left(\frac{e^\eps+1}{\delta}\right)$

\For{domain element $j \in [k]$}{
	Initialize $\vec{y}^{(j)} \gets \emptyset$
	
	\For{$(j,\ell_t) \in \vec{y}$}{
		Append $\ell_t$ to $\vec{y}^{(j)}$
	}
	
	$C_j \gets \cA_\mmod(\vec{y}^{(j)})$ 
}

$C \gets \sum_{j=1}^{k} C_j$

\Return $z \gets \frac{2Ce^\eps - k}{e^\eps-1}$
\end{algorithm}

\begin{theorem}
\label{thm:de_upper}
Given $\eps > 0$, $\gamma \in (0,1]$, and $\beta, \delta \in (0,1)$, the protocol $\cP_\mde = (\cR_\mde, \cA_\mde)$
\begin{enumerate}[label=\Roman*.]
	\item is $\left(2\eps(\gamma), \tfrac{4\delta}{\gamma}, \gamma\right)$-robustly shuffle private, where $\eps(\gamma) \le \eps + \ln\left(\tfrac{1}{\gamma}\right)$ and if $\eps \le \ln(2)$, then $\eps(\gamma) \le 2\tfrac{\eps^\gamma}{\gamma}$;
	\item solves the $(\alpha,\beta)$-distinct elements problem for
	\begin{align*}
	\alpha = \frac{e^\eps}{e^\eps-1} \cdot \sqrt{2k \ln(2/\beta)} = O\left(\max\left(1, \tfrac{1}{\eps}\right)\cdot\sqrt{k\log(1/\beta)}\right);
	\end{align*}
	\item requires each user to communicate at most $O\left(k\log\left(\frac{n(e^\epsilon+1)}{\delta}\right)\right)$ messages of length $O(\log(k))$.
\end{enumerate}
\end{theorem}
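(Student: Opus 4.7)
The plan is to estimate $D(\vec x)$ by running $k$ independent parallel subprotocols, one per domain element $j \in [k]$; each produces a single bit $C_j$ whose bias reveals whether $j$ appears in $\vec x$, and summing and debiasing the $C_j$'s yields $z$. The three conclusions of the theorem are handled separately, with the privacy claim being the main obstacle.

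For accuracy (Part~II), I would first analyze the idealized version in which $\cR_\mmod$ computes XOR exactly. The choice $p' = (1 - (1-e^{-\eps})^{1/n})/2$ is calibrated so that $(1 - 2p')^n = 1 - e^{-\eps}$, which implies that the XOR $C_j$ of the $u^{(j)}_i$'s is $\tBer{1/2}$ whenever some user has $x_i = j$ (since XOR with $\tBer{1/2}$ preserves uniformity) and $\tBer{1/(2e^\eps)}$ otherwise. Linearity gives $\mathbb{E}[C] = D(\vec x)/2 + (k - D(\vec x))/(2e^\eps)$, and a short algebraic rearrangement confirms that $z = (2Ce^\eps - k)/(e^\eps - 1)$ is unbiased. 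Because the per-$j$ randomizations use independent coins, the $C_j$'s are independent Bernoullis, so Hoeffding's inequality gives $|C - \mathbb{E}[C]| \le \sqrt{(k/2)\ln(2/\beta)}$ with probability at least $1-\beta$; rescaling by $2e^\eps/(e^\eps-1)$ yields the stated $\alpha$. The approximation error from $\cR_\mmod$ contributes at most $k \cdot 2^{-\sigma}$ of total-variation distance across all $k$ invocations and is negligible for the chosen $\sigma$. For Part~III, each user invokes $\cR_\mmod$ once per $j \in [k]$, producing $O(\sigma + \log n)$ one-bit messages per invocation by Theorem~\ref{thm:ikos}, each tagged with $j \in [k]$ at cost $O(\log k)$; plugging in $\sigma = \log((e^\eps+1)/\delta)$ yields $O(k \log(n(e^\eps+1)/\delta))$ messages per user of length $O(\log k)$.

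For privacy (Part~I), the crucial observation is that two neighboring streams differing at position $i$ alter only user $i$'s bits $u^{(x_i)}$ and $u^{(x_i')}$; all other $u^{(j)}$'s are unchanged. By post-processing it therefore suffices to bound and compose the privacy loss on the two subprotocols indexed by $j \in \{x_i, x_i'\}$. For a fixed such $j$ in the idealized XOR protocol with $\gamma n$ honest users, one neighbor supplies $\tBer{1/2}$ from user $i$ while the other supplies $\tBer{p'}$; conditioning on whether any \emph{other} honest user has $x = j$, the XOR is either $\tBer{1/2}$ on both sides (no loss) or $\tBer{1/2}$ versus $\tBer{r}$ with $r = (1 - (1-e^{-\eps})^\gamma)/2$, using $(1-2p')^{\gamma n} = (1-e^{-\eps})^\gamma$. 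The resulting per-subprotocol pure DP parameter is $\eps(\gamma) := -\ln(1-(1-e^{-\eps})^\gamma)$, and Bernoulli's inequality $(1-y)^\gamma \le 1 - \gamma y$ applied with $y = e^{-\eps}$ gives $\eps(\gamma) \le \eps + \ln(1/\gamma)$; the tighter bound for $\eps \le \ln 2$ would follow by combining $1 - e^{-\eps} \le \eps$ with standard series bounds on $-\ln(1-t)$. Composing the two changed $j$'s doubles the exponent to $2\eps(\gamma)$, and the $\cR_\mmod$ TV error is absorbed into the $\delta$-slack via the standard ``approximate DP tolerates small TV'' lemma; accounting for how $\gamma < 1$ mildly amplifies that TV error relative to the $n$-user choice of $\sigma$ contributes the $4\delta/\gamma$ in the final statement.

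The principal obstacle is this last argument: cleanly composing the pure-DP analyses of the two relevant subprotocols with the $\cR_\mmod$ TV approximation and correctly tracking how $\gamma$ enters both $\eps(\gamma)$ and the approximation parameter $\sigma$. The other components reduce to standard identities and a single Hoeffding bound.
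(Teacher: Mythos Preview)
Your proposal is correct and follows essentially the same route as the paper: reduce to the two affected coordinates, analyze the idealized XOR via the identity $(1-2p')^{\gamma n}=(1-e^{-\eps})^\gamma$ to obtain $\eps(\gamma)=-\ln(1-(1-e^{-\eps})^\gamma)$, bound it by Bernoulli's inequality (and, for $\eps\le\ln 2$, a sharper elementary estimate), pass to the real protocol via the ``TV-close implies approximate DP'' lemma (Lemma~\ref{lem:bbgn_close}), and compose the two coordinates. One small correction: for Part~II you need not account for any $\cR_\mmod$ approximation error, since Theorem~\ref{thm:ikos} guarantees the output is \emph{exactly} the mod-$2$ sum when all users are honest; the TV bound is only relevant to the privacy analysis.
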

\begin{proof}
\underline{Privacy (I)}:  We will show that $(\cS\circ \cR^{\gamma n}_\de)$ is differentially private. Consider neighboring datasets $\vec{x} \sim \vec{x}\,' \in [k]^{\gamma n}$ where $x_i = a \neq b = x_i'$. The output distributions of $\cR_\de(a)$ and $\cR_\de(b)$ differ only in the messages labeled by $a$ and $b$, $\vec{y}^{(a)}$ and $\vec{y}^{(b)}$. It follows that to prove privacy we need only analyze the distributions of $\vec{y}^{(a)}, \vec{y}^{(b)}$.

To do so, let $\vec{u}^{(a)} = (u^{(a)}_1,\dots, u^{(a)}_n)$ be the binary vector where each $u^{(a)}_i \sim \tBer{1/2}$ when $x_i = a$ and $u^{(a)}_i \sim \tBer{p'}$ otherwise, where we defined $p' = \tfrac{1 - (1 - e^{-\eps})^{1/n}}{2}$ in the pseudocode for $\cR_\de$. We define $\vec{u}^{(a)}\,'$ similarly for $\vec{x}\,'$.
It will suffice to prove that, for any $T \subseteq \zo^*$,
\begin{align*}
\P{}{(\cS\circ \cR^{\gamma n}_{\mmod})(\vec{u}^{(a)}) \in T } &\leq e^{\eps(\gamma)} \cdot \P{}{(\cS\circ \cR^{\gamma n}_{\mmod})(\vec{u}^{(a)}\,' ) \in T} + \frac{2\delta}{\gamma} \\
\P{}{(\cS\circ \cR^{\gamma n}_{\mmod})(\vec{u}^{(a)}\,') \in T } &\leq e^{\eps(\gamma)} \cdot \P{}{(\cS\circ \cR^{\gamma n}_{\mmod})(\vec{u}^{(a)} ) \in T} + \frac{2\delta}{\gamma}.
\end{align*}
This is because the inequalities above guarantee $\left(\eps(\gamma), \tfrac{2\delta}{\gamma}\right)$-privacy for the view of $\vec{y}^{(a)}$. Identical arguments hold for $\vec{u}^{(b)}$ and $\vec{u}^{(b)}\,'$, so our privacy guarantee follows from composition.

These results rely on the following lemma.
\begin{lemma}[Lemma 1.2 \cite{BBGN19b}]
\label{lem:bbgn_close}
Let $\cM, \cM'$ be algorithms such that, for every $\vec{u}$, $\tv{\cM(\vec{u})}{\cM'(\vec{u})} \leq \Delta$. If $\cM$ is $\eps$-differentially private then $\cM'$ is $(\eps,(e^\eps + 1) \Delta)$-differentially private.
\end{lemma}
Our protocol is built on top of the modular arithmetic protocol $\cP_\mmod$ of Theorem \ref{thm:ikos},
so we immediately satisfy the distance condition. To be precise, define $\cM(\vec{u})$ to be the algorithm that takes input $\vec{y}$ and outputs $(\cS\circ \cR^{\gamma n}_{\mmod})((\sum u_t \textrm{ mod } 2, 0, \dots, 0) )$. We are guaranteed that $(\cS\circ \cR^{\gamma n}_{\mmod})(\vec{u})$ is within total variation distance $2^{-\sigma}$ of $\cM(\vec{u})$.

Our new goal is to prove that $\cM$ is $\eps'$-differentially private where
$$\eps' := \ln\left(\frac{1}{1 - (1-e^{-\eps})^{\gamma}}\right).$$
Once we do so, we can use Lemma~\ref{lem:bbgn_close} to conclude that $(\cS\circ \cR^{\gamma n}_{\de,*})$ is $(\eps', \delta' )$-differentially private, where $\delta' := (e^{\eps'} + 1) \cdot 2^{-\sigma} = \tfrac{e^{\eps'} + 1}{e^\eps + 1} \cdot \delta$. Hence, we need to show that the following holds for both $z=1$ and $z=0$:
\begin{equation}
\label{eq:zs_ineq}
e^{-\eps'} \leq  \frac{\P{}{\sum u^{(a)}_i \textrm{ mod } 2 = z}}{\P{}{\sum u^{(a)}_i\,' \textrm{ mod } 2 = z}} \leq e^{\eps'}
\end{equation}

$x_i = a$, so $\sum_{t=1}^{\gamma n} u^{(a)}_t \textrm{ mod } 2$ is distributed as $\tBer{1/2}$. If $a \in \vec{x}\,'$ then $\sum_{i=1}^{\gamma n} u^{(a)}_i\,' \textrm{ mod } 2$ is distributed as $\tBer{1/2}$ as well, so the ratio of probabilities is exactly 1. Otherwise, $\sum_{i=1}^{\gamma n} u^{(a)}_i\,' \textrm{ mod } 2$ is a sum of i.i.d.\ draws from $\tBer{p'}$. We argue that this is distributed as $\tBer{1/2e^{\eps'}}$ for $\eps' = \ln\left(\tfrac{1}{1 - (1 - e^{-\eps})^{\gamma}}\right)$ using the following lemma, proven in Appendix \ref{subsec:distinct-lemmas}:

\begin{lemma}
\label{lem:distinct-ber}
	Let $n \in \N$, $\gamma \in (0,1]$, and $p \in [0, 1/2]$. Define $p' = \tfrac{1-(1-2p)^{1/n}}{2}$. Then given i.i.d.\ $X_1, \ldots, X_{\gamma n} \sim \tBer{p'}$, $X = \sum_{i=1}^{\gamma n} X_i \textrm{ mod } 2$ is identically distributed with \[\Ber{\frac{1-(1-2p)^\gamma}{2}}.\]
\end{lemma}

Using Lemma~\ref{lem:distinct-ber} with our $p'$ gives $p = \tfrac{1}{2e^\eps}$, so
$$\sum_{i=1}^{\gamma n} u^{(a)}_i\,' \textrm{ mod }  2 = \Ber{\frac{1 - (1 - e^{-\eps})^\gamma}{2}} = \Ber{\frac{1}{2e^{\eps'}}}$$
and we return to Equation~\ref{eq:zs_ineq} to find
$\tfrac{\P{}{\Ber{\half} = 1}}{\P{}{\Ber{\tfrac{1}{2e^{\eps'}}} = 1}} = e^{\eps'}$ and $\tfrac{\P{}{\Ber{\half} = 0}}{\P{}{\Ber{\tfrac{1}{2e^{\eps'}}} = 0}} = \tfrac{1}{2-e^{-\eps'}}$. Because $\eps'>0$, we have $e^{\eps'}+ e^{-\eps'} > 2$ and therefore $2-e^{-\eps'}<e^{\eps'}$. Thus we have proven the inequality in Equation~\ref{eq:zs_ineq}.

All that is left is to prove $\eps' \leq \eps(\gamma)$ and $\delta' \leq 2\delta/\gamma$. To bound $\eps'$, we split into cases based on $\eps$.

\vb{case 1 holds for all $\eps$, so only need to bound $\delta$ once}
\underline{Case 1}: $\eps > \ln(2)$. Then, as defined in the theorem statement, $\eps(\gamma) = \eps + \ln(1/\gamma)$. We use the following variant of Bernoulli's inequality: for $x \geq -1$ and $r \in [0,1]$, $(1+x)^r \leq 1+xr$. Using $r=\gamma$ and $x = -e^{-\eps}$, we have $(1-e^{-\eps})^\gamma \leq 1-\gamma e^{-\eps}$ and therefore
$$ \eps' = \ln\left(\frac{1}{1 - (1-e^{-\eps})^{\gamma}}\right) \leq \ln \left(\frac{1}{1-(1-\gamma e^{-\eps})} \right) = \eps + \ln\left(\frac{1}{\gamma}\right).$$

\underline{Case 2}: $\eps \leq \ln(2)$, in which case we wish to show $\eps' \leq \frac{2\eps^\gamma}{\gamma}$. The inequality $1+x \leq e^x$ implies the following:
$$\eps' = \ln\left(\frac{1}{1 - (1 - e^{-\eps})^{\gamma}}\right) = \ln\left(\frac{e^{\eps\gamma}}{e^{\eps\gamma} - (e^{\eps} - 1)^{\gamma}}\right) = \ln \left( 1 + \frac{(e^{\eps} - 1)^{\gamma} }{e^{\eps\gamma} - (e^{\eps} - 1)^{\gamma}} \right) \leq \frac{(e^{\eps} - 1)^{\gamma} }{e^{\eps\gamma} - (e^{\eps} - 1)^{\gamma}}$$
Given that $\eps \leq \ln(2)$, we have $e^\eps-1 \leq \tfrac{e^\eps}{2}$. In turn, $(e^\eps-1)^\gamma \leq \tfrac{e^{\eps\gamma}}{2^\gamma}$. Thus
\begin{align*}
\frac{(e^{\eps} - 1)^{\gamma} }{e^{\eps\gamma} - (e^{\eps} - 1)^{\gamma}} &\leq\ \frac{(e^\eps-1)^\gamma}{e^{\eps \gamma} - \tfrac{e^{\eps \gamma}}{2^\gamma}} \\
&= \frac{2^\gamma}{2^\gamma - 1} \cdot \left(\frac{e^\eps-1}{e^\eps}\right)^\gamma\\
	&\leq \frac{2^{\gamma}}{2^\gamma - 1}\cdot \eps^\gamma
\end{align*}
because $\tfrac{e^\eps-1}{e^\eps} = 1 - e^{-\eps} \leq \eps$. We use the following lemma, also proven in Appendix \ref{subsec:distinct-lemmas}:
\begin{lemma}
\label{lem:distinct-gamma}
	For $\gamma \in (0,1]$, $\tfrac{2^\gamma}{2^\gamma-1} \leq \tfrac{2}{\gamma}$.
\end{lemma}

Thus $\eps' \leq \frac{2\eps^\gamma}{\gamma}$.

To bound $\delta'$, we note that $\eps' \leq \eps +\ln(1/\gamma)$ holds for all $\eps$. We substitute into our definition of $\delta'$ to get
$$\delta' = \delta \cdot \frac{e^{\eps'}+1}{e^\eps+1} \leq \delta \cdot \frac{\tfrac{e^\eps}{\gamma}+1}{e^\eps+1} \leq \frac{\delta}{\gamma}$$
since $\gamma \in (0,1]$.


\underline{Accuracy (II)}: If all $n$ users follow the protocol, then $\sum_{i=1}^n u^{(j)}_i \textrm{ mod } 2$ is distributed as $\tBer{1/2}$ when there is some $x_i=j$. Otherwise, Lemma \ref{lem:distinct-ber} implies the distribution is $\tBer{1/2e^\eps}$. Thus $C = \sum_{j=1}^k C_j$ has expectation $\E{}{C} = \frac{D(\vec{x})}{2} + \frac{k-D(\vec{x})}{2e^\eps}$. In turn, the output of $\cP_\mde$ has expectation $\E{}{\tfrac{2Ce^\eps -k}{e^\eps-1}} = D(\vec{x})$. A Hoeffding bound implies that
$$\P{}{|\cP_\mde(\vec{x}) - D(\vec{x})| > \frac{e^\eps}{e^\eps-1}\sqrt{2k\ln \frac{2}{\beta} }} \leq \beta$$
where the probability is over users' randomizers.

\underline{Communication (III)}: For each domain element, Theorem \ref{thm:ikos} ensures there are $O(\log(n) + \sigma)$ messages, each of which is one labeled bit. Since each label is $ \in [k]$, each user sends $O(k[\log(n) + \sigma])$ messages of length $O(\log(k))$. Substituting in $\sigma = \log\left(\tfrac{e^\eps+1}{\delta}\right)$ yields the claim.
\end{proof}

We focus on the setting where $n \geq k$, as this is the setting for our lower bound in the next section. For completeness, we also give an $O(n^{2/3})$ guarantee for the small-$n$ setting. At a high level, this modified protocol simply hashes the initial domain $[k]$ to a smaller domain of size $O(n^{4/3})$ and then runs the protocol given above for this new domain. Details appear in Appendix~\ref{subsec:app_distinct_2}.

\subsection{Lower Bound for Robust Shuffle Privacy}
We now show that this dependence on $k$ is tight for the setting where $n = \Omega(k)$. To do so, we give a way to transform a robustly shuffle private protocol into a pan-private one (Algorithm~\ref{alg:de-transformation}) and then invoke a lower bound for pan-private distinct elements~\cite{MMNW11}.

Our transformation is simple: the pan-private algorithm uses the shuffle protocol to maintain a set of shuffle protocol messages as its internal state. More concretely, the pan-private algorithm initializes its internal state using $n/3$ draws from the protocol randomizer $\cR(1)$, processes the stream $\vec{x}$ by adding $\cR(x_1), \ldots, \cR(x_{n/3})$ to its collection of messages, adds another $n/3$ draws from $\cR(1)$ to its internal state after the stream, and finally applies the protocol analyzer $\cA$ to this final internal state to produce output. Pan-privacy follows from the original protocol's robust shuffle privacy combined with our incorporation of ``dummy'' messages into the state. By the original protocol's accuracy guarantee, these dummy messages -- all generated from a single element-- increase final error by at most 1.

We remark that this construction assumes $n$ is a multiple of 3, but this constraint can be removed by using $\lceil n/3\rceil$ and $\lfloor n/3 \rfloor$ where appropriate. We avoid this technicality for sake of clarity.

\begin{algorithm}

\caption{$\cQ_\cP$, an online algorithm for distinct elements}

\label{alg:de-transformation}

\KwIn{Data stream $\vec{x}\in[k]^{n/3}$; a shuffle protocol $\cP=(\cR,\cA)$ for distinct elements}

\KwOut{An integer in $[k]$}

Let vector $\ones \gets (1, \ldots, 1) \in \mathbb{N}^{n/3}$

Initialize internal state $I_0 \gets (\cS\circ\cR^{n/3})(\ones)$

\For{$i\in[n/3]$}{
    Set $I_i \gets \cS(I_{i-1}, \cR(x_i))$
}

Set final state $\vec{y} \gets \cS(I_{n/3}, \cR^{n/3}(\ones))$

\Return{$\cA(\vec{y})$}

\end{algorithm}

\begin{lemma}
\label{lem:de-transformation}
Suppose there exists a protocol $\cP=(\cR,\cA)$ that is $(\eps, \delta, 1/3)$-robustly shuffle private and solves the $(\alpha,\beta)$-distinct elements problem on input length $n$. Then $\cQ_\cP$ is an $(\eps, \delta)$-pan-private algorithm that solves the $(\alpha+1,\beta)$-distinct elements problem on input length $n/3$.
\end{lemma}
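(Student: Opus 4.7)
The plan is to handle accuracy and pan-privacy separately.

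For accuracy, the key observation is that composing all the shuffle operations yields $\vec{y}$ distributionally equivalent to $(\cS\circ\cR^n)(\ones,\vec{x},\ones)$, where each $\ones$ block has length $n/3$; that is, the original shuffle protocol executed on a length-$n$ dataset whose true distinct-element count is either $D(\vec{x})$ or $D(\vec{x})+1$, depending on whether $1$ itself appears in $\vec{x}$. Applying the protocol's $(\alpha,\beta)$-accuracy guarantee together with the triangle inequality then yields $|\cA(\vec{y}) - D(\vec{x})| \leq \alpha+1$ with probability at least $1-\beta$.

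For pan-privacy, fix neighboring streams $\vec{x}\sim\vec{x}\,'$ differing at some position $i^* \in [n/3]$ and an intrusion time $t \in \{0,\ldots,n/3\}$. Let $M_U$ denote the multiset of randomizer outputs from the ``early'' users corresponding to $(\ones,x_1,\ldots,x_t)$ (of size $|U|=n/3+t$), and let $M_V$ denote the multiset from the ``late'' users corresponding to $(x_{t+1},\ldots,x_{n/3},\ones)$ (of size $|V|=2n/3-t$). These are independent random variables since distinct users invoke $\cR$ with fresh internal randomness. Because every invocation of $\cS$ applies an independent uniform permutation, the multiset underlying $I_t$ equals $M_U$ and the multiset underlying $\vec{y}$ equals $M_U\cup M_V$, with the two orderings being conditionally independent uniform permutations given these multisets. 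Hence the joint $(I_t,\vec{y})$ is a randomized function of $(M_U,M_V)$ alone, and by post-processing (Fact~\ref{fact:post}) it suffices to establish $(\eps,\delta)$-differential privacy of $(M_U,M_V)$ with respect to $x_{i^*}$.

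I would finish with a case analysis on where $i^*$ lies. If $i^* \leq t$, then $M_V$ is independent of $x_{i^*}$, while $M_U$ arises from $|U|=n/3+t \geq n/3$ honest invocations of $\cR$ (the randomizer designed for $n$ users); since $|U|/n \geq 1/3$, the robust shuffle privacy hypothesis of $\cP$ directly gives $(\eps,\delta)$-DP of $M_U$. If $i^* > t$, the roles swap: $M_U$ is independent of $x_{i^*}$, and $M_V$ arises from $|V|=2n/3-t \geq n/3$ honest invocations (using $t\leq n/3$), so robust shuffle privacy again applies to $M_V$. In either case the data-independent piece can be coupled identically across $\vec{x}$ and $\vec{x}\,'$, so combining independence of $M_U,M_V$ with DP of the sensitive piece lifts to DP of the joint. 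The subtlety to be careful about is that $I_t$ and $\vec{y}$ share messages, so before invoking robust shuffle privacy one must justify the replay of the joint view from the independent building blocks $M_U$ and $M_V$; once that decomposition is in place, each case collapses to a single application of the hypothesis.
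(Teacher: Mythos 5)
Your proof is correct and follows essentially the same route as the paper. Your $(M_U,M_V)$ decomposition by multisets is the same as the paper's $(I_t, I_{>t})$ decomposition (the paper just works with the shuffled vectors directly), and both the case analysis on the differing index relative to $t$ and the observation that independence of the two halves plus DP of the sensitive half lifts to joint DP match the paper's argument.
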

\begin{proof}
\underline{Privacy}: The main idea of the proof is that, by the robust shuffle privacy of $\cP$, the first draw from $(\cS\circ\cR^{n/3})(\ones)$ ensures privacy of the internal state view, and the second draw ensures privacy for the output view. For clarity, we write this out explicitly below.

By construction, messages added to the state are sampled independently of the current state.
Now, fix intrusion time $t$.
Let $I_{>t}$ be the shuffled collection of messages generated after the intrusion at time $t$.
I.e. $I_{>t} \sim \cS\big(\cR(x_{t+1}), \ldots, \cR(x_{n/3}), \cR^{n/3}(\vec{1})\big)$.
Notice that
\begin{align*}
(I_t, \cQ_\cP(\vec{x})) \sim \big(I_t, (\cA\circ\cS)(I_t, I_{>t})\big).
\end{align*}
So $(I_t, \cQ_\cP(\vec{x}))$ is a post-processing of $(I_t, I_{>t})$.
Thus, to show $\cQ_\cP$ is $(\eps, \delta)$-pan-private, it suffices to show that the algorithm outputting $(I_t, I_{>t})$ is $(\eps, \delta)$-differentially private.

Consider a neighboring data stream $\vec{x}\,'$.
Define $I'_t$ and $I'_{>t}$ as the internal state at time $t$ and the shuffled collection of messages generated after time $t$ for $\cQ_\cP$ running with input $\vec{x}\,'$.

Let $j$ be the index on which $\vec{x}$ and $\vec{x}\,'$ differ.
If $j > t$, then $I_t$ has the same distribution as $ I_t'$.
Additionally, $I_{>t}$ is drawn from $(\cS\circ\cR^{2n/3-t})(x_{t+1},\dots,x_n,1,\dots,1)$ and $I'_{>t}$ is drawn from $(\cS\circ\cR^{2n/3-t})(x'_{t+1},\dots,x'_n,1,\dots,1)$. The inputs differ on one index and there are at least $n/3$ executions of $\cR$. 
Therefore, $I_{>t}$ and $I_{>t}'$ are $(\eps, \delta)$-indistinguishable, i.e. $\Pr[I_{>t} \in T] \le e^\eps \cdot \Pr[I'_{>t} \in T]  + \delta$ for all $T$, which then implies the joint distributions $(I_t, I_{>t})$ and $(I_t', I_{>t}')$ are $(\eps, \delta)$-indistinguishable.
A similar argument holds for the case when $j \le t$.
Therefore, $\cQ_\cP$ is $(\eps, \delta)$-pan-private.

\underline{Accuracy}: Consider the vector $\vec{w}=(1,\dots,1, x_1,\dots,x_{n/3},1,\dots,1)\in [k]^n$. By the accuracy guarantee of the original shuffle protocol $\cP$, we have $\P{}{|\cP(\vec{w})-D(\vec{w})|>\alpha }<\beta$. By the construction of $\cQ_{\cP}$, $\cQ_{\cP}(\vec{x})$ is identically distributed to $\cP(\vec{w})$. Combining the triangle inequality and $|D(\vec{x})-D(\vec{w})| \leq 1$, we conclude $\cQ_{\cP}$ solves the $(\alpha+1,\beta)$-distinct elements problem .
\end{proof}

We now recall the pan-private lower bound for distinct elements.~\citet{MMNW11} stated their result for pure user-level pan-privacy. However, the same argument works for record-level privacy. The proof also concludes by recovering, for $\omega(1)$ elements, whether or not those elements appeared in the stream. This immediately yields the following approximate record-level result:

\begin{lemma}[Implicit in Corollary 3~\cite{MMNW11}]
\label{lem:de-lower-bound-pan}
If $(\eps,\delta)$-pan-private $\cQ$ solves the $(\alpha,\beta)$-distinct elements problem on input length $n$ for $\alpha = o(\sqrt{k})$, $\beta < 0.1$, and $n \geq k$, then $\eps=\omega(1)$ or $\delta=\omega\left(\tfrac{1}{n}\right)$.
\end{lemma}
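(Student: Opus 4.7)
The plan is to adapt the pan-private distinct-elements lower bound of MMNW11 (Corollary 3) from its original setting (pure, user-level pan-privacy) to the approximate, record-level setting we need here. The excerpt already pinpoints the operative feature of their argument: it concludes by recovering, for $\omega(1)$ elements, whether or not each appears in the stream. That reconstruction-style conclusion is precisely what translates into a bound on $(\eps,\delta)$.

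At a high level, MMNW11 constructs, for a random bit vector $b \in \{0,1\}^k$, a stream $\vec{x}(b)$ in which element $j \in [k]$ appears iff $b_j = 1$ (with the remaining $n - |b|$ positions filled by a fixed dummy element; this is possible exactly because $n \geq k$). The true distinct count equals $|b| + O(1)$. Using the pan-private internal state at a carefully chosen sequence of intrusion times, together with the final output, one builds a sequence of distinct-count estimates whose pairwise differences localize the contributions of individual elements. An $\alpha = o(\sqrt{k})$ accuracy guarantee is strong enough that a linear-algebraic decoding (the ``LP decoding'' step in MMNW11) extracts, for some subset $S \subseteq [k]$ with $|S| = \omega(1)$, estimators $\hat b_j$ satisfying $\Pr[\hat b_j = b_j] \geq \tfrac12 + \Omega(1)$ for each $j \in S$.

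To convert this reconstruction into the dichotomy on $(\eps,\delta)$, fix any $j \in S$ and compare $\vec{x}(b)$ to $\vec{x}(b \oplus e_j)$. These two streams differ in exactly one position (one occurrence of $j$ is swapped with the dummy, or vice-versa), so they are record-level neighbors. Record-level $(\eps,\delta)$-pan-privacy therefore forces the joint (intruded state, output) distributions on the two streams to be $(\eps,\delta)$-indistinguishable (this is exactly the joint view used by the reconstruction attack). A standard fingerprinting / group-privacy calculation then implies that if $\omega(1)$ bits are simultaneously recoverable with constant bias, one must have $\eps = \omega(1)$ or the accumulated error term $\delta \cdot (\text{number of bits})$ exceeds what the attack can tolerate; concretely, $\delta = O(1/n)$ is incompatible with recovering $\omega(1)$ bits at constant advantage when $\eps = O(1)$.

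The main obstacle is making sure MMNW11's user-level argument transfers cleanly to the record-level model and to approximate DP. The good news is that the MMNW11 construction is already ``one element per user'' in spirit, so a user-level neighbor in their proof \emph{is} a record-level neighbor in ours, and no restructuring is needed. The approximate-DP extension is also routine: the reconstruction attack only needs to recover $\omega(1)$ presence bits with constant advantage, and standard arguments show that this contradicts $(\eps,\delta)$-indistinguishability whenever $\eps = O(1)$ and $\delta = O(1/n)$, yielding precisely the claimed conclusion that $\eps = \omega(1)$ or $\delta = \omega(1/n)$.
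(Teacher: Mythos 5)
The paper offers no proof of this lemma at all: it is stated as ``Implicit in Corollary 3'' of MMNW11, preceded only by a one-sentence justification that (a) MMNW11's argument is already record-level because each user contributes a single stream element, and (b) their proof concludes by a reconstruction attack (recovering membership for $\omega(1)$ elements), which yields the approximate-DP dichotomy as well. Your proposal expands exactly that sketch --- same construction, same ``user-level = record-level here'' observation, same reconstruction-to-contradiction conclusion --- so it is essentially the paper's approach, just spelled out at slightly greater length. One caution on your framing: your description of the final step as a ``group-privacy calculation'' is not the right lever. If you compare $\vec{x}(b)$ to $\vec{x}(b \oplus v)$ for a vector $v$ supported on $m = \omega(1)$ coordinates, group privacy only gives $(m\eps, m\delta e^{(m-1)\eps})$-indistinguishability, whose $\eps$-parameter blows up to $\omega(1)$ --- so it produces no contradiction. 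The correct mechanism is per-bit: for each fixed $j$, the two streams $\vec{x}(b)$ and $\vec{x}(b \oplus e_j)$ are single-record neighbors, and the standard hypothesis-testing bound for $(\eps,\delta)$-indistinguishability caps the advantage in predicting $b_j$ from the joint (intruded state, output) view by roughly $\tfrac{e^\eps - 1}{2(e^\eps + 1)} + O(\delta)$. The LP-decoding guarantee from MMNW11 says that when $\alpha = o(\sqrt{k})$ the attack beats that cap for many coordinates, and an averaging argument over which coordinate is challenged forces $\eps = \omega(1)$ or $\delta = \omega(1/n)$. This is also why ``fingerprinting'' is not quite the right label: MMNW11's attack is LP decoding, not the BUV-style correlated tracing, even though both are reconstruction-flavored. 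None of this is a gap in your high-level argument, but replacing ``group privacy'' with the per-bit hypothesis-testing view is what makes the final step actually go through.
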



\iftrue
In fact, we can strengthen Lemma~\ref{lem:de-lower-bound-pan} to incorporate $\eps$. Throughout, we ignore ceilings and floors for neatness.

\vb{old version}
\begin{lemma}
\label{lem:de-lower-bound-pan-2}
	Let $\eps \leq 1$, $\delta = O\left(\tfrac{\eps}{n}\right)$, and $\beta < 0.1$. Suppose there exists $(\eps,\delta)$-pan-private $\cQ$ that solves the $(\alpha, \beta)$-distinct elements problem on input length $n \geq \tfrac{k}{\eps}$ for domain $[k]$. Then $\alpha = \Omega\left(\sqrt{\tfrac{k}{\eps}}\right)$.
\end{lemma}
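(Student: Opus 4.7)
The plan is to reduce to Lemma~\ref{lem:de-lower-bound-pan} via a group-privacy argument. Set $c = \lfloor 1/\eps \rfloor$, $k' = \lfloor k/c \rfloor$, and $n' = \lfloor n/c \rfloor$. From the hypothesized $\cQ$, I would construct an online algorithm $\cQ'$ on domain $[k']$ and input length $n'$ that simulates $\cQ$ in expanded form: for each received element $j \in [k']$, $\cQ'$ sequentially feeds $\cQ$ the $c$ distinct elements $j, j+k', j+2k', \dots, j+(c-1)k' \in [k]$, and at the end $\cQ'$ outputs $\cQ$'s output divided by $c$. Because the transformation is deterministic and applied one element at a time, $\cQ'$ is a valid online algorithm.

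For privacy, two neighboring streams into $\cQ'$ give rise to streams fed into $\cQ$ that differ in exactly $c$ positions. Iterating Definition~\ref{def:pan} (group privacy for pan-privacy) shows that, at every intrusion time $t$ on $\cQ'$ (which corresponds to the intrusion time $ct$ on $\cQ$), the joint distribution over state and final output is $(c\eps,\, c e^{c\eps}\delta)$-indistinguishable. Our choice $c = \lfloor 1/\eps \rfloor$ with $\eps \leq 1$ gives $c\eps \leq 1$, so $\cQ'$ is $(1,\, O(c\delta))$-pan-private. The hypothesis $\delta = O(\eps/n)$ then yields $c\delta = O(1/n) = O(1/n')$, satisfying the $\delta$-hypothesis of Lemma~\ref{lem:de-lower-bound-pan} applied at scale $n'$.

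For accuracy, if the input stream to $\cQ'$ has $d$ distinct elements, then by construction the transformed stream fed to $\cQ$ has exactly $cd$ distinct elements. With probability at least $1-\beta$, $\cQ$ estimates $cd$ within additive error $\alpha$, so dividing by $c$ gives an estimate for $d$ within additive error $\alpha/c$. Hence $\cQ'$ solves the $(\alpha/c,\beta)$-distinct elements problem on input length $n'$ over $[k']$. The hypothesis $n \geq k/\eps$ implies $n' \geq k'$, so all conditions of Lemma~\ref{lem:de-lower-bound-pan} are met and we obtain $\alpha/c = \Omega(\sqrt{k'})$, i.e.
\[
\alpha \;=\; \Omega\!\left(c\sqrt{k/c}\right) \;=\; \Omega\!\left(\sqrt{kc}\right) \;=\; \Omega\!\left(\sqrt{k/\eps}\right).
\]

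The main obstacle is the pan-privacy bookkeeping: one must confirm that the group-privacy step really does apply \emph{simultaneously} to the intermediate state view and the final output view at the same intrusion time, since Definition~\ref{def:pan} quantifies over pairs. This is handled by a standard hybrid argument that changes one of the $c$ differing stream elements at a time, applying the $(\eps,\delta)$-joint guarantee at the fixed intrusion time on $\cQ$ in each hybrid; the resulting joint distributions are $(c\eps, c e^{c\eps}\delta)$-indistinguishable by triangle inequality. The only other subtlety is the standard one of handling non-divisibility of $n$ by $c$ with ceilings and floors, which changes the parameters only by constants and does not affect the asymptotic conclusion.
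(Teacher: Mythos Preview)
Your proposal is correct and follows essentially the same reduction as the paper: expand each element of a smaller domain $[k']$ into $c \approx 1/\eps$ distinct elements of $[k]$, run $\cQ$, rescale the output by $1/c$, and invoke Lemma~\ref{lem:de-lower-bound-pan} on the resulting $(1, O(\delta/\eps))$-pan-private algorithm over $[k']$. Your treatment of the privacy step is in fact slightly more careful than the paper's---you correctly identify it as group privacy on the joint (state, output) view and track the $c e^{c\eps}\delta$ factor, whereas the paper glosses this as ``composition''---but the argument and conclusion are the same.
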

\begin{proof}
	We first show that $\cQ$ yields a $\left(1,\tfrac{\delta}{\eps}\right)$-pan-private $\cQ'$ that solves the $(\alpha \eps, \beta)$-distinct elements problem on input length $n' \geq k\eps$ for domain $[k\eps]$.
	
	At a high level, $\cQ'$ transforms distinct elements on $[k\eps]$ into distinct elements over a larger domain $[k]$ and uses $\cQ$. Concretely, $\cQ'$ initializes $\cQ$ and then, for each received element $j_2 \in [k\eps]$, creates $\tfrac{1}{\eps}$ copies $(j_2, 1), \ldots (j_2, 1/\eps)$ and passes them to $\cQ$. The cost is that, by composition across the $\tfrac{1}{\eps}$ copies passed to $\cQ$, each element from $[k\eps]$ is now only guaranteed $\left(1, \tfrac{\delta}{\eps}\right)$-pan-privacy in the state maintained by $\cQ$. The stream passed to $\cQ$ has length $\tfrac{n'}{\eps} \geq k$, so with probability at least $1-\beta$, $\cQ$ outputs an $\alpha$-accurate estimate of the number of distinct elements for a domain of size $[k]$; by our transformation, $\cQ'$ can multiply this by $\eps$ to get an $\alpha \eps$-accurate estimate of the number of distinct elements in its stream from $[k\eps]$.
	
	We now apply Lemma~\ref{lem:de-lower-bound-pan} to $\cQ'$. To check that the required conditions hold, $\delta = O\left(\tfrac{\eps}{n}\right)$ implies $\tfrac{\delta}{\eps} = O\left(\tfrac{1}{n}\right)$, the input length $n' \geq k\eps$, and $\beta < 1/10$.  Thus $\alpha\eps = \Omega(\sqrt{k\eps})$, and we rearrange into $\alpha = \Omega\left(\sqrt{\tfrac{k}{\eps}}\right)$.
\end{proof}
\fi

We now combine Lemma~\ref{lem:de-transformation} and Lemma~\ref{lem:de-lower-bound-pan-2} to get the following robust shuffle private lower bound.

\begin{theorem}
\label{thm:de-lower-bound}
	Let $\eps \leq 1$, $\delta = O\left(\tfrac{\eps}{n}\right)$, $\beta < 0.1$. If $\cP$ is $(\eps,\delta,1/3)$-robustly shuffle private and solves the $(\alpha, \beta)$-distinct elements problem on input length $n \geq \tfrac{3k}{\eps}$, then $\alpha = \Omega\left(\sqrt{\tfrac{k}{\eps}}\right)$.
\end{theorem}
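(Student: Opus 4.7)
The plan is to combine the two preceding lemmas via a direct reduction. Lemma~\ref{lem:de-transformation} turns a robustly shuffle private protocol for distinct elements into a pan-private algorithm for distinct elements, and Lemma~\ref{lem:de-lower-bound-pan-2} gives a pan-private lower bound; composing them yields the desired robustly shuffle private lower bound.

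Concretely, I would start with the hypothesized protocol $\cP$, which is $(\eps,\delta,1/3)$-robustly shuffle private and solves the $(\alpha,\beta)$-distinct elements problem on input length $n \geq 3k/\eps$. Applying Lemma~\ref{lem:de-transformation} produces an online algorithm $\cQ_\cP$ that is $(\eps,\delta)$-pan-private and solves the $(\alpha+1,\beta)$-distinct elements problem on input length $n/3$. The hypothesis $n \geq 3k/\eps$ ensures $n/3 \geq k/\eps$, matching the required input length for the pan-private lower bound.

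Next, I would check that the other hypotheses of Lemma~\ref{lem:de-lower-bound-pan-2} carry through to $\cQ_\cP$: the parameter $\eps \leq 1$ is unchanged, $\beta < 0.1$ is unchanged, and $\delta = O(\eps/n)$ is preserved up to constants after replacing $n$ by $n/3$ in the denominator. Applying Lemma~\ref{lem:de-lower-bound-pan-2} to $\cQ_\cP$ then gives $\alpha + 1 = \Omega(\sqrt{k/\eps})$, which is equivalent to $\alpha = \Omega(\sqrt{k/\eps})$ in the regime of interest (where $\sqrt{k/\eps} \geq 1$, so the additive $+1$ is absorbed into the asymptotic bound).

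There is no real obstacle here: both lemmas are already in hand and the argument is essentially bookkeeping of parameters. The only points worth flagging explicitly are the factor-of-$3$ loss in input length (which is exactly what the $n \geq 3k/\eps$ hypothesis is designed to absorb) and the corresponding $O(1)$ shift in the $\delta = O(\eps/n)$ condition, neither of which affects the final asymptotic lower bound.
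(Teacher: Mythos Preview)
Your proposal is correct and matches the paper's approach exactly: the paper simply states that the theorem follows by combining Lemma~\ref{lem:de-transformation} and Lemma~\ref{lem:de-lower-bound-pan-2}, and your write-up fills in precisely that parameter bookkeeping.
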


In contrast, the trivial $\eps$-centrally private solution of computing the number of distinct elements and adding geometric noise is achieves error $O(1/\eps)$ for any $\eps$ and $n$.
\section{Uniformity Testing}
\label{sec:uniform}
We now move to the second main problem of this paper, uniformity testing. A uniformity tester uses i.i.d.\ sample access to an unknown distribution over a domain $[k]$ to distinguish the cases where the distribution is uniform or far from uniform. 

\begin{definition}[Uniformity Testing]
An algorithm $\cM$ solves $\alpha$-\emph{uniformity testing} with sample complexity $m$ when:
\begin{itemize}
    \item If $\vec{x}\sim \bU^m$, then $\P{}{\cM(\vec{x}) = \textrm{``uniform''}} \geq 2/3$, and
    \item If $\vec{x}\sim \bD^m$ where $\tv{\bD}{\bU} > \alpha$, then $\P{}{\cM(\vec{x}) = \textrm{``not uniform''}} \geq 2/3$
\end{itemize}
where the probabilities are taken over the randomness of $\cM$ and $\vec{x}$.
\end{definition}

Note that achieving an overall $2/3$ success probability is essentially equivalent to achieving $\Omega(1)$ separation between the probabilities of outputting ``uniform'' given uniform and non-uniform samples. This is because any such $\Delta$ separation can be amplified using $O\left(\tfrac{1}{\Delta^2}\right)$ repetitions. For this reason, we generally focus on achieving any such constant separation.

Our algorithms will also often rely on \emph{Poissonization} and use not $m$ samples but $n \sim \Pois(m)$. Doing so ensures that sampled counts of different elements are independent over the randomness of drawing $n$, which will be useful in their analysis. Fortunately, $\Pois(m)$ concentrates around $m$~\cite{C17}. We can therefore guarantee $O(m)$ samples at the cost of a constant decrease in success probability. Because we generally focus on constant separations, we typically elide the distinction between ``sample complexity $m$'' and ``sample complexity distributed as $\Pois(m)$''.

Throughout, in the shuffle model we assume that users receive i.i.d.\ samples from $\bD$, one sample per user. In the pan-private model, we assume that the stream consists of i.i.d.\ samples from $\bD$.

\subsection{Upper Bound for Robust Shuffle Privacy}
\label{subsec:uniform_upper}

In this section, we give a robustly shuffle private uniformity tester. At a high level, our protocol imitates the pan-private uniformity tester of~\citet{AJM19} (which itself imitates a centrally private uniformity tester suggested by~\citet{CDK17}). This tester maintains $k$ sample counts, one for each element, and compares a $\chi^2$-style statistic of the counts to a threshold to determine its decision. To ensure privacy, the algorithm adds Laplace noise to each count before computing the statistic.

Our protocol is similar, but the shuffle model introduces a complication: we cannot privately count the frequency of a universe element, but instead rely on private communications from users. More concretely, let $c_j(\vec x)$ denote the true count of $j \in [k]$ in $\vec{x}$. Users will (implicitly) execute a private binary sum protocol to count each $j$. The analyzer obtains a  vector of estimates $(\tilde c_{1}(\vec{x}), \ldots, \tilde c_{k}(\vec{x}))$ and uses them to compute the test statistic used by~\citet{AJM19} (Section \ref{sec:prelim-ut}). We then apply the binning trick from~\citet{AJM19} -- roughly, maintaining coarser counts for random groups of elements rather than every element separately -- to obtain our final uniformity tester with sample complexity $\tilde O(k^{2/3})$ (Section \ref{sec:final-ut}).

\subsubsection{Preliminary Uniformity Tester}
\label{sec:prelim-ut}
We give a preliminary robustly shuffle private uniformity tester $\cP_\ut = (\cR_\ut,\cA_\ut)$ (Algorithms~\ref{alg:ut-r} and~\ref{alg:ut-a}). This tester first compiles robust shuffle private estimates of the sample counts for each element in $[k]$. It then uses these counts to compute a statistic $Z'$ that is, roughly, large when the underlying distribution is sufficiently non-uniform and small otherwise. The resulting uniformity obtains sample complexity scaling with $k^{3/4} \ln^{1/2}(k)$. In the next section, we use this initial tester as a black box to obtain a tester that improves $k^{3/4}$ to $k^{2/3}$.

\begin{algorithm}[h]
\caption{$\cR_\ut$, a randomizer for private uniformity testing}
\label{alg:ut-r}

\KwIn{User data point $x\in[k]$; parameters $\lambda,n \in \N$}
\KwOut{A message vector $\vec{y} \in ([k] \times \zo)^*$}

\For{$j\in[k]$}{
    
    \If{$x = j$}{Set message $\vec{y} \gets (j,1)$}
    \Else{Set data point message $\vec{y} \gets (j,0)$}
	
	Draw number of noisy messages $s_j \sim \Pois(\lambda/n)$
    
    \For{$t \in [s_j]$}{
        $b_{j,t} \sim \Ber{1/2}$
    
        Append $(j,b_{j,t})$ to $\vec{y}$
    }
}

\Return{$\vec{y}$}
\end{algorithm}

\begin{algorithm}[H]
\caption{$\cA_\ut$, an analyzer for private uniformity testing}
\label{alg:ut-a}

\KwIn{A message vector $\vec{y}\in([k]\times \zo)^*$; parameters $\tau\in\R$, $\lambda,m,n \in \N, \alpha \in (0,1)$}
\KwOut{A string in $\{ \textrm{``uniform''}, \textrm{``not uniform''} \}$}

\For{$j\in[k]$}{
    
    Calculate noise scale $\ell_j \gets -n + |\{y_i \mid y_i = (j,0) \text{ or } y_i = (j,1)\}|$
    
    De-bias estimate of count of $j$ as $c_j(\vec{y}) \gets -\ell_j/2 + |\{y_i \mid y_i = (j,1)\}|$
}

Compute statistic $Z' \gets \frac{k}{m} \sum_{j=1}^k (c_j(\vec{y}) -m/k)^2-c_j(\vec{y})$

\Return{\emph{``not uniform''} if $Z'>\tau$ otherwise \emph{``uniform''} }

\end{algorithm}

\begin{theorem}
\label{thm:ut-upper-bound-prelim}
	Let $\gamma \in (0,1]$, $\eps > 0$, and $\alpha, \delta \in (0,1)$. There exists parameters $\lambda \in \N$ and $\tau\in \R$ such that the protocol $\cP_\ut = (\cR_\ut,\cA_\ut)$ is $(2\eps, 8\delta^\gamma,\gamma)$-robustly shuffle private. For $\eps = O(1)$ and $\delta = o(1)$, $\cP_\ut$ solves $\alpha$-uniformity testing with sample complexity
	$$m = O\left(\frac{k^{3/4}}{\alpha \eps} \ln^{1/2}\left(\frac{k}{\delta}\right) + \frac{k^{2/3}}{\alpha^{4/3} \eps^{2/3}} \ln^{1/3}\left(\frac{k}{\delta}\right) + \frac{k^{1/2}}{\alpha^2}  \right).$$
\end{theorem}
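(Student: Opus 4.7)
The plan is to address privacy and accuracy separately, following the structure of the pan-private tester of~\citet{AJM19} but replacing Laplace noise with a Poisson-Binomial noise distribution native to the shuffle model.

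\textbf{Privacy.} Fix a pair of neighboring datasets $\vec{x} \sim \vec{x}\,'$ on $\gamma n$ honest users, differing at index $i$ with $x_i = a$, $x_i' = b$. Because the randomizer builds the message vector independently for each label $j \in [k]$, the shuffled output restricted to messages carrying label $j \notin \{a,b\}$ has identical distribution under $\vec{x}$ and $\vec{x}\,'$. So it suffices to bound the privacy loss on the label-$a$ messages and on the label-$b$ messages, then compose over these two labels to obtain the factor of $2$ in $2\eps$ and the factor of (at most) $2$ absorbed into the $8\delta^\gamma$ constant.

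For a single label $j$, the honest users together contribute $\gamma n$ ``data'' bits plus $N_j \sim \Pois(\gamma \lambda)$ noise bits drawn i.i.d.\ from $\Ber{1/2}$, all handed to the shuffler; changing $x_i$ shifts exactly one data bit from $0$ to $1$ (or vice versa). I will condition on $N_j$: if $N_j \ge T$ for $T = 20 \bigl(\tfrac{e^\eps+1}{e^\eps-1}\bigr)^2 \ln(4/\delta^\gamma)$, then by the Binomial Mechanism (Lemma~\ref{lem:b_noise}) the conditional output is $(\eps, \delta^\gamma)$-differentially private. The remaining contribution is the Poisson tail $\Pr[N_j < T]$, which I bound by a standard Chernoff bound $e^{-\gamma\lambda/8}$ provided $T \le \gamma\lambda/2$. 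Choosing $\lambda = \Theta\bigl(\eps^{-2}\ln(1/\delta)\bigr)$ (where the constant is tuned to both conditions simultaneously) makes both $\delta^\gamma$ and $e^{-\gamma\lambda/8}$ at most a small constant times $\delta^\gamma$, yielding $(\eps, 4\delta^\gamma)$ per label and $(2\eps, 8\delta^\gamma)$ after composing the two affected labels.

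\textbf{Accuracy.} Write $c_j(\vec{y}) = c_j(\vec{x}) + \eta_j$, where, under $n$ honest users and Poissonized sample size $m$, the $c_j(\vec{x})$ are independent $\Pois(m p_j)$ and the $\eta_j$ are independent, mean-zero, symmetric noise obtained as $\mathrm{Bin}(N_j, 1/2) - N_j/2$ with $N_j \sim \Pois(\lambda)$, giving $\Var[\eta_j] = \lambda/4$. A direct calculation using $\E[(c_j(\vec{x}) - m/k)^2 - c_j(\vec{x})] = (mp_j - m/k)^2$ (the debiasing identity for Poisson variables) and the independence of $\eta_j$ yields
\[
\E[Z'] \;=\; m\sum_{j=1}^k\!\Bigl(p_j - \tfrac{1}{k}\Bigr)^{\!2} + \frac{k^2\lambda}{4m}.
\]
So $\E[Z']$ jumps by at least $m \alpha^2$ (using $\sum_j(p_j-1/k)^2 \ge \alpha^2/k$ by Cauchy--Schwarz) when moving from $\bU$ to any $\bD$ with $\tv{\bD}{\bU} > \alpha$. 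I choose $\tau$ halfway between the two expected values. For variance, I expand $\Var[Z']$ into contributions from $c_j(\vec{x})$ alone and from $\eta_j$; the noise terms contribute $O(k \cdot \lambda^2 \cdot (k/m)^2)$ and $O(k \cdot \lambda \cdot (m/k) \cdot (k/m)^2)$, while the Poisson-only terms reproduce the (non-private) $\chi^2$-tester variance, i.e.\ $O(k \alpha^2/m \cdot (k/m)^2 \cdot m^2) + O(k(k/m)^2)$. Applying Chebyshev with $\lambda = \Theta(\ln(k/\delta)/\eps^2)$ and solving the resulting inequality $\Var[Z'] \lesssim (m\alpha^2)^2$ term by term gives the three summands in the sample-complexity bound: $k^{3/4}\ln^{1/2}(k/\delta)/(\alpha\eps)$ from the pure-noise variance, $k^{2/3}\ln^{1/3}(k/\delta)/(\alpha^{4/3}\eps^{2/3})$ from the cross term, and $\sqrt{k}/\alpha^2$ from the tester itself.

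\textbf{Anticipated obstacle.} The most delicate step is the Poisson-count conditioning in the privacy proof: the noise count $N_j$ is random, and naively applying the Binomial Mechanism requires a deterministic lower bound. Getting the $\delta^\gamma$ scaling (rather than, say, a worse $\delta^{O(\gamma/\eps^2)}$) requires carefully aligning the tail threshold $T$ with $\gamma\lambda/2$ and choosing the constant in $\lambda = \Theta(\eps^{-2}\log(1/\delta))$ so that the Binomial-mechanism condition and the Poisson Chernoff bound both yield the same exponent in $\delta$. Everything else is a calculation.
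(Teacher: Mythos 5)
Your plan tracks the paper's argument very closely, and the core ideas are all present and correct: for privacy, exploit closure of Poisson and Binomial under summation to write the per-label noise from the $\gamma n$ honest users as $\Bin(\Pois(\gamma\lambda),1/2)$, condition on the Poisson count being at least half its mean, apply the Binomial Mechanism, union-bound the tail event, and compose over the two affected labels; for accuracy, decompose $Z'$ into the noiseless statistic plus error terms, compute moments, and apply Chebyshev.

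Two spots deserve closer attention.

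\textbf{Privacy threshold.} You set $T = 20\bigl(\tfrac{e^\eps+1}{e^\eps-1}\bigr)^2\ln(4/\delta^\gamma)$ and require $T \le \gamma\lambda/2$. Note that $\ln(4/\delta^\gamma) = \ln 4 + \gamma\ln(1/\delta)$ is bounded below by $\ln 4$ even as $\gamma\to 0$, while $\gamma\lambda/2\to 0$. So with $\lambda = \Theta(\eps^{-2}\log(1/\delta))$ the constraint $T\le\gamma\lambda/2$ can fail for small $\gamma$. The paper inverts the parameterization: it fixes the conditioning threshold at $\gamma\lambda/2$, shows $\Pr[\Pois(\gamma\lambda) < \gamma\lambda/2] < 2\delta^\gamma$, then feeds $\ell=\gamma\lambda/2$ into the Binomial Mechanism and reads off the achieved failure probability $\delta_0 = 2(\delta/2k)^\gamma \le 2\delta^\gamma$. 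That way the bound degrades gracefully in $\gamma$ instead of requiring a side condition.

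\textbf{The cross term $B$.} Writing $Z' = Z + A + B - C$ with $B = \tfrac{2k}{m}\sum_j E_j(c_j(\vec x)-m/k)$, you quote $\Var[B] = O(k^2\lambda/m)$, which is the \emph{uniform-case} value: it uses $\E[(c_j(\vec x)-m/k)^2] = m/k$. In the far-from-uniform case, $\E[(c_j(\vec x)-m/k)^2] = mp_j + (mp_j - m/k)^2$, so $\Var[B]$ can be much larger and your term-by-term Chebyshev as written does not directly close. The paper sidesteps this entirely: it observes that each $E_j$ is symmetric around $0$ and independent of $c_j(\vec x)$, hence $B$ is symmetric with mean zero, hence $\Pr[B\ge 0]\ge 1/2$, which is all the one-sided control needed in the non-uniform case (this is Claim~\ref{clm:symmetric-mean-zero}). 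Your route is salvageable --- working out the non-uniform $\Var[B]$ shows the induced constraints on $m$ are dominated by the terms you already have --- but that calculation is a necessary step your plan omits, and the symmetry trick is cleaner. (Also a small bookkeeping slip: with $Z = \tfrac{k}{m}\sum_j[(c_j(\vec x)-m/k)^2 - c_j(\vec x)]$ one has $\E[Z] = km\sum_j(p_j-1/k)^2$, not $m\sum_j(p_j-1/k)^2$; the factor of $k$ cancels against the Cauchy--Schwarz lower bound $\sum_j(p_j-1/k)^2 \ge 4\alpha^2/k$, so your claimed $\Omega(m\alpha^2)$ gap in $\E[Z']$ is still right.)

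Modulo these two points, your proof takes the same route as the paper.
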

\begin{proof}
\underline{Privacy}: We shall prove that $(\cS\circ\cR_\ut^{\gamma n})$ is $(2\eps, 8\delta^\gamma)$ private. Let $\vec{x} \sim \vec{x}\,'$ be neighboring datasets of size $\gamma n$. Without loss of generality, suppose $x_1 = j \neq x_1' = j'$. The distributions of the messages containing $j'' \neq j, j'$ are identical for $\vec{x}$ and $\vec{x'}$, so it suffices to restrict our analysis to messages containing either $j$ or $j'$. To prove our guarantee, by composition it is enough to show that the pool of messages containing $j$ and the pool of messages containing $j'$ are each $(\eps,4\delta^\gamma)$-shuffle private.

We prove this claim for $j$; the $j'$ case is identical. Recall that both the binomial and Poisson distribution are closed under summation:
$$\sum_{i=1}^r \Bin(a_i, p) = \Bin\left(\sum_{i=1}^r a_i, p\right) \text{ and } \sum_{i=1}^r \Pois(\lambda_i) = \Pois\left(\sum_{i=1}^r \lambda_i\right).$$
Recall also that the count of noisy messages $(j,0),(j,1)$ generated by any honest user is distributed as $\Pois(\lambda/n)$. Thus the number of times $(j,1)$ occurs in $(\cS\circ \cR^{\gamma n})(\vec{x})$ is distributed as
$$c_j(\vec{x}) + \sum_{i=1}^{\gamma n} \Bin\left(\Pois\left(\frac{\lambda}{n}\right),1/2\right) = c_j(\vec{x}) + \Bin\left(\Pois\left(\gamma \lambda\right), 1/2\right).$$

Let $\lambda = 40\cdot \left(\tfrac{e^\eps+1}{e^\eps-1}\right)^2\ln\left(\tfrac{2k}{\delta}\right)$. We rely on the following concentration lemma:
\begin{lemma}[Theorem 1~\cite{C17}]
\label{lem:poi_con}
	For $X \sim \Pois(\lambda)$ and $t>0$,
$$\P{}{|X - \lambda| \geq t} \leq 2e^{-\tfrac{t^2}{2(\lambda+t)}}.$$
\end{lemma}
Specifically, the lemma implies that $\P{}{\Pois(\gamma \lambda) < \gamma \lambda/2} < 2\exp(-\gamma\lambda/12)$. In our case, the right-hand side is bounded by $2\exp(-\ln (2/\delta)^\gamma) < 2\delta^\gamma$. By Lemma~\ref{lem:b_noise}, adding $\Bin(\gamma\lambda/2,1/2) = \Bin(20\cdot \left(\tfrac{e^\eps+1}{e^\eps-1}\right)^2 \ln(\tfrac{2k}{\delta})^\gamma , 1/2)$ suffices for $(\eps,2\delta^\gamma)$-differential privacy. After a union bound, we get that the pool of messages containing $j$ is $(\eps,4\delta^\gamma)$-shuffle private.

\underline{Sample Complexity}: Let $E_j = c_j(\vec{y}) - c_j(\vec{x})$. We rewrite $Z'$ in terms of $E_j$:
\begin{align*}
Z' &= \frac{k}{m} \sum_{j=1}^k \left[\left(c_j(\vec{y}) - \frac{m}{k}\right)^2-c_j(\vec{y})\right] \\
    &= \frac{k}{m} \sum_{j=1}^k \left[\left(c_j(\vec{x})+E_j-\frac{m}{k}\right)^2 - (c_j(\vec{x})+E_j)\right] \tag{By definition}\\
    &= \underbrace{\frac{k}{m} \sum_{j=1}^k \left[\left(c_j(\vec{x})-\frac{m}{k}\right)^2 - c_j(\vec{x})\right]}_{Z} + \underbrace{\frac{k}{m} \sum_{j=1}^k E^2_j}_{A} + \underbrace{\frac{2k}{m} \sum_{j=1}^k E_j\cdot \left(c_j(\vec{x})-\frac{m}{k}\right)}_{B} - \underbrace{\frac{k}{m} \sum_{j=1}^k E_j}_{C}
\end{align*}

We will show that, when $\bD=\bU$, this sum is below threshold $\tau$ with probability $\geq 9/10$. When $\tv{\bD}{\bU} >\alpha$, it is below $\tau$ with probability $\leq 29/50$. We will prove this result using a series of claims whose proofs we defer to the end.

We start with results that hold regardless of the identity of $\bD$. Recall that the noise scale $\ell_j$ computed in $\cA_\ut$ is distributed as $\ell_j \sim \Pois(\lambda)$. Invoking Lemma \ref{lem:poi_con} once more,
\begin{equation}
\label{eq:scale-concentration}
\P{}{\exists j \in [k] ~ \ell_j > 2\lambda } < \frac{1}{50}.
\end{equation}
Likewise, we obtain bounds on the moments of $A$, $B$, and $C$:
\begin{claim}
\label{claim:moments}
\begin{gather}
\ex{}{A} = \frac{k}{4m}\sum_{j=1}^k \ell_j ~~~ \ex{}{B} = 0, ~~~ \ex{}{C} = 0 \label{eq:ex}\\
\Var{}{A} \leq \frac{k^2}{8m^2}\sum_{j=1}^k \ell^2_j  ~~~  \Var{}{C} = \frac{k^2}{4m^2} \sum_{j=1}^k \ell_j \label{eq:var}
\end{gather}
\end{claim}
This is proven in Appendix \ref{app:testing-claims}. We now split into cases based on $\bD$.

\underline{Case 1}: $\bD = \bU$. We therefore want to upper bound $Z'$, starting with bounds on the moments of $Z$ and $B$.
\begin{claim}
\label{claim:moments-uniform}
Sample $n\sim\Pois(m)$ and $\vec{x}\sim \bU^n$. In an execution of $\cP^\ut_{\lambda,\alpha}(\vec{x})$,
\begin{gather}
\ex{}{Z} \leq \frac{\alpha^2 m}{500}\\
\Var{}{Z} \leq \frac{\alpha^4m^2}{500000} ~~~ \Var{}{B} = \frac{k}{m} \sum_{j=1}^k \ell_j \label{eq:var-uniform}
\end{gather}
\end{claim}
This is also proven in Appendix \ref{app:testing-claims}. To upper bound each of $Z,A,B,C$, we apply Chebyshev's inequality to inequalities \eqref{eq:ex} through \eqref{eq:var-uniform}. Then we apply \eqref{eq:scale-concentration} and a union bound to conclude that the following holds except with probability $\leq 5/50 = 1/10$:
\begin{align*}
Z' &= Z + A + B - C \\
&< \underbrace{\frac{\alpha^2 m}{500} + \sqrt{\frac{\alpha^4m^2}{500000} \cdot 50}}_{Z} + \underbrace{\frac{k}{4m}\sum_{j=1}^k \ell_j + \sqrt{\frac{k^2}{8m^2}\sum_{j=1}^k \ell^2_j \cdot 50} }_{A} + \underbrace{\sqrt{ \frac{k}{m} \sum_{j=1}^k \ell_j \cdot 50}}_{B} + \underbrace{\sqrt{\frac{k^2}{4m^2} \sum_{j=1}^k \ell_j \cdot 50}}_{C} \\
&\leq \frac{3\alpha^2 m}{250} + \frac{k}{4m}\sum_{j=1}^k \ell_j + \frac{5k}{2m}\sqrt{\sum_{j=1}^k \ell^2_j } + \sqrt{ \frac{50k}{m} \sum_{j=1}^k \ell_j} + \frac{5k}{\sqrt{2}m} \sqrt{ \sum_{j=1}^k \ell_j} \\
&\leq \frac{3\alpha^2 m}{250} + \frac{k^2}{2m} \cdot \lambda + \frac{5k^{3/2}}{m} \cdot \lambda + \frac{10 k}{m^{1/2}} \cdot \lambda^{1/2} + \frac{5k^{3/2}}{m} \lambda^{1/2} \stepcounter{equation} \tag{\theequation} \label{eq:z'-upper}
\end{align*}
We will set $\tau$ to the above quantity. Thus, given uniform samples, the protocol has probability at least 9/10 of correctly answering ``uniform.''

\underline{Case 2}: $\tv{\bD}{\bU} > \alpha$. We use the analysis of~\citet{ADK15} to get
\begin{align}
    \ex{}{Z} &\geq \frac{\alpha^2 m}{5} \label{eq:ex-z-non}\\
    \Var{}{Z} &\leq \frac{\ex{}{Z}^2}{100} \label{eq:var-z-non}
\end{align}
This allows us to use Chebyshev's inequality to obtain a lower bound on $Z$. Claim \ref{claim:moments} allows us to do the same for $A$ and $C$. It remains to find a lower bound on $B$. Note that $E_j$ is symmetrically distributed with mean zero. We now invoke the following technical claim, proven in Appendix \ref{app:testing-claims}:

\begin{claim}
\label{clm:symmetric-mean-zero}
Let $E_1, \dots, E_k$ be independent random variables where each $E_j$ is symmetrically distributed over the integers with mean zero. For any $d_1,\dots,d_k \in \R$, the random variable $\sum_{j=1}^k E_j \cdot d_j$ is symmetrically distributed with mean zero.
\end{claim}
Hence, $B$ is symmetric with mean 0 so that $\P{}{B\geq 0}\geq 1/2$.

As stated before, we apply Chebyshev's inequality to the pair \eqref{eq:ex}, \eqref{eq:var} and the pair \eqref{eq:ex-z-non}, \eqref{eq:var-z-non}. Then we use \eqref{eq:scale-concentration} and $\P{}{B\geq 0}\geq 1/2$ to conclude that the following is true except with probability $\leq 1/2 + 4/50=29/50$:
\begin{align}
Z' &> \ex{}{Z} - \sqrt{\frac{\ex{}{Z}^2}{100} \cdot 50} + \underbrace{\frac{k}{4m}\sum_{j=1}^k \ell_j - \sqrt{\frac{k^2}{8m^2}\sum_{j=1}^k \ell^2_j \cdot 50} }_{A} - \underbrace{\sqrt{\frac{k^2}{4m^2} \sum_{j=1}^k \ell_j \cdot 50} }_{C} \nonumber \\
&> \frac{29\alpha^2 m}{100} + \frac{k}{4m}\sum_{j=1}^k \ell_j - \frac{5k}{2m}\sqrt{\sum_{j=1}^k \ell^2_j } - \frac{5k}{\sqrt{2}m} \sqrt{ \sum_{j=1}^k \ell_j} \\
&\geq \frac{29\alpha^2 m}{100} + \frac{k^2}{2m} \cdot \lambda - \frac{5k^{3/2}}{m} \cdot \lambda - \frac{5k^{3/2}}{m} \lambda^{1/2} \label{eq:z'-lower}
\end{align}

We will prove that $\eqref{eq:z'-lower}> \eqref{eq:z'-upper}$. Since $\eqref{eq:z'-upper} = \tau$, this means the probability of erroneously reporting ``uniform'' is at most $29/50$.
\begin{align*}
&\eqref{eq:z'-lower} - \eqref{eq:z'-upper}\\
={}& \frac{29\alpha^2 m}{100} +\frac{k^2}{2m} \cdot \lambda - \frac{5k^{3/2}}{m} \cdot \lambda - \frac{5k^{3/2}}{m} \lambda^{1/2}\\
&- \frac{3\alpha^2 m}{250} - \frac{k^2}{2m} \cdot \lambda - \frac{5k^{3/2}}{m} \cdot \lambda - \frac{10 k}{m^{1/2}} \cdot \lambda^{1/2} - \frac{5k^{3/2}}{m} \lambda^{1/2}\\
={}& \frac{278\alpha^2 m}{1000} - \frac{10k^{3/2}}{m} \cdot \lambda - \frac{10 k}{m^{1/2}} \cdot \lambda^{1/2} - \frac{10k^{3/2}}{m} \lambda^{1/2}
\end{align*}
For some constant $h_{1}$, if $m> \frac{h_{1}}{\alpha}\cdot k^{3/4} \cdot \lambda^{1/2}$ then $\frac{278\alpha^2 m}{2000} - \frac{10k^{3/2}}{m} \cdot \lambda - \frac{10k^{3/2}}{m} \lambda^{1/2} > 0$. For some constant $h_{2}$, if $m > \frac{h_{2}}{\alpha^{4/3}} k^{2/3} \lambda^{1/3}$ then $\frac{278\alpha^2 m}{2000} - \frac{10 k}{m^{1/2}} \cdot \lambda^{1/2} > 0$. Adding these inequalities gives us $\eqref{eq:z'-lower} - \eqref{eq:z'-upper}>0$ which concludes the proof.
\end{proof}

\subsubsection{Final Uniformity Tester}
\label{sec:final-ut}
We now use a technique from~\citet{ACHST19} and ~\citet{AJM19} (itself a generalization of a similar technique from~\citet{ACFT19}) to reduce the sample complexity dependence on $k$ from $k^{3/4}$ to $k^{2/3}$. The idea is to reduce the size of the data universe $[k]$ by grouping random elements and then performing the test on the smaller universe $[\hat{k}]$. The randomized grouping also reduces testing distance --- partitions may group together elements with non-uniform mass to produce a group with near-uniform overall mass, thus hiding some of the original distance --- but the reduction in universe size outweighs this side effect.

We first introduce some notation. Given a partition $G$ of $[k]$ into $G_1,\dots,G_{\hat{k}}$ , let $\bD_G$ denote the distribution over $[\hat{k}]$ such that the probability of sampling $\hat{j}$ from $\bD_G$ is the probability that $j \in G_{\hat{j}}$ for $j\sim \bD$. Formally, $\P{}{\bD_G = \hat{j}} = \sum_{j \in G_{\hat{j} }} \P{}{\bD = j}$.

\begin{lemma}[Domain Compression \cite{ACHST19, AJM19}]
\label{lem:partition}
Let $\bD$ be a distribution over $[k]$ such that $\tv{\bD}{\bU}=\alpha$. If $G$ is a uniformly random of $[k]$ into $\hat{k}$ groups, then with probability $\geq 1/954$ over $G$,
$$\tv{\bD_G}{\bU} \geq \alpha \cdot \frac{\sqrt{\hat{k}} }{477\sqrt{10k}}.$$
\end{lemma}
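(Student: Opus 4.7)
The plan is to apply a Paley--Zygmund argument to $Y := 2\,\tv{\bD_G}{\bU}$. Writing $v_j := \P{}{\bD=j}-1/k$, we have $\sum_j v_j = 0$ and $\sum_j |v_j| = 2\alpha$. Since a balanced uniform partition into $\hat{k}$ groups assigns each block mass $1/\hat{k}$ under uniform, one can express $Y = \sum_{\hat{j}=1}^{\hat{k}} |W_{\hat{j}}|$ where $W_{\hat{j}} := \sum_{j \in G_{\hat{j}}} v_j$. The task then reduces to showing that this sum of $|W_{\hat{j}}|$ is large on a constant-probability event.

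The first step is a second-moment computation for each $W_{\hat{j}}$: using exchangeability of the indicators $\ind{j \in G_{\hat{j}}}$ (with marginal $1/\hat{k}$) and the cancellation $\sum_j v_j = 0$ on the cross-terms, I would obtain $\ex{G}{W_{\hat{j}}^2} = \Theta(\|v\|_2^2/\hat{k})$. The second step is a small-ball (anti-concentration) lower bound converting this into a first-moment bound: I would compute $\ex{G}{W_{\hat{j}}^4}$ by the same exchangeable combinatorics, verify $\ex{G}{W_{\hat{j}}^4} = O(\ex{G}{W_{\hat{j}}^2}^2)$, and apply Paley--Zygmund to $W_{\hat{j}}^2$ to conclude $\ex{G}{|W_{\hat{j}}|} \geq c_1\sqrt{\ex{G}{W_{\hat{j}}^2}}$ for an absolute constant $c_1$. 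Summing over $\hat{j}$ yields $\ex{G}{Y} \geq c_1\sqrt{\hat{k}}\,\|v\|_2$.

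To upgrade from expectation to probability, I would apply Paley--Zygmund a second time, now to $Y$ itself. Cauchy--Schwarz gives $Y^2 \leq \hat{k}\sum_{\hat{j}} W_{\hat{j}}^2$, so combined with the earlier second-moment calculation we get $\ex{G}{Y^2} \leq \Theta(\hat{k}\,\|v\|_2^2)$. Hence $\ex{G}{Y}^2/\ex{G}{Y^2} = \Omega(1)$, and Paley--Zygmund produces an absolute constant $c_2$ with $\P{G}{Y \geq \tfrac{1}{2}\ex{G}{Y}} \geq c_2$. Finally, on this event, Cauchy--Schwarz in the form $\|v\|_2 \geq \|v\|_1/\sqrt{k} = 2\alpha/\sqrt{k}$ yields $\tv{\bD_G}{\bU} = Y/2 \geq \Omega(\alpha\sqrt{\hat{k}/k})$, which is the claimed bound.

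The main obstacle is the small-ball step. Unlike a Rademacher-weighted sum, $W_{\hat{j}}$ is a centered signed sum over a uniformly chosen size-$(k/\hat{k})$ subset, so the indicators are weakly negatively correlated and one cannot appeal to Khintchine off-the-shelf; both the variance and fourth-moment computations must handle this dependence (effectively a hypergeometric rather than binomial setting). The explicit constants $1/954$ and $477\sqrt{10}$ in the statement come out of carefully propagating absolute constants through the two Paley--Zygmund applications (on $W_{\hat{j}}^2$ and on $Y$) together with the final Cauchy--Schwarz step converting $\|v\|_2$ into $\alpha/\sqrt{k}$.
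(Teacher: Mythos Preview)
The paper does not give its own proof of this lemma; it is quoted verbatim from \cite{ACHST19, AJM19} and used as a black box, so there is nothing in the paper to compare your argument against.

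That said, your outline is a correct and standard way to establish this type of domain-compression bound. The second-moment calculation $\ex{G}{W_{\hat j}^2} = \Theta(\|v\|_2^2/\hat k)$ goes through exactly as you describe once one uses $\sum_j v_j = 0$ to kill the cross terms; the fourth-moment bound $\ex{G}{W_{\hat j}^4} = O\big(\ex{G}{W_{\hat j}^2}^2\big)$ under hypergeometric (without-replacement) sampling is the only genuinely tedious step, and you correctly flag it. From there the two Paley--Zygmund applications and the final $\|v\|_2 \geq \|v\|_1/\sqrt{k}$ conversion are routine. This is essentially the argument used in the cited references, so your route is not a departure from the known proof so much as a reconstruction of it.

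One point to make explicit: your identity $2\,\tv{\bD_G}{\bU} = \sum_{\hat j} |W_{\hat j}|$ relies on $\bU_G$ coinciding with the uniform distribution on $[\hat k]$, which holds only when every block $G_{\hat j}$ has exactly $k/\hat k$ elements. You should therefore state up front that the ``uniformly random partition into $\hat k$ groups'' is taken to be a uniformly random \emph{equipartition} (with $\hat k \mid k$, or with the usual floor/ceiling adjustment). The cited sources make the same assumption, so this is a hypothesis to surface rather than a gap in the argument.
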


Applying this trick to reduce $[k]$ to $[\hat k]$ and then running our initial protocol $\cP_\ut = (\cR_\ut,\cA_\ut)$ on $[\hat k]$ with distance parameter $\hat \alpha = \alpha \tfrac{\sqrt{\hat k}}{477\sqrt{10k}}$ gives our final uniformity tester. The given asymptotic bound requires different values of $\hat k$ depending on parameter settings; these appear in the proof.

\begin{theorem}
\label{thm:ut-app-upper-bound}
Let $\gamma \in (0,1]$, $\eps > 0$, and $\alpha, \delta \in (0,1)$. There exists a protocol that is $(2\eps, 8\delta^\gamma, \gamma)$-robustly shuffle private and, when $\eps = O(1)$ and $\delta=o(1)$, solves $\alpha$-uniformity testing with sample complexity
$$m = O\left(\left(\frac{k^{2/3}}{\alpha^{4/3} \eps^{2/3}} + \frac{k^{1/2}}{\alpha \eps} + \frac{k^{1/2}}{\alpha^2} \right) \cdot \ln^{1/2} \left(\frac{k}{\delta}\right) \right).$$
\end{theorem}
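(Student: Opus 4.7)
The plan is to apply the domain-compression trick of Lemma~\ref{lem:partition} as a preprocessing step before invoking the preliminary tester $\cP_\ut = (\cR_\ut,\cA_\ut)$ from Theorem~\ref{thm:ut-upper-bound-prelim}. Using public randomness, we sample a uniformly random partition $G = (G_1, \dots, G_{\hat k})$ of $[k]$ into $\hat k$ equal-sized groups, where $\hat k$ is chosen below. Each user maps its sample $x_i \in [k]$ to the index $g(x_i) \in [\hat k]$ of its group and then runs $\cR_\ut$ on $g(x_i)$ with distance parameter $\hat\alpha := \alpha\sqrt{\hat k}/(477\sqrt{10\,k})$; the analyzer runs $\cA_\ut$ on the shuffled messages as if they were drawn from $[\hat k]$.

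Privacy is immediate from Theorem~\ref{thm:ut-upper-bound-prelim}: the map $g$ is a publicly known deterministic function $[k]\to[\hat k]$, so neighbors map to neighbors and the protocol inherits the $(2\eps, 8\delta^\gamma, \gamma)$-robust shuffle privacy of $\cP_\ut$. For correctness, the uniform case is trivial (equal-sized groups project $\bU_{[k]}$ to $\bU_{[\hat k]}$), while in the non-uniform case Lemma~\ref{lem:partition} gives $\tv{\bD_G}{\bU_{[\hat k]}}\geq\hat\alpha$ with probability at least $1/954$ over $G$; conditional on this event $\cP_\ut$ rejects correctly. An $O(1)$-fold parallel repetition over independent partitions -- outputting ``not uniform'' whenever any run rejects, after first driving the internal failure probability of $\cP_\ut$ well below $1/954$ by a constant-factor sample inflation -- boosts the overall success probability above $2/3$, at only constant-factor cost to the sample complexity and privacy parameters.

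What remains is to choose $\hat k$. Substituting $\hat\alpha = \Theta(\alpha\sqrt{\hat k/k})$ and $\ln(\hat k/\delta)\leq\ln(k/\delta)$ into the bound of Theorem~\ref{thm:ut-upper-bound-prelim} on domain $[\hat k]$ yields
\[
m(\hat k) = O\!\left(\frac{\hat k^{1/4}\sqrt{k}}{\alpha\eps}\sqrt{\ln(k/\delta)} \;+\; \frac{k^{2/3}}{\alpha^{4/3}\eps^{2/3}}\ln^{1/3}(k/\delta) \;+\; \frac{k}{\alpha^2\sqrt{\hat k}}\right),
\]
whose middle term is independent of $\hat k$ while the first and third move oppositely in $\hat k$. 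Balancing the first and third gives an interior optimum $\hat k^\star = \Theta\!\big(k^{2/3}\eps^{4/3}/(\alpha^{4/3}\ln^{2/3}(k/\delta))\big)$ at which all three terms collapse to $O\!\big(k^{2/3}/(\alpha^{4/3}\eps^{2/3})\cdot\sqrt{\ln(k/\delta)}\big)$, matching the first term of the claim (absorbing $\ln^{1/3}$ into $\ln^{1/2}$). The main obstacle I anticipate is the boundary analysis: when $\hat k^\star>k$ we set $\hat k=k$ and verify that the third term of $m(k)$ becomes $\sqrt{k}/\alpha^2$ and dominates, matching the third claim term; when $\hat k^\star<2$ we set $\hat k=\Theta(1)$, so the first term of $m(\hat k)$ becomes $\Theta(\sqrt{k}/(\alpha\eps)\sqrt{\ln(k/\delta)})$ -- matching the second claim term -- and the defining inequality $k\eps^2\lesssim\alpha^2\ln(k/\delta)$ rearranges exactly to $k/\alpha^2\leq\sqrt{k}/(\alpha\eps)\sqrt{\ln(k/\delta)}$, so the blown-up third term is absorbed into the second claim term. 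Taking the maximum over these regimes yields the stated sample complexity.
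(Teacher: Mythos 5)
Your proposal is correct and takes essentially the same route as the paper: domain compression (Lemma~\ref{lem:partition}) as a preprocessing step, privacy inherited from Theorem~\ref{thm:ut-upper-bound-prelim}, and a three-regime choice of $\hat k$ clamped to $[2,k]$ to balance the resulting sample-complexity terms. The one cosmetic difference is that you fold the $\ln^{2/3}(k/\delta)$ factor into the interior optimum $\hat k^\star$, whereas the paper drops the logarithmic factor from $\hat k$ and simply loosens all three terms to a common $\ln^{1/2}(k/\delta)$; both choices yield the same bound, and your boundary-regime checks (including the middle $k^{2/3}$ term you glossed over, which is indeed absorbed) go through.
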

\begin{proof}
The protocol assigns $\hat{k}$ according to the following rule:
\[
\hat{k} = \begin{cases}
2 & \textrm{if} \frac{k^{2/3}\eps^{4/3}}{\alpha^{4/3}} < 2\\
k & \textrm{if} \frac{k^{2/3}\eps^{4/3}}{\alpha^{4/3}} > k\\
\frac{k^{2/3}\eps^{4/3}}{\alpha^{4/3}} & \textrm{otherwise}
\end{cases}
\]

The users and analyzer determine the partition $G$ using public randomness. Users execute $\cR_\ut$ as if the data were drawn from $\bD_G$ and the analyzer executes $\cA_\ut$. Privacy is immediate from Theorem \ref{thm:ut-upper-bound-prelim}, so it remains to argue that the protocol is accurate for large enough $m$.

\underline{Sample Complexity}: Recall that we set $\hat \alpha = \alpha \tfrac{\sqrt{\hat k}}{477\sqrt{10k}}$. We thus bound $m$ using the sample complexity guarantee of Theorem~\ref{thm:ut-upper-bound-prelim}:
\begin{align*}
m &= O\left(\frac{\hat k^{3/4}}{\hat \alpha \eps} \ln^{1/2}\left(\frac{\hat k}{\delta}\right) + \frac{\hat k^{2/3}}{\hat \alpha^{4/3} \eps^{2/3}} \ln^{1/3}\left(\frac{\hat k}{\delta}\right) + \frac{\hat k^{1/2}}{\hat\alpha^2}  \right)\\
	&= O\left(\left(\frac{\hat k^{3/4}}{\hat \alpha \eps} + \frac{\hat k^{2/3}}{\hat \alpha^{4/3} \eps^{2/3}} + \frac{\hat k^{1/2}}{\hat\alpha^2}  \right)\cdot \ln^{1/2}\left(\frac{k}{\delta}\right) \right) \tag{$\hat k \leq k$}\\
	&= O\left( \left( \underbrace{\frac{k^{1/2} \hat{k}^{1/4} }{\alpha\eps } }_{T_1} + \underbrace{\frac{k^{2/3}}{ \alpha^{4/3}\eps^{2/3}} }_{T_2} + \underbrace{\frac{k}{\alpha^2\hat{k}^{1/2}} }_{T_3} \right) \ln^{1/2} \left(\frac{k}{\delta}\right) \right) \tag{Value of $\hat \alpha$}
\end{align*}
We split into cases based on $\hat k$. They yield, respectively, the $\tfrac{k^{1/2}}{\alpha \eps}$, $\tfrac{k^{1/2}}{\alpha^2}$, and $\tfrac{k^{2/3}}{\alpha^{4/3}\eps^{2/3}}$ terms for $O(T_1+T_2+T_3)$.

\underline{Case 1}: $\hat k = 2$. Then $\tfrac{k^{2/3} \eps^{4/3}}{\alpha^{4/3}}  < 2$, so $k^{1/2} = O\left(\tfrac{\alpha}{\eps}\right)$ and $k^{1/6} = O\left(\tfrac{\alpha^{1/3}}{\eps^{1/3}}\right)$. Thus,
\begin{align*}
	T_1 + T_2 + T_3 =&\ O\left(\frac{k^{1/2}}{\alpha \eps} + \frac{k^{1/2}\cdot k^{1/6}}{\alpha^{4/3}\eps^{2/3}} + \frac{k^{1/2}k^{1/2}}{\alpha^2} \right) \\
	=&\ O\left(\frac{k^{1/2}}{\alpha \eps} + \frac{k^{1/2}}{\alpha \eps} + \frac{k^{1/2}}{\alpha \eps}\right) \\
	=&\ O\left(\frac{k^{1/2}}{\alpha\eps}\right).
\end{align*}

\underline{Case 2}: $\hat k = k$. This means $k < \tfrac{k^{2/3}\eps^{4/3}}{\alpha^{4/3}}$, so $k^{3/4} < \tfrac{k^{1/2}\eps}{\alpha}$ and $k^{1/6} < \tfrac{\eps^{2/3}}{\alpha^{2/3}}$. Thus,
\begin{align*}
T_1 + T_2 + T_3 =&\ O\left(\frac{k^{3/4}}{\alpha \eps} + \frac{k^{2/3}}{\alpha^{4/3}\eps^{2/3}} + \frac{k^{1/2}}{\alpha^2} \right) \\
	=&\ O\left(\frac{k^{1/2}}{\alpha^2} + \frac{k^{2/3}}{\alpha^{4/3}\eps^{2/3}} \right) \\
	=&\ O\left(\frac{k^{1/2}}{\alpha^2} + \frac{k^{1/2} \cdot k ^{1/6}}{\alpha^{4/3}\eps^{2/3}} \right) \\
	=&\ O\left(\frac{k^{1/2}}{\alpha^2}\right).
\end{align*}

\underline{Case 3}: $\hat{k} = \tfrac{k^{2/3} \eps^{4/3}}{\alpha^{4/3}}$. By substitution,
\begin{align*}
T_1 + T_2 + T_3 =&\ O\left(\frac{k^{1/2} (k^{2/3} \eps^{4/3} \alpha^{-4/3})^{1/4} }{\alpha\eps } + \frac{k^{2/3}}{\alpha^{4/3}\eps^{2/3}} + \frac{k}{\alpha^2 (k^{2/3} \eps^{4/3} \alpha^{-4/3})^{1/2}}\right) \\
	=&\ O\left(\frac{k^{2/3} }{\alpha^{4/3}\eps^{2/3} } \right)
\end{align*}

\underline{Correctness}: We apply Theorem \ref{thm:ut-upper-bound-prelim} (with a constant number of repetitions) to get
\begin{itemize}
    \item If $\vec{x}\sim \bD_G^n$ where $\bD_G=\bU$, then $\P{}{\cP_\ut(\vec{x}) = \textrm{``uniform''}} \geq 9539/9540$, and
    \item If $\vec{x}\sim \bD_G^n$ where $\tv{\bD_G}{\bU} > \hat{\alpha}$, then $\P{}{\cP_\ut(\vec{x}) = \textrm{``not uniform''}} \geq 9539/9540$
\end{itemize}
If $\bD=\bU$, then $\bD_G=\bU$ and so the probability of ``uniform'' is $\geq 9539/9540$. If $\tv{\bD}{\bU}>\alpha$ then by Lemma~\ref{lem:partition}, with probability $\geq 10/9540$, $\tv{\bD_G}{\bU} > \hat \alpha$. By Theorem~\ref{thm:ut-upper-bound-prelim}, with probability $\geq 5/9540$, the tester returns ``non-uniform''. This constant separation gives the overall testing guarantee.
\end{proof}

\subsection{Lower Bound for Robust Pure Shuffle Privacy}
To obtain a lower bound for robust pure shuffle privacy, we first show how to transform a robustly shuffle private uniformity tester into a pan-private uniformity tester. The main idea of this transformation is the same as that for distinct elements. We initialize the pan-private algorithm's state using dummy data, handle new stream elements as shuffle protocol users contributing to a growing pool of (repeatedly) shuffled messages, and add more dummy data to the internal state at the end of the stream. Here, the dummy data consists of samples from a uniform distribution. This has the effect of diluting the true samples and worsens the testing accuracy, but to a controlled extent. Pseudocode for this procedure appears in Algorithm~\ref{alg:ut-transformation}.

\begin{algorithm}

\caption{$\cQ_\cP$, an $(\eps,\delta)$-pan-private tester built from a $1/3$-robust $(\eps,\delta)$-shuffle private tester}

\label{alg:ut-transformation}


\KwIn{Data stream $\vec{x}\in[k]^{n/3}$; shuffle-private uniformity tester $\cP=(\cR,\cA)$}

\KwOut{Decision in $\{ \textrm{``uniform''}, \textrm{``not uniform''} \}$}

Draw uniform samples $\vec{x}_{\bU} \sim \bU^{n/3}$

Initialize internal state $I_0 \gets (\cS \circ \cR^{n/3})(\vec{x}_{\bU})$

Draw $n' \sim \Bin(n,2/9)$

Set $n' \gets \min(n',n/3)$

\For{$i\in[n/3]$}{
    \If{$i \leq n'$}{Set $I_i \gets \cS(I_{i-1} ,\cR(x_i))$}
    \Else{Set $I_i \gets \cS(I_{i-1} , \cR(\bU))$}
}

Draw (new) uniform samples $\vec{x}_{\bU} \sim \bU^{n/3}$

Set final state $\vec{y} \gets \cS(I_{n/3}, \cR^{n/3}(\vec{x}_{\bU}))$

\Return{$\cA(\vec{y})$}

\end{algorithm}

\begin{lemma}
\label{lemma:ut-transformation}
Let $\cP=(\cR, \cA)$ be an $(\eps,\delta,1/3)$-robustly shuffle private $\alpha$-uniformity tester with sample complexity $n$. If $n$ is larger than some absolute constant, then $\cQ_\cP$ described in Algorithm~\ref{alg:ut-transformation} is an $(\eps,\delta)$-pan-private algorithm that solves $\tfrac{9\alpha}{2}$ uniformity testing with sample complexity $n/3$.
\end{lemma}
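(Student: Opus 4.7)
The plan is to follow the blueprint of Lemma~\ref{lem:de-transformation} for privacy and analyze a mixture distribution for accuracy. For privacy, fix an intrusion time $t$ and neighboring streams $\vec{x} \sim \vec{x}\,'$ differing at coordinate $j$. The joint distribution $(I_t, \cQ_\cP(\vec{x}))$ is a post-processing of $(I_t, I_{>t})$, where $I_{>t}$ denotes the shuffled collection of randomized messages produced after time $t$, including the final $n/3$ uniform samples. If $j \leq t$, the changed coordinate affects only $I_t$, but the final $n/3$ uniform randomizations are independent of $\vec{x}$ and constitute at least a $1/3$ fraction of the randomizer outputs feeding into $I_{>t}$; the robust shuffle privacy of $\cP$ then gives $(\eps,\delta)$-indistinguishability of $I_{>t}$ while $I_t$ has the same distribution under $\vec{x}$ and $\vec{x}\,'$. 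A symmetric argument, invoking the initial $n/3$ uniform samples that sit inside $I_t$, handles $j > t$. Joint $(\eps,\delta)$-indistinguishability of $(I_t, I_{>t})$ then yields joint pan-privacy of $(I_t, \cQ_\cP(\vec{x}))$ by post-processing.

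For accuracy, the plan is to define the mixture $\bP := \tfrac{2}{9}\bD + \tfrac{7}{9}\bU$ and show that the input to $\cP$ is close in distribution to $n$ i.i.d.\ draws from $\bP$. The combined dataset sent to $\cP$ consists of $n/3$ initial uniform samples, $n'$ stream samples, $n/3 - n'$ uniform padding, and $n/3$ final uniform samples, for a total of $n'$ messages drawn from $\bD$ and $n - n'$ from $\bU$. Since $n$ i.i.d.\ samples from $\bP$ contain exactly $\Bin(n, 2/9)$ $\bD$-samples with the remainder uniform, and since the shuffler erases positional identity, conditioned on the cap $n' \leq n/3$ not activating the input to $\cP$ is distributed identically to $n$ i.i.d.\ draws from $\bP$. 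A Chernoff bound yields $\Pr[\Bin(n, 2/9) > n/3] \leq e^{-\Omega(n)}$, which is below any desired constant (say $1/100$) once $n$ exceeds some absolute constant.

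Correctness then follows by cases. If $\bD = \bU$, then $\bP = \bU$, so $\cP$ outputs ``uniform'' with probability at least $2/3 - 1/100$. If $\tv{\bD}{\bU} > 9\alpha/2$, then $\tv{\bP}{\bU} = \tfrac{2}{9}\tv{\bD}{\bU} > \alpha$, so $\cP$ outputs ``not uniform'' with probability at least $2/3 - 1/100$; the small loss is absorbed into the definition of the testing success constant (or by a constant number of repetitions). The main obstacle is on the privacy side: one must carefully verify that at every intrusion time, one of the two blocks of uniform dummy samples always supplies the $n/3$ protocol-following randomizer outputs that robust shuffle privacy of $\cP$ requires, so that neither the internal-state view nor the post-intrusion view depends non-privately on the stream. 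The accuracy side is then essentially a mixture-distribution coupling argument, exact up to the rare cap event.
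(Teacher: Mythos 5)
Your accuracy argument is essentially the same as the paper's: introduce the mixture $\bP := \tfrac{2}{9}\bD + \tfrac{7}{9}\bU$, couple the input fed to $\cP$ with $n$ i.i.d.\ draws from $\bP$, and bound the coupling error by the probability that $\Bin(n,2/9) > n/3$. (The paper uses the constant $1/6$ and loosely bounds the non-uniform error probability by $1/2$ to get a constant separation, while you push the coupling error down to $1/100$; either works. The paper also avoids any coupling loss in the $\bD=\bU$ case by noting $\cQ_\cP(\bU^{n/3})$ is \emph{exactly} $\cP(\bU^n)$, since then every message is a uniform randomization regardless of $n'$.) Your phrasing ``conditioned on the cap not activating, the input to $\cP$ is distributed identically to $n$ i.i.d.\ draws from $\bP$'' is not literally what you want --- the conditional distribution of $\bP^n$ given few $\bD$-draws is not the unconditional one --- but the intended coupling bound $\tv{\cQ_\cP(\bD^{n/3})}{\cP(\bP^n)} \le \Pr[\Bin(n,2/9)>n/3]$ is correct and is what the paper uses.

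However, you have the two privacy cases swapped. If the differing index $j$ satisfies $j \le t$, then the change lives in the messages that entered the state \emph{before} the intrusion, so it is $I_t$ (not $I_{>t}$) whose distribution changes, and $I_{>t}$ is identically distributed under $\vec{x}$ and $\vec{x}\,'$; the indistinguishability of $I_t$ must come from the \emph{initial} $n/3$ uniform dummy randomizations sitting inside $I_0 \subseteq I_t$, via robust shuffle privacy. Conversely, if $j > t$, then $I_t$ is unchanged and it is $I_{>t}$ that needs indistinguishability, supplied by the \emph{final} $n/3$ dummy randomizations. As written, you assert that when $j\le t$ the state $I_t$ ``has the same distribution'' and that $I_{>t}$ needs the privacy argument, which is backwards and, taken literally, proves nothing in that case. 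The overall decomposition into $(I_t, I_{>t})$, the post-processing step, and the use of the two dummy blocks to keep an honest $1/3$ fraction alive at every intrusion time are all the right ideas (and match the paper's Lemma~\ref{lem:de-transformation}, to which the paper's own proof simply defers); the fix is just to match each case to the correct dummy block and to the correct component of $(I_t, I_{>t})$.
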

\begin{proof}
\underline{Privacy}: The privacy argument is nearly identical to that used to prove Lemma~\ref{lem:de-transformation}. The only difference is that the dummy data now consists of uniform samples. However, since this dummy data is still independent of the true data, we can apply the same argument to translate robust shuffle privacy into pan-privacy.

\underline{Accuracy}: Assuming that $\cP$ errs with probability at most $1/3$, we will prove that the algorithm $\cQ_\cP$ errs with probability at most 1/3 on data drawn from $\bU$. Then we bound the error probability by 1/2 when data is drawn from a distribution that is $\tfrac{9\alpha}{2}$-far from $\bU$. This constant separation implies a valid tester.

Let $\cP(\vec{x})$ denote the output distribution of the original shuffle private uniformity tester, and let $\cQ_\cP(\vec{x})$ denote the output distribution for the pan-private uniformity tester given in Algorithm~\ref{alg:ut-transformation} on stream $\vec{x}$. If $\vec{x}$ consists of uniform samples, then $\cQ_\cP(\bU^{n/3}) = \cA((\cS\circ \cR)(\bU^n)) = \cP(\bU^n)$ and so
$$\P{\vec x, \cQ_\cP}{\cQ_\cP \text{ outputs ``not uniform''}} = \P{\vec x, \cP}{\cP \text{ outputs ``not uniform''}} \leq 1/3.$$

Having upper bounded the probability that $\cQ_\cP$ errs on uniform samples, we now control the probability that it errs on non-uniform samples. Suppose $\vec{x}$ consists of samples from $\bD$ where $\tv{\bD}{\bU} > \tfrac{9\alpha}{2}$. Define the mixture distribution $\bD_{2/9} := \frac{2}{9}\cdot \bD + \frac{7}{9}\cdot \bU$. Then $\tv{\bD_{2/9}}{\bU} > \alpha$ and so $\P{}{\cP(\bD^n_{2/9}) = \textrm{``uniform''}}\leq 1/3$. We will now show that $\cP(\bD^n_{2/9})$ and $\cQ_\cP(\bD^{n/3})$ are statistically close enough that $\cQ_\cP$ has a bounded error probability on non-uniform samples as well.

In $n$ samples from $\bD_{2/9}$, the number of samples drawn from $\bD$ is distributed as $\Bin(n,2/9)$. By a binomial Chernoff bound, for $n > \sqrt{\ln(12)}$, $\P{}{\Bin(n,2/9) > n/3} < \tfrac{1}{6}$. Thus the probability that $n'$ is not distributed as $\Bin(n,2/9)$ is less than $1/6$. In turn, the distance between
\[
\cP(\bD^n_{2/9})= \cA(\cS(\overbrace{ \underbrace{\cR(\bD), \dots, \cR(\bD)}_{\Bin(n,2/9) ~\textrm{copies}}, \cR(\bU),\dots,\cR(\bU) }^{n~\textrm{terms}}))
\]
and
\[
\cQ_{\cP}(\bD^{n/3})= \cA(\cS(\overbrace{ \underbrace{\cR(\bD), \dots, \cR(\bD)}_{n' ~\textrm{copies}}, \cR(\bU),\dots,\cR(\bU) }^{n~\textrm{terms}} ))
\]
is less than $1/6$ as well. Tracing back, we have shown that given samples from sufficiently non-uniform $\bD$, $\P{}{\cQ_\cP \text{ outputs ``uniform''}} \leq 1/3 + 1/6 = 1/2$.
\end{proof}

Next, we recall the pan-private lower bound for uniformity testing.
\begin{lemma}[Theorem 3 from~\citet{AJM19}]
\label{lem:ut-lower-bound-pan}
For $\eps=O(1)$ and $\alpha < 1/2$, any $\eps$-pan-private $\alpha$-uniformity tester has sample complexity
$$\Omega\left(\frac{k^{2/3}}{\alpha^{4/3} \eps^{2/3}}+ \frac{\sqrt{k}}{\alpha^2}+  \frac{1}{\alpha \eps}\right).$$
\end{lemma}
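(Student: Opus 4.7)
My plan is minimal, since this lemma is imported verbatim as Theorem 3 of \cite{AJM19}, so strictly speaking the paper only needs to cite it. That said, I would describe the technique that establishes it. The bound has three additive terms, each arising from a distinct source, and I would justify them separately rather than attempting a single unified argument.

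For the $\sqrt{k}/\alpha^2$ term, I would invoke Paninski's classical non-private lower bound: consider the family of distributions that perturb the uniform mass $1/k$ by $\pm 2\alpha/k$ according to a uniformly random binary pattern on the $k/2$ pairs of elements, and observe that no tester (private or not) can distinguish a random element of this family from $\bU$ with $o(\sqrt{k}/\alpha^2)$ samples, via a standard $\chi^2$ argument on the induced mixture of product distributions. For the $1/(\alpha\eps)$ term, I would note that pan-privacy implies $(\eps,0)$-central differential privacy for the output alone (by ignoring the internal-state intrusion), and then invoke the standard DP packing lower bound for distinguishing distributions spaced $\alpha$ apart in TV distance.

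The substantive term is $k^{2/3}/(\alpha^{4/3}\eps^{2/3})$, which genuinely uses pan-privacy rather than just central DP. Still using the Paninski family, the approach is to pick an intrusion time $t \approx n/2$ and treat the internal state $I_t$ and the final output as two differentially private ``sketches,'' one summarizing the first half of the stream and one summarizing the second. Coordinate-wise, under $\bU$ the count of each $j \in [k]$ in each half has variance $\Theta(m/k)$, while any $\eps$-DP view of that half can reveal the count only up to additive noise of order $1/\eps$. Combining the two noisy coordinate-wise sketches via a $\chi^2$ or mutual-information tensorization across the $k$ coordinates yields an advantage that scales as $m^3 \alpha^4 \eps^2 / k^2$, and setting this to a constant produces the claimed $k^{2/3}/(\alpha^{4/3}\eps^{2/3})$ rate.

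The main obstacle is precisely this decomposition step. Naively treating the entire stream as a single $\eps$-DP view (without using the intrusion) only recovers the weaker central-DP rate of order $\sqrt{k}/(\alpha\sqrt{\eps})$, so the proof must exploit that the mid-stream internal state is ``thinner'' than a full central-DP view of the whole stream would be, and that \emph{both} the state and the output are simultaneously DP-constrained. Carefully bookkeeping how much information survives through both channels, per coordinate and summed across the $k$ coordinates, is where the $k^{2/3}$ exponent emerges, and the rest of the argument is routine concentration and $\chi^2$ manipulation.
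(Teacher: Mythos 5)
The paper imports this lemma verbatim from \citet{AJM19} and offers no proof of its own, so your identification of it as a pure citation is exactly right; your optional sketch of the three-term structure (Paninski for $\sqrt{k}/\alpha^2$, central-DP packing for $1/(\alpha\eps)$, and an intrusion-time decomposition into two DP sketches for the $k^{2/3}/(\alpha^{4/3}\eps^{2/3})$ term) is a reasonable high-level account of the technique in that reference. Since the paper provides nothing beyond the citation, there is no in-text proof to compare against.
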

Together, Lemmas~\ref{lemma:ut-transformation} and~\ref{lem:ut-lower-bound-pan} imply our lower bound for pure robustly shuffle private uniformity testing. 
\begin{theorem}
\label{thm:ut-pure-lower-bound}
For $\eps=O(1)$ and $\alpha < 1/9$, any $(\eps,0,1/3)$-robustly shuffle private protocol $\alpha$-uniformity tester has sample complexity
$$\Omega \left( \frac{k^{2/3}}{\alpha^{4/3} \eps^{2/3}}+ \frac{\sqrt{k}}{\alpha^2} + \frac{1}{\alpha \eps} \right).$$
\end{theorem}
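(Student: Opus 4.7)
The proof is a direct composition of the two lemmas immediately preceding the statement, so the plan is essentially bookkeeping.

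First, I would assume for contradiction (or simply by contrapositive) that there exists an $(\eps, 0, 1/3)$-robustly shuffle private protocol $\cP = (\cR, \cA)$ that solves $\alpha$-uniformity testing with sample complexity $n$. Applying Lemma~\ref{lemma:ut-transformation} with $\delta = 0$ to $\cP$ yields an online algorithm $\cQ_\cP$ that is $(\eps, 0)$-pan-private, i.e.\ $\eps$-pan-private, and that solves $(9\alpha/2)$-uniformity testing with sample complexity $n/3$, provided $n$ exceeds the absolute constant stated there. (If $n$ is smaller than this constant, the claimed lower bound is vacuous for any reasonable parameter regime, so I would dispatch that case separately by noting that the $\Omega$ bound only asserts something for sufficiently large sample complexities.)

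Next, I would verify that the hypotheses of Lemma~\ref{lem:ut-lower-bound-pan} are satisfied for $\cQ_\cP$ with distance parameter $\alpha' := 9\alpha/2$. The requirement $\alpha' < 1/2$ becomes $\alpha < 1/9$, which is exactly the hypothesis of Theorem~\ref{thm:ut-pure-lower-bound}, and $\eps = O(1)$ is assumed. Thus Lemma~\ref{lem:ut-lower-bound-pan} applies and gives
\[
n/3 = \Omega\!\left( \frac{k^{2/3}}{(9\alpha/2)^{4/3} \eps^{2/3}} + \frac{\sqrt{k}}{(9\alpha/2)^2} + \frac{1}{(9\alpha/2)\,\eps} \right).
\]
Absorbing the universal constants arising from the factor $9/2$ into the $\Omega(\cdot)$ and multiplying through by $3$ gives the claimed lower bound on $n$.

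There is essentially no obstacle: the nontrivial work has been carried out in Lemma~\ref{lemma:ut-transformation} (the reduction from robust shuffle to pan-privacy via initial/final uniform padding and the mixture-distribution argument that controls the blowup in testing distance to a factor $9/2$) and in Lemma~\ref{lem:ut-lower-bound-pan} (the pan-private lower bound of \citet{AJM19}). The only points to double-check are that $\delta = 0$ is preserved by the transformation (it is, since the transformation's privacy argument uses the shuffle privacy guarantee verbatim without introducing additional $\delta$), and that the constant $9/2$ distortion in $\alpha$ is harmless because it only multiplies each term of the pan-private lower bound by a universal constant.
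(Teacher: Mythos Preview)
Your proposal is correct and matches the paper's approach exactly: the paper simply states that Lemmas~\ref{lemma:ut-transformation} and~\ref{lem:ut-lower-bound-pan} together imply Theorem~\ref{thm:ut-pure-lower-bound}, and your bookkeeping (checking $\delta=0$ is preserved, that $\alpha<1/9$ yields $9\alpha/2<1/2$, and absorbing constants) fills in precisely the details the paper leaves implicit.
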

Note that the lower bound of Theorem~\ref{thm:ut-pure-lower-bound} is not directly comparable to the upper bound of Theorem~\ref{thm:ut-app-upper-bound}, since the former only applies to robust pure shuffle privacy while the latter only satisfies robust approximate shuffle privacy. This in turn is because a lower bound is only known for pure pan-privacy. We note that Lemma~\ref{lemma:ut-transformation} would also apply to an approximate pan-private lower bound.
\section{Pan-private Histograms}
\label{sec:histogram}
We now depart from our previous results by transforming a shuffle private protocol into a pan-private algorithm. Specifically,~\citet{BC20} gave a shuffle private protocol for estimating histograms with error independent of the domain size. Their protocol relies on adding noise from a binomial distribution for privacy; in Appendix \ref{sec:robust-hist}, we show that this strategy is also robust. We then show that a pan-private analogue using binomial noise achieves the same error.

The main building block in our histogram algorithm is a counting algorithm $\cQ_\zsum$ (Algorithm \ref{alg:zerocount-pan}). To count a sum of bits, $\cQ_\zsum$ adds binomial noise to its counter before beginning the stream, updates the counter deterministically for each bit in the stream, adds more binomial noise at the end of the stream, and releases the resulting noisy count. Because binomial noise is bounded, we can always bound the final error.

\begin{algorithm}
\caption{An online algorithm $\cQ_\zsum$ for binary sums}
\label{alg:zerocount-pan}

\KwIn{Data stream $\vec{x}\in\zo^n$; parameter $\lambda \in \N$}
\KwOut{$z \in \N$}

Draw $I_0 \sim \Bin(\lambda,1/2)$

\For{$i\in[n]$}{
	$I_i \gets I_{i-1} + x_i$
}

Draw $\eta \sim \Bin(\lambda,1/2)$

Set $\tc \gets I_n + \eta$

\Return{$\tc$}
\end{algorithm}

\begin{theorem}
\label{theorem:zerocount-pan}
	Given $\eps > 0$, $\delta \in (0,1)$, and $\lambda \geq 20\cdot \left(\frac{e^\eps+1}{e^\eps - 1}\right)^2 \ln\left(\tfrac{2}{\delta}\right)$, $\cQ_\zsum$ is $(\eps,\delta)$-pan-private and computes binary sums with error $\leq 2\lambda = O\left(\tfrac{1}{\eps^2}\log\left(\tfrac{1}{\delta}\right)\right)$.
\end{theorem}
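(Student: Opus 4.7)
The plan is to reduce both pan-privacy and accuracy to the additive structure of the algorithm. Observe that for any intrusion time $t$, the internal state and final output decompose as $I_t = I_0 + \sum_{i \leq t} x_i$ and $\tc = I_t + \sum_{i > t} x_i + \eta$, where $I_0, \eta \sim \Bin(\lambda, 1/2)$ are independent. The adversary sees $(I_t, \tc)$; since $(I_t, \tc)$ is a deterministic invertible function of $(I_t, \tc - I_t)$, by post-processing (Fact~\ref{fact:post}) it suffices to prove $(\eps, \delta)$-DP for the reparametrized pair. The point of the reparametrization is that $I_t$ depends on the stream only through $I_0$ and $\sum_{i \leq t} x_i$, while $\tc - I_t = \sum_{i > t} x_i + \eta$ depends on the stream only through $\eta$ and $\sum_{i > t} x_i$; because $I_0$ and $\eta$ are independent, so are these two components.

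For the pan-privacy argument, I would fix neighboring streams $\vec{x} \sim \vec{x}\,'$ differing at index $j$, and split on whether $j \leq t$ or $j > t$. In the first case, $\tc - I_t$ is identically distributed under both streams (the post-$t$ portion is unchanged), whereas $I_t$ is a 1-sensitive function of $\sum_{i \leq t} x_i$ perturbed by $\Bin(\lambda, 1/2)$ noise, which is $(\eps, \delta)$-DP by Lemma~\ref{lem:b_noise} given the hypothesis on $\lambda$. In the second case, the roles swap: $I_t$ is identically distributed, and $\tc - I_t$ receives the fresh $\Bin(\lambda,1/2)$ noise $\eta$ that protects a 1-sensitive quantity, again giving $(\eps, \delta)$-DP by Lemma~\ref{lem:b_noise}. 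In either case, independence of the two components plus identical distribution of one of them implies the product is $(\eps,\delta)$-DP, closing the pan-privacy bound.

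Accuracy is immediate and deterministic: $\tc - \sum_{i=1}^n x_i = I_0 + \eta$, and since each of $I_0, \eta \in [0, \lambda]$ with probability one, the error lies in $[0, 2\lambda]$. Substituting the stated lower bound on $\lambda$ yields the asymptotic $O(\eps^{-2} \log(1/\delta))$ error.

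The main obstacle I anticipate is writing the ``one component DP, other component identically distributed implies joint DP'' step carefully, since privacy guarantees do not in general tensorize. The clean way to dispatch it is to integrate against the unchanged marginal: for any measurable $T$, $\Pr[(I_t, \tc - I_t) \in T] = \int \Pr[I_t \in T_y]\, p_{\tc - I_t}(y)\, dy$, and applying the $(\eps,\delta)$-DP guarantee pointwise inside the integral gives the joint bound because the distribution of $\tc - I_t$ is the same under $\vec{x}$ and $\vec{x}\,'$. Everything else reduces to two bookkeeping invocations of Lemma~\ref{lem:b_noise}.
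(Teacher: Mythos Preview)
Your proposal is correct and follows essentially the same approach as the paper: reparametrize the adversary's view into two independent pieces (the state up to time $t$ and the remainder), case-split on which side of $t$ the differing index lies, and invoke Lemma~\ref{lem:b_noise} on the affected piece while the other remains identically distributed. You are in fact more explicit than the paper, which simply points back to the analogous argument in Lemma~\ref{lem:de-transformation}; your careful handling of the ``one component DP, other identical'' step via integrating against the unchanged marginal is exactly the right way to make that rigorous.
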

\begin{proof}
	\underline{Privacy}: The basic idea of the proof is that, by Lemma~\ref{lem:b_noise}, the first draw of binomial noise ensures privacy the internal state view, and the second draw of binomial noise ensures privacy for the output view. Substituting in Lemma~\ref{lem:b_noise}, the privacy analysis is almost identical to that for Lemma~\ref{lem:de-transformation}.

	\underline{Accuracy}: For any stream $\vec{x}$, we have $\tc = \sum_{i=1}^n x_i + \Bin(2\lambda,1/2)$, which has support ranging from $\sum_{i=1}^n x_i$ to $\sum_{i=1}^n x_i + 2\lambda$. 
\end{proof}

With $\cQ_\zsum$ in hand, our histogram algorithm is simple: we run $\cQ_\zsum$ on each value in the data domain. Crucially, changing one stream element only changes the true value of at most two bins in the histogram, so by composition the resulting algorithm's privacy guarantee is within a factor of two of $\cQ_\zsum$. Since each count is $2\lambda$-accurate, we get the same $\ell_\infty$ accuracy. This improves over a naive solution that adds Laplace noise to each bin, which incurs a logarithmic dependence on the domain size.

\begin{theorem}
\label{theorem:hist-pan}
Given $\eps > 0$, $\delta \in (0,1)$, and $\lambda \geq 20\cdot \left(\frac{e^\eps+1}{e^\eps - 1}\right)^2 \ln\left(\tfrac{2}{\delta}\right)$, the histogram algorithm described above is $(2\eps, 2\delta)$-pan-private and computes a histogram with $\ell_\infty$ error at most $2\lambda = O\left(\tfrac{1}{\eps^2}\log\left(\tfrac{1}{\delta}\right)\right)$.
\end{theorem}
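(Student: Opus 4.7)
The plan is to observe that the histogram algorithm is simply a parallel composition of $k$ copies of $\cQ_\zsum$, one per element of the data universe. The $j$-th copy is fed the indicator stream $(\ind{x_i = j})_{i=1}^n$, so its output is an estimate of the count of bin $j$. The two claims of the theorem—pan-privacy and $\ell_\infty$ accuracy—then follow separately from properties of $\cQ_\zsum$ already proved in Theorem~\ref{theorem:zerocount-pan}.

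For privacy, I would fix neighboring streams $\vec{x}$ and $\vec{x}\,'$ that differ at position $i^*$, say with $x_{i^*} = u$ and $x'_{i^*} = v$. The indicator streams for bins $j \notin \{u,v\}$ are identical under $\vec{x}$ and $\vec{x}\,'$, so the corresponding $\cQ_\zsum$ instances produce identically distributed (internal state, output) pairs and contribute nothing to the privacy loss. Only the two instances for bins $u$ and $v$ see neighboring indicator streams, and each is $(\eps,\delta)$-pan-private by Theorem~\ref{theorem:zerocount-pan}. Since all $k$ instances use independent randomness, the joint distribution of (internal states at time $t$, final outputs) across all bins is a product distribution, and the pan-privacy definition (Definition~\ref{def:pan}) composes exactly like central differential privacy on such product views. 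Applying Fact~\ref{fact:comp} to the two affected bins gives $(2\eps, 2\delta)$-pan-privacy. This composition step is the main thing to verify carefully; it is straightforward because pan-privacy is precisely central differential privacy of the pair $(\cQ_\In(\vec{x}_{\leq t}), \cQ_\Ou(\cQ_\In(\vec{x})))$, and the product of independent mechanisms satisfies central composition.

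For accuracy, the work is already done by $\cQ_\zsum$. Each bin's output equals the true bin count plus a sample from $\Bin(2\lambda, 1/2)$, which is supported on $\{0,1,\dots,2\lambda\}$. Hence every bin estimate is within $2\lambda$ of the true count deterministically, so the $\ell_\infty$ error of the histogram is at most $2\lambda$. Plugging in the prescribed value of $\lambda$ yields the claimed $O\!\left(\tfrac{1}{\eps^2}\log(1/\delta)\right)$ bound. I do not foresee any real obstacle: the only non-routine ingredient is verifying that pan-privacy composes over independent parallel mechanisms, which I would spell out by unfolding the definition on the joint (state,output) tuple and noting that only the two affected copies incur a privacy cost.
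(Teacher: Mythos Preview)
Your proposal is correct and matches the paper's approach exactly: the paper's argument (given in the paragraph preceding the theorem) is precisely that changing one stream element affects at most two bins, so composition over the two affected copies of $\cQ_\zsum$ yields $(2\eps,2\delta)$-pan-privacy, and the deterministic $2\lambda$ bound on each bin gives the $\ell_\infty$ guarantee. Your added remark that pan-privacy composes because it is central differential privacy of the (state, output) pair is a useful explicit justification that the paper leaves implicit.
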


Pan-privacy thus inherits the same separations from (sequentially interactive) local privacy as those outlined by Balcer and Cheu~\cite{BC20} for shuffle privacy. For example, using histogram gives an $(\eps,\delta)$-pan-private solution to pointer-chasing on length-$\ell$ vectors using $O\left(\tfrac{1}{\eps^2}\log\left(\tfrac{1}{\delta}\right)\right)$ samples. Sequentially interactive $(\eps,\delta)$-local privacy, however, requires $\Omega(\ell)$ samples~\cite{JMR20}.
\section{Conclusion and Further Questions}
\label{sec:conc}
Our results suggest a relationship between robust shuffle privacy and pan-privacy. In addition to the straightforward problems of closing gaps in our upper and lower bounds, we conclude with more general questions:
\begin{enumerate}
	\item \emph{When can we convert robust shuffle private protocols to pan-private algorithms?} Both of our robust shuffle private lower bounds rely on this kind of conversion. In particularly, they use the fact that the underlying problem is resilient to ``fake'' data. For example, in the course of converting a robust shuffle private distinct elements protocol to a pan-private one, it was important that adding many draws from $\cR(1)$ --- i.e., adding many copies of 1 to the data --- only changed the true answer by at most 1. Similarly, when converting the robust shuffle private uniformity tester to a pan-private one, adding fake uniform samples only diluted the original testing distance, so the resulting pan-private tester was still useful. It is not clear which problems do or do not have this property, or whether this property is necessary in general.
	\item \emph{When can we convert pan-private algorithms to robust shuffle private protocols?} Our robust shuffle private distinct elements counter is structurally similar to its pan-private counterpart~\cite{DNPRY10}: both essentially break distinct elements into a sum of noisy $\mor$s and then de-bias the result. The only difference is in the distributed noise generation of our shuffle protocol. Similarly, our robust shuffle private uniformity tester differs from the pan-private version~\cite{AJM19} only in the kind of noise added. Is there a generic structural condition that allows for this kind of transformation?
	\item \emph{Can we separate robust shuffle privacy and pan-privacy?} Proving that robust shuffle privacy must obtain worse performance for some problem requires a lower bound that holds for robust shuffle privacy but not pan-privacy. Unfortunately, almost all multi-message robust shuffle privacy lower bounds are either (1) imported directly from central privacy or (2) imported from pan-privacy. In either case, such lower bounds also apply to pan-privacy and thus do not yield a separation. One exception is the summation lower bound of~\citet{GGKMPV20}, but it holds only if each user is limited to $O(\poly\log(n))$ communication. In the other direction, the only pan-private lower bounds that do not also hold for central privacy are for distinct elements and uniformity testing, but we have shown that they do not give a (polynomial in domain size) separation here.
\end{enumerate}

\newpage

\section*{Acknowledgments}
We thank Cl\'{e}ment Canonne for simplifying the form of the uniformity testing lower bound and Adam Smith for useful discussions regarding the pan-privacy definition. We also thank Jonathan Ullman, Vikrant Singhal and Salil Vadhan for general commentary.

\bibliographystyle{plainnat}
\bibliography{references}

\begin{thebibliography}{30}
\providecommand{\natexlab}[1]{#1}
\providecommand{\url}[1]{\texttt{#1}}
\expandafter\ifx\csname urlstyle\endcsname\relax
  \providecommand{\doi}[1]{doi: #1}\else
  \providecommand{\doi}{doi: \begingroup \urlstyle{rm}\Url}\fi

\bibitem[Acharya et~al.(2015)Acharya, Daskalakis, and Kamath]{ADK15}
Jayadev Acharya, Constantinos Daskalakis, and Gautam Kamath.
\newblock Optimal testing for properties of distributions.
\newblock In \emph{Neural Information Processing Systems (NIPS)}, 2015.

\bibitem[Acharya et~al.(2018)Acharya, Sun, and Zhang]{ASZ18}
Jayadev Acharya, Ziteng Sun, and Huanyu Zhang.
\newblock Differentially private testing of identity and closeness of discrete
  distributions.
\newblock In \emph{Neural Information Processing Systems (NeurIPS)}, 2018.

\bibitem[Acharya et~al.(2019{\natexlab{a}})Acharya, Canonne, Freitag, and
  Tyagi]{ACFT19}
Jayadev Acharya, Cl{\'e}ment Canonne, Cody Freitag, and Himanshu Tyagi.
\newblock Test without trust: Optimal locally private distribution testing.
\newblock In \emph{International Conference on Artificial Intelligence and
  Statistics (AISTATS)}, 2019{\natexlab{a}}.

\bibitem[Acharya et~al.(2019{\natexlab{b}})Acharya, Canonne, Han, Sun, and
  Tyagi]{ACHST19}
Jayadev Acharya, Cl{\'{e}}ment~L. Canonne, Yanjun Han, Ziteng Sun, and Himanshu
  Tyagi.
\newblock Domain compression and its application to randomness-optimal
  distributed goodness-of-fit.
\newblock \emph{CoRR}, abs/1907.08743, 2019{\natexlab{b}}.

\bibitem[Amin et~al.(2020)Amin, Joseph, and Mao]{AJM19}
Kareem Amin, Matthew Joseph, and Jieming Mao.
\newblock Pan-private uniformity testing.
\newblock In \emph{Conference on Learning Theory (COLT)}, 2020.

\bibitem[Avent et~al.(2017)Avent, Korolova, Zeber, Hovden, and
  Livshits]{AKZHL17}
Brendan Avent, Aleksandra Korolova, David Zeber, Torgeir Hovden, and Benjamin
  Livshits.
\newblock Blender: Enabling local search with a hybrid differential privacy
  model.
\newblock In \emph{USENIX Security Symposium (USENIX)}, 2017.

\bibitem[Balcer and Cheu(2020)]{BC20}
Victor Balcer and Albert Cheu.
\newblock Separating local and shuffled differential privacy via histograms.
\newblock In \emph{Information Theoretic Cryptography (ITC)}, 2020.

\bibitem[Balle et~al.(2019{\natexlab{a}})Balle, Bell, Gasc{\'{o}}n, and
  Nissim]{BBGN19}
Borja Balle, James Bell, Adri{\`{a}} Gasc{\'{o}}n, and Kobbi Nissim.
\newblock The privacy blanket of the shuffle model.
\newblock In \emph{International Cryptology Conference (CRYPTO)},
  2019{\natexlab{a}}.

\bibitem[Balle et~al.(2019{\natexlab{b}})Balle, Bell, Gasc{\'{o}}n, and
  Nissim]{BBGN19b}
Borja Balle, James Bell, Adri{\`{a}} Gasc{\'{o}}n, and Kobbi Nissim.
\newblock Differentially private summation with multi-message shuffling.
\newblock \emph{arXiv preprint arXiv:1906.09116}, 2019{\natexlab{b}}.

\bibitem[Balle et~al.(2020)Balle, Bell, Gasc{\'{o}}n, and Nissim]{BBGN20}
Borja Balle, James Bell, Adri{\`{a}} Gasc{\'{o}}n, and Kobbi Nissim.
\newblock Private summation in the multi-message shuffle model.
\newblock \emph{arXiv preprint arXiv:2002.00817}, 2020.

\bibitem[Beimel et~al.(2008)Beimel, Nissim, and Omri]{BNO08}
Amos Beimel, Kobbi Nissim, and Eran Omri.
\newblock Distributed private data analysis: Simultaneously solving how and
  what.
\newblock In \emph{International Cryptology Conference (CRYPTO)}, 2008.

\bibitem[Beimel et~al.(2010)Beimel, Kasiviswanathan, and Nissim]{BBKN10}
Amos Beimel, Shiva~Prasad Kasiviswanathan, and Kobbi Nissim.
\newblock Bounds on the sample complexity for private learning and private data
  release.
\newblock In \emph{Theory of Cryptography Conference}, pages 437--454.
  Springer, 2010.

\bibitem[Bittau et~al.(2017)Bittau, Erlingsson, Maniatis, Mironov, Raghunathan,
  Lie, Rudominer, Kode, Tinnes, and Seefeld]{BEMMR+17}
Andrea Bittau, \'{U}lfar Erlingsson, Petros Maniatis, Ilya Mironov, Ananth
  Raghunathan, David Lie, Mitch Rudominer, Ushasree Kode, Julien Tinnes, and
  Bernhard Seefeld.
\newblock Prochlo: Strong privacy for analytics in the crowd.
\newblock In \emph{Symposium on Operating Systems Principles (SOSP)}, 2017.

\bibitem[Bun et~al.(2016)Bun, Nissim, and Stemmer]{BNS16}
Mark Bun, Kobbi Nissim, and Uri Stemmer.
\newblock Simultaneous private learning of multiple concepts.
\newblock In \emph{Innovations in Theoretical Computer Science (ITCS)}, 2016.

\bibitem[Cai et~al.(2017)Cai, Daskalakis, and Kamath]{CDK17}
Bryan Cai, Constantinos Daskalakis, and Gautam Kamath.
\newblock Priv'it: private and sample efficient identity testing.
\newblock In \emph{International Conference on Machine Learning (ICML)}, 2017.

\bibitem[Canonne(2017)]{C17}
Cl{\'{e}}ment~L. Canonne.
\newblock A short note on poisson tail bounds, 2017.
\newblock URL
  \url{http://www.cs.columbia.edu/~ccanonne/files/misc/2017-poissonconcentration.pdf}.

\bibitem[Cheu et~al.(2019)Cheu, Smith, Ullman, Zeber, and Zhilyaev]{CSUZZ19}
Albert Cheu, Adam Smith, Jonathan Ullman, David Zeber, and Maxim Zhilyaev.
\newblock Distributed differential privacy via shuffling.
\newblock In \emph{Annual International Conference on the Theory and
  Applications of Cryptographic Techniques (CRYPTO)}, 2019.

\bibitem[Dwork et~al.(2006{\natexlab{a}})Dwork, Kenthapadi, McSherry, Mironov,
  and Naor]{DKMMN06}
Cynthia Dwork, Krishnaram Kenthapadi, Frank McSherry, Ilya Mironov, and Moni
  Naor.
\newblock Our data, ourselves: Privacy via distributed noise generation.
\newblock In \emph{Conference on the Theory and Applications of Cryptographic
  Techniques (EUROCRYPT)}, 2006{\natexlab{a}}.

\bibitem[Dwork et~al.(2006{\natexlab{b}})Dwork, McSherry, Nissim, and
  Smith]{DMNS06}
Cynthia Dwork, Frank McSherry, Kobbi Nissim, and Adam Smith.
\newblock Calibrating noise to sensitivity in private data analysis.
\newblock In \emph{Theory of Cryptography Conference (TCC)},
  2006{\natexlab{b}}.

\bibitem[Dwork et~al.(2010)Dwork, Naor, Pitassi, Rothblum, and
  Yekhanin]{DNPRY10}
Cynthia Dwork, Moni Naor, Toniann Pitassi, Guy~N Rothblum, and Sergey Yekhanin.
\newblock Pan-private streaming algorithms.
\newblock In \emph{Innovations in Computer Science (ICS)}, 2010.

\bibitem[Dwork et~al.(2014)Dwork, Roth, et~al.]{DR14}
Cynthia Dwork, Aaron Roth, et~al.
\newblock The algorithmic foundations of differential privacy.
\newblock \emph{Foundations and Trends{\textregistered} in Theoretical Computer
  Science}, 2014.

\bibitem[Erlingsson et~al.(2019)Erlingsson, Feldman, Mironov, Raghunathan,
  Talwar, and Thakurta]{EFMRTT19}
{\'U}lfar Erlingsson, Vitaly Feldman, Ilya Mironov, Ananth Raghunathan, Kunal
  Talwar, and Abhradeep Thakurta.
\newblock Amplification by shuffling: From local to central differential
  privacy via anonymity.
\newblock In \emph{Symposium on Discrete Algorithms (SODA)}, 2019.

\bibitem[Ghazi et~al.(2019{\natexlab{a}})Ghazi, Golowich, Kumar, Pagh, and
  Velingker]{GGKPV19}
Badih Ghazi, Noah Golowich, Ravi Kumar, Rasmus Pagh, and Ameya Velingker.
\newblock On the power of multiple anonymous messages.
\newblock \emph{Arxiv}, abs/1908.11358, 2019{\natexlab{a}}.

\bibitem[Ghazi et~al.(2019{\natexlab{b}})Ghazi, Pagh, and Velingker]{GPV19}
Badih Ghazi, Rasmus Pagh, and Ameya Velingker.
\newblock Scalable and differentially private distributed aggregation in the
  shuffled model.
\newblock \emph{CoRR}, abs/1906.08320, 2019{\natexlab{b}}.

\bibitem[Ghazi et~al.(2020)Ghazi, Golowich, Kumar, Manurangsi, Pagh, and
  Velingker]{GGKMPV20}
Badih Ghazi, Noah Golowich, Ravi Kumar, Pasin Manurangsi, Rasmus Pagh, and
  Ameya Velingker.
\newblock Pure differentially private summation from anonymous messages.
\newblock In \emph{Information Theoretic Cryptography (ITC)}, 2020.

\bibitem[Hardt and Talwar(2010)]{HT10}
Moritz Hardt and Kunal Talwar.
\newblock On the geometry of differential privacy.
\newblock In \emph{Proceedings of the forty-second ACM symposium on Theory of
  computing}, pages 705--714, 2010.

\bibitem[Ishai et~al.(2006)Ishai, Kushilevitz, Ostrovsky, and Sahai]{IKOS06}
Yuval Ishai, Eyal Kushilevitz, Rafail Ostrovsky, and Amit Sahai.
\newblock Cryptography from anonymity.
\newblock In \emph{Foundations of Computer Science (FOCS)}, 2006.

\bibitem[Joseph et~al.(2020)Joseph, Mao, and Roth]{JMR20}
Matthew Joseph, Jieming Mao, and Aaron Roth.
\newblock Exponential separations in local differential privacy.
\newblock In \emph{Symposium on Discrete Algorithms (SODA)}, 2020.

\bibitem[Kasiviswanathan et~al.(2011)Kasiviswanathan, Lee, Nissim,
  Raskhodnikova, and Smith]{KLNRS11}
Shiva~Prasad Kasiviswanathan, Homin~K Lee, Kobbi Nissim, Sofya Raskhodnikova,
  and Adam Smith.
\newblock What can we learn privately?
\newblock \emph{SIAM Journal on Computing}, 2011.

\bibitem[Mir et~al.(2011)Mir, Muthukrishnan, Nikolov, and Wright]{MMNW11}
Darakhshan Mir, Shan Muthukrishnan, Aleksandar Nikolov, and Rebecca~N Wright.
\newblock Pan-private algorithms via statistics on sketches.
\newblock In \emph{Principles of Database Systems (PODS)}. ACM, 2011.

\end{thebibliography}

\newpage

\appendix
\section{Appendix}
\label{sec:app}

\subsection{Proofs for Robust Shuffle Private Distinct Elements Upper Bound}
\label{subsec:distinct-lemmas}
In Section \ref{sec:distinct-upper}, we used two auxiliary lemmas to argue that $\cP_\de$ is robustly private. Here, we give their proofs.

\begin{lemma}[Restatement of Lemma \ref{lem:distinct-ber}]
	Let $n \in \N$, $\gamma \in (0,1]$, and $p \in [0, 1/2]$. Define $p' = \tfrac{1-(1-2p)^{1/n}}{2}$. Then given i.i.d.\ $X_1, \ldots, X_{\gamma n} \sim \mathbf{Ber}(p')$, $X = \sum_{i=1}^{\gamma n} X_i \textrm{ mod } 2$ is identically distributed with $$\mathbf{Ber}\left(\frac{1-(1-2p)^\gamma}{2}\right).$$
\end{lemma}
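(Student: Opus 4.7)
The plan is to reduce to the classical identity for the XOR of i.i.d.\ Bernoulli variables, namely that if $Y_1,\dots,Y_m$ are i.i.d.\ $\mathbf{Ber}(q)$, then $\sum_{i=1}^m Y_i \bmod 2$ is distributed as $\mathbf{Ber}\!\left(\tfrac{1-(1-2q)^m}{2}\right)$. Once this identity is established, the lemma follows by a single algebraic substitution: with $q = p'$ and $m = \gamma n$, the definition $p' = \tfrac{1 - (1-2p)^{1/n}}{2}$ gives $1 - 2p' = (1-2p)^{1/n}$, hence $(1-2p')^{\gamma n} = (1-2p)^{\gamma}$, which immediately yields the claimed parameter $\tfrac{1-(1-2p)^\gamma}{2}$.

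To prove the underlying identity, I would use the $\pm 1$ encoding trick. For a Bernoulli random variable $Y_i \sim \mathbf{Ber}(q)$, one has $\mathbb{E}[(-1)^{Y_i}] = (1-q)\cdot 1 + q\cdot(-1) = 1 - 2q$. Letting $S = \sum_{i=1}^m Y_i$ and $Z = S \bmod 2$, independence of the $Y_i$ gives
\[
\mathbb{E}[(-1)^{S}] = \prod_{i=1}^m \mathbb{E}[(-1)^{Y_i}] = (1-2q)^m.
\]
Since $(-1)^S = (-1)^Z$ depends only on the parity, and $Z \in \{0,1\}$ satisfies $\mathbb{E}[(-1)^Z] = \Pr[Z=0] - \Pr[Z=1] = 1 - 2\Pr[Z=1]$, solving for $\Pr[Z=1]$ gives exactly $\tfrac{1-(1-2q)^m}{2}$, which is the parameter of the desired Bernoulli.

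There is no real obstacle here: the only thing to be mildly careful about is checking that $p' \in [0,1/2]$ so that $\mathbf{Ber}(p')$ is well-defined. This holds because $p \in [0,1/2]$ forces $1 - 2p \in [0,1]$, and thus $(1-2p)^{1/n} \in [0,1]$, giving $p' \in [0,1/2]$. The edge cases $p = 0$ and $p = 1/2$ reduce to trivial Bernoulli distributions and are handled automatically by the same formula. The whole argument is a half-page computation, and the key conceptual step is just the $(-1)^Y$ generating-function identity.
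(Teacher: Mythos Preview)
Your proof is correct and essentially identical to the paper's: both use the $\pm 1$ encoding (the paper writes $Y_i = 1 - 2X_i$, which equals your $(-1)^{X_i}$), take the expectation of the product via independence to get $(1-2p')^{\gamma n}$, and then substitute $1 - 2p' = (1-2p)^{1/n}$. Your additional verification that $p' \in [0,1/2]$ is a small bonus the paper omits.
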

\begin{proof}
	For all $i \in [\gamma n]$, define $Y_i = 1 - 2X_i \in \{\pm 1\}$, and define $Y = \prod_{i=1}^{\gamma n}$. Then
	$$\P{}{X=1} = \P{}{Y = -1} = \frac{1-\E{}{Y}}{2}.$$
	Now, by independence, we rewrite
	$$\E{}{Y} = \prod_{i=1}^{\gamma n} \E{}{Y_i} = \prod_{i=1}^{\gamma n} (1 - 2\E{}{X_i}) = (1-2p')^{\gamma n}$$
	to get $\P{}{X=1} = \tfrac{1 - (1-2p')^{\gamma n}}{2}$. Plugging in the value $p' = \tfrac{1-(1-2p)^{1/n}}{2}$ gives our result.
\end{proof}

\begin{lemma}[Restatement of Lemma \ref{lem:distinct-gamma}]
	For $\gamma \in (0,1]$, $\tfrac{2^\gamma}{2^\gamma-1} \leq \tfrac{2}{\gamma}$.
\end{lemma}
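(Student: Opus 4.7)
The plan is to rearrange the inequality into a more tractable form and then verify it by a short calculus argument. Clearing denominators (noting that $2^\gamma - 1 > 0$ for $\gamma \in (0,1]$ and that $\gamma > 0$), the claim $\tfrac{2^\gamma}{2^\gamma - 1} \leq \tfrac{2}{\gamma}$ is equivalent to $\gamma \cdot 2^\gamma \leq 2(2^\gamma - 1)$, which in turn rearranges to
\[
f(\gamma) := 2^\gamma(2-\gamma) \geq 2.
\]
So the whole lemma reduces to showing $f(\gamma) \geq 2$ on $(0,1]$.

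Next I would check the endpoints: $f(0) = 1 \cdot 2 = 2$ and $f(1) = 2 \cdot 1 = 2$, so the inequality holds with equality at both ends. The natural way to fill in the interior is to show $f$ is at least its chord, i.e., at least the constant $2$. I would do this by looking at $g(\gamma) := \ln f(\gamma) = \gamma \ln 2 + \ln(2-\gamma)$, computing $g''(\gamma) = -1/(2-\gamma)^2 < 0$, and concluding that $g$ is strictly concave on $[0,1]$. Since a concave function on an interval lies above the chord connecting its endpoints, and here the chord is the constant $\ln 2$, we get $g(\gamma) \geq \ln 2$, hence $f(\gamma) \geq 2$ on $[0,1]$. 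This yields the required inequality, with equality only at $\gamma \in \{0,1\}$.

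There is no real obstacle here; the only thing to be a bit careful about is the direction of the chord inequality for concave functions (above the chord, not below), and the fact that the case $\gamma = 0$ is outside the stated domain but is covered as a limit. As an alternative that avoids logarithms entirely, one can differentiate $f$ directly: $f'(\gamma) = 2^\gamma[(2-\gamma)\ln 2 - 1]$ has a unique zero on $[0,1]$ at $\gamma^\ast = 2 - 1/\ln 2 \in (0,1)$, with $f'>0$ to the left and $f'<0$ to the right, so $f$ increases then decreases on $[0,1]$; combined with $f(0)=f(1)=2$, this again forces $f \geq 2$ throughout.
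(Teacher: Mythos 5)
Your proof is correct, and it takes a slightly different route from the paper's. You first clear denominators to reduce the claim to $2^\gamma(2-\gamma)\geq 2$ on $(0,1]$, then close it off either by noting that $\ln\bigl(2^\gamma(2-\gamma)\bigr)=\gamma\ln 2+\ln(2-\gamma)$ is concave with equal endpoint values $\ln 2$ at $\gamma=0,1$, or by locating the unique interior critical point of $f(\gamma)=2^\gamma(2-\gamma)$. The paper instead works with the product $\gamma\cdot\tfrac{2^\gamma}{2^\gamma-1}$ directly: it observes this equals $2$ at $\gamma=1$, shows it is nondecreasing on $(0,1]$ by computing its derivative, and reduces the sign of that derivative to the auxiliary inequality $2^\gamma-\gamma\ln 2-1\geq 0$, which is itself verified by a second differentiation. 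Your clearing of the denominator is the key simplification: it replaces the paper's quotient (whose derivative has a more complicated numerator requiring its own sub-argument) with the polynomial-exponential $f(\gamma)=2^\gamma(2-\gamma)$, for which a single concavity observation finishes the job. Both arguments are elementary calculus; yours is a bit shorter and avoids the nested ``show the numerator of the derivative is nonnegative by another derivative'' step. One minor note: since the lemma only claims $(0,1]$, the value $f(0)=2$ is, as you say, just the continuous limit used to anchor the chord argument, which is perfectly fine.
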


\begin{proof}
	Since $\gamma \cdot \tfrac{2^\gamma}{2^\gamma-1} = 2$ at $\gamma = 1$, it suffices to show that $\gamma \cdot \tfrac{2^\gamma}{2^\gamma-1}$ is nondecreasing on $(0,1]$. Its first derivative is
	$$\frac{2^\gamma(2^\gamma - \gamma \ln(2) - 1)}{(2^\gamma-1)^2}$$
	so  it suffices to show that $2^\gamma - \gamma \ln(2) - 1 \geq 0$. Equality holds for $\gamma = 0$ and its first derivative is $(2^\gamma-1)\ln(2) \geq 0$ for $\gamma > 0$. Thus $2^\gamma - \gamma\ln(2) - 1 \geq 0$ as desired.
\end{proof}
\subsection{The Distinct Elements Problem Over a Large Universe}
\label{subsec:app_distinct_2}
When $k \ge n^2$, the distinct elements protocol $\cP_\de$ from Section \ref{sec:distinct-upper} does not provide a meaningful notion of accuracy. Using public randomness to select a hash function, we can obtain a protocol for distinct elements with error $O(n^{2/3})$ by having each user hash their input to a smaller domain before running the distinct element protocol $\cP_\de$. Note that hashing is strictly for utility and is not used to add privacy.

First, we give a high probability bound on the error from estimating the number of distinct elements by counting the number of distinct elements after hashing.
\begin{lemma}\label{lem:distinct-hash}
Let $k,k' \in \N$ such that $k \ge k'$.
Let $h$ be sampled uniformly from a 2-universal hash family $\cH$ mapping $[k]$ to $[k']$.
Let $S \subseteq [k]$ and $S' = \{s' \in [k'] \mid \exists s \in S \text{ s.t. } h(s) = s'\} \subseteq [k']$.
Then for all $\beta \in (0,1)$,
$$\P{h}{|S| - |S'| \geq \frac{|S|^2}{\beta k'}} \leq \beta.$$
\end{lemma}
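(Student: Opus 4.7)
The plan is to bound $|S| - |S'|$ by the number of hash collisions among elements of $S$, then apply 2-universality and Markov's inequality.

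First I would set up notation for the collisions. Group the elements of $S$ by their hash value under $h$: since $|h(S)| = |S'|$, there are $|S'|$ distinct hash values, and I can write the group sizes as $n_1, \ldots, n_{|S'|}$ with $\sum_i n_i = |S|$. Then
\[
|S| - |S'| \;=\; \sum_{i=1}^{|S'|}(n_i - 1).
\]
Let $C := |\{\{s_1, s_2\} \subseteq S : s_1 \neq s_2,\ h(s_1) = h(s_2)\}| = \sum_i \binom{n_i}{2}$ be the number of unordered colliding pairs. The elementary inequality $\binom{n_i}{2} \ge n_i - 1$ (which holds for every integer $n_i \ge 1$, with equality when $n_i \in \{1,2\}$) gives $|S| - |S'| \le C$.

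Next I would bound $\E_h[C]$ using 2-universality. For each unordered pair $\{s_1, s_2\}$ of distinct elements of $S$, 2-universality implies $\Pr_h[h(s_1) = h(s_2)] \le 1/k'$. Summing over the $\binom{|S|}{2}$ pairs,
\[
\E_h[C] \;\le\; \binom{|S|}{2} \cdot \frac{1}{k'} \;\le\; \frac{|S|^2}{2k'}.
\]

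Finally, I would apply Markov's inequality to the nonnegative random variable $C$ at threshold $t = |S|^2/(\beta k')$:
\[
\P{h}{|S| - |S'| \ge \tfrac{|S|^2}{\beta k'}} \;\le\; \P{h}{C \ge \tfrac{|S|^2}{\beta k'}} \;\le\; \frac{|S|^2/(2k')}{|S|^2/(\beta k')} \;=\; \frac{\beta}{2} \;\le\; \beta,
\]
which is the claim. There is no serious obstacle here: the only thing to be careful about is the combinatorial identity $\binom{n}{2} \ge n-1$ for $n \ge 1$, which makes the collision count an upper bound on the distinct-elements deficit.
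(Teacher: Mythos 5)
Your proof is correct and takes essentially the same route as the paper: both bound $|S| - |S'|$ by the number of colliding pairs in $S$, bound the expected number of collisions via 2-universality, and conclude with Markov. (The paper argues $|S| - |S'| \le X$ by identifying $S'$ with the minimal representatives under $h$, whereas you group by hash value and use $\binom{n_i}{2} \ge n_i - 1$; your $\binom{|S|}{2}/k'$ estimate is also a hair tighter than the paper's $|S|^2/k'$, but these are cosmetic differences.)
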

\begin{proof}
Trivially, $|S| \ge |S'|$.
Let $X = |\{(s,s') \in S^2 \mid s<s' \text{ and } h(s) = h(s')\}|$, i.e. the number of collisions when hashing the set $S$.
Notice that $|S'| = |\{s' \in S \mid \forall (s \in S \text{ s.t. } s < s') ~ h(s) \neq h(s')\}|$.
This implies $|S| - |S'| = |\{s' \in S \mid \exists s \in S \text{ s.t. } s < s' \text{ and } h(s) = h(s')\}| \le X$.
Since $h$ is sampled uniformly from a 2-universal hash family, $\mathbb{E}[X] \le |S|^2/k'$. 
The result follows by Markov's inequality.
\end{proof}

\begin{corollary}
Let constant $c \ge 1$.
Let $n,k \in \N$ such that $k \ge \lceil cn^{4/3} \rceil$.
Let $\eps > 0$ and $\delta \in (0,1)$.
Let $h$ be sampled uniformly from a 2-universal hash family $\cH$ mapping $[k]$ to $[\lceil cn^{4/3}\rceil]$.
Then the protocol $\cP_\mathsf{HDE} = (\cR_\mathsf{DE} \circ h, \cA_\mathsf{DE})$ for $n$ users 
\begin{enumerate}[label=\Roman*.]
\item is $(2\eps + 2\ln(1/\gamma), 2\delta/\gamma, \gamma)$-robustly shuffle private;
\item solves the $(\alpha, \beta)$-distinct elements problem for
\begin{align*}
\alpha = \frac{2n^{2/3}}{c\beta} + \frac{e^\eps}{e^\eps-1}\cdot\sqrt{2(n^{4/3}+1)\ln\left(4/\beta\right)};
\end{align*}
\item requires each user to communicate at most $O(n^{4/3}\log(n(e^\eps+1)/\delta))$ messages of length $O(\log n)$.
\end{enumerate}
\end{corollary}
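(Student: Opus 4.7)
The plan is to treat the hash function $h$ (sampled from $\cH$ via public randomness) as a pre-processing step and reduce everything to the analysis of $\cP_\de$ on the domain $[k'] := [\lceil c n^{4/3}\rceil]$. Privacy (I) should be essentially immediate: since $h$ is a deterministic function of the input (given the public random choice of $h$) and since $\vec{x}\sim\vec{x}\,'$ implies $h(\vec{x})\sim h(\vec{x}\,')$, the randomizer $\cR_\de \circ h$ inherits the robust shuffle privacy guarantee of $\cR_\de$ on the smaller domain. Plugging into Theorem~\ref{thm:de_upper}(I) with the bound $\eps(\gamma)\le\eps+\ln(1/\gamma)$ gives the claimed $(2\eps+2\ln(1/\gamma),\, 2\delta/\gamma,\, \gamma)$ guarantee (up to constants in the $\delta$ factor, which absorb as stated).

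For accuracy (II), I would decompose the error into a ``hash collision'' part and a ``protocol estimation'' part and union-bound. Let $S := \{x_1,\dots,x_n\}$ so $|S| = D(\vec{x})$, and let $S'$ be its image under $h$ in $[k']$. Then $|D(\vec{x}) - |S'|| = |S| - |S'|$, and by Lemma~\ref{lem:distinct-hash} applied with failure probability $\beta/2$, we get
\[
|S| - |S'| \;\le\; \frac{|S|^2}{(\beta/2)\, k'} \;\le\; \frac{n^2}{(\beta/2)\,c n^{4/3}} \;=\; \frac{2 n^{2/3}}{c\beta}
\]
with probability at least $1-\beta/2$, using $|S|\le n$ and $k'\ge cn^{4/3}$. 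Independently, the accuracy guarantee of Theorem~\ref{thm:de_upper}(II) applied to $\cP_\de$ on domain $[k']$ with failure probability $\beta/2$ yields an estimate of $|S'|$ within $\tfrac{e^\eps}{e^\eps-1}\sqrt{2k'\ln(4/\beta)}$, and $k'\le n^{4/3}+1$ gives the second term in $\alpha$. A union bound combines the two events.

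Communication (III) follows directly from Theorem~\ref{thm:de_upper}(III) with $k$ replaced by $k' = O(n^{4/3})$: each user sends $O(n^{4/3}\log(n(e^\eps+1)/\delta))$ messages, each now labeled by an element of $[k']$, hence of length $O(\log k') = O(\log n)$. I expect no serious obstacle: the only mild care needed is keeping track of the failure-probability splitting in the union bound for accuracy and verifying that the hash-collision bound in Lemma~\ref{lem:distinct-hash} is being applied with the right constant so the resulting $2n^{2/3}/(c\beta)$ matches the statement.
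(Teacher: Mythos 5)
Your approach matches the paper's exactly: privacy via observing that $h(\vec{x})\sim h(\vec{x}\,')$ for fixed $h$ and invoking Theorem~\ref{thm:de_upper}(I), accuracy via a triangle-inequality/union-bound decomposition into the hash-collision error from Lemma~\ref{lem:distinct-hash} (at failure level $\beta/2$) and the protocol error from Theorem~\ref{thm:de_upper}(II) (at failure level $\beta/2$), and communication directly from Theorem~\ref{thm:de_upper}(III) with $k$ replaced by $k'=O(n^{4/3})$. Your side remark about the $\delta$ constant is well spotted---a direct application of Theorem~\ref{thm:de_upper}(I) yields $4\delta/\gamma$, not the $2\delta/\gamma$ stated in the corollary---and note also that the step $k'\le n^{4/3}+1$ in your accuracy argument (which the paper likewise uses implicitly) is only tight for $c=1$; for $c>1$ the second term of $\alpha$ should really carry the factor $c$ inside the square root since $k'=\lceil cn^{4/3}\rceil$.
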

\begin{proof}[Proof of Privacy (Part I)]
Fix $h \in \cH$.
For $\vec{x} \in [k]^n$, let $h(\vec{x}) = (h(x_1), \ldots, h(x_n))$.
Let $\vec{x}, \vec{x}\,' \in [k]^n$ be neighboring datasets.
Then $h(\vec{x})$ and $h(\vec{x}\,')$ are also neighboring datasets, and privacy follows from Theorem \ref{thm:de_upper} Part I.
\end{proof}
\begin{proof}[Proof of Accuracy (Part II)]
By Lemma \ref{lem:distinct-hash} and Theorem \ref{thm:de_upper} Part II,
\begin{align*}
&\mathbb{P}\!\left[\big|\cP_\mathsf{HDE}(\vec{x}) - D(\vec{x})\big| \ge \frac{2n^{2/3}}{c\beta} + \frac{e^\eps}{e^\eps-1}\cdot\sqrt{2(n^{4/3}+1)\ln\left(4/\beta\right)}\right]\\
&\le \mathbb{P}\!\left[\big|\cP_\mathsf{DE}(h(\vec{x})) - D(h(\vec{x}))\big| + \big|D(h(\vec{x}))- D(\vec{x})\big| \ge \frac{2n^{2/3}}{c\beta} + \frac{e^\eps}{e^\eps-1}\cdot\sqrt{2(n^{4/3}+1)\ln\left(4/\beta\right)} \right]\\
&\le \mathbb{P}\!\left[\big|\cP_\mathsf{DE}(h(\vec{x})) - D(h(\vec{x}))\big| \ge \frac{e^\eps}{e^\eps-1}\cdot\sqrt{2(n^{4/3}+1)\ln\left(4/\beta\right)}\right] + \mathbb{P}\!\left[\big|D(h(\vec{x}))- D(\vec{x})\big| \ge \frac{2n^{2/3}}{c\beta}\right]\\
&\le \beta. \qedhere
\end{align*}
\end{proof}
\subsection{Technical Claims for Unifomity Testing}
\label{app:testing-claims}
Here, we provide proofs for the technical claims made in the proof of Theorem \ref{thm:ut-upper-bound-prelim}.

\begin{proof}[Proof of Claim \ref{claim:moments}]
Recall that we defined $E_j = c_j(\vec{y}) - c_j(\vec{x})$. is drawn from $\Bin(\ell_j, 1/2) -\ell_j/2$ whose first four moments are $0, \frac{\ell_j}{4}, 0, \frac{3\ell^2_j}{16} - \frac{\ell_j}{8}$. The expectation of $A$ immediately follows from linearity and the second moment of $E_j$:
$$ \ex{}{A} = \frac{k}{m} \sum_{j=1}^k \ex{}{E_j^2} = \frac{k}{4m} \sum_{j=1}^k \ell_j $$

The expectations of $B,C$ are 0 due to $\E{}{E_j}=0$:
\begin{align*}
\ex{}{B} &= \frac{2k}{m} \sum_{j=1}^k \ex{}{E_j\cdot (c_j(\vec{x})-m/k)}\\
    &= \frac{2k}{m} \sum_{j=1}^k \ex{}{E_j} \cdot \ex{}{c_j(\vec{x})-m/k} \tag{Independence} \\
    &= 0 \\
\ex{}{C} &= \frac{k}{m} \sum_{j=1}^k \ex{}{E_j} = 0
\end{align*}

The variance calculations follow essentially the same recipe:
\begin{align*}
\Var{}{A} &= \frac{k^2}{m^2} \sum_{j=1}^k \Var{}{ E_j^2 } \tag{Independence}\\
    &= \frac{k^2}{m^2}\sum_{j=1}^k (\ex{}{E_j^4} - \ex{}{E_j^2}^2)\\
    &= \frac{k^2}{m^2}\sum_{j=1}^k (\frac{3\ell^2_j}{16} - \frac{\ell_j}{8} - \frac{\ell_j^2}{16}) \tag{$4^\mathrm{th}$ \& $2^\mathrm{nd}$ moments}\\
    &\leq \frac{k^2}{8m^2}\sum_{j=1}^k \ell^2_j \\
\Var{}{C} &= \frac{k^2}{m^2} \sum_{j=1}^k \Var{}{E_j} \tag{Independence}\\
    &= \frac{k^2}{4m^2} \sum_{j=1}^k \ell_j \qedhere
\end{align*}
\end{proof}

\begin{proof}[Proof of Claim \ref{claim:moments-uniform}]
As observed in \cite{AJM19}, the analysis by \cite{ADK15} implies that
\begin{align*}
    \ex{}{Z} &\leq \frac{\alpha^2 m}{500}\\
    \Var{}{Z} &\leq \frac{\alpha^4m^2}{500000} 
\end{align*}
Also, we have
\begin{align*}
\Var{}{B} &= \frac{4k^2}{m^2} \sum_{j=1}^k \Var{}{E_j \cdot (c_j(\vec{x}) - m/k)} \tag{Independence} \\
    &= \frac{4k^2}{m^2} \sum_{j=1}^k \ex{}{E_j^2} \cdot \ex{}{(c_j(\vec{x}) - m/k)^2} - \ex{}{E_j}^2 \cdot \ex{}{c_j(\vec{x}) - m/k}^2 \tag{Independence} \\
    &= \frac{k^2}{m^2} \sum_{j=1}^k \ell_j \cdot \ex{}{(c_j(\vec{x}) - m/k)^2} \stepcounter{equation} \tag{\theequation} \label{eq:var-b}\\
    &= \frac{k^2}{m^2} \sum_{j=1}^k \ell_j \cdot \frac{m}{k} \tag{$\bD=\bU$}\\
    &= \frac{k}{m} \sum_{j=1}^k \ell_j
\end{align*}
where \eqref{eq:var-b} follows from the fact that $E_j \sim \Bin(\ell_j,1/2)$.
\end{proof}

\begin{proof}[Proof of Claim \ref{clm:symmetric-mean-zero}]
By linearity of expectation, the mean is zero. Now we show that the variable is symmetric. We first argue that each term is symmetric: for any $v\neq 0$,
\begin{align*}
\P{}{E_j\cdot d_j = v} &=  \P{}{E_j = v / d_j}\\
    &= \P{}{E_j = -v / d_j} \\
    &= \P{}{E_j\cdot d_j = -v}
\end{align*}
It remains to argue the convolution of two symmetric distributions $\bT,\bT'$ is symmetric:
\begin{align*}
\P{t\sim\bT,t'\sim\bT'}{t+t'=v} &= \sum_{u\in \Z} \P{}{t=u} \cdot \P{}{t'=v-u} \\
    &= \sum_{u \in \Z} \P{}{t=-u} \cdot \P{}{t'=-(v-u)} \\
    &= \P{}{t+ t'=-v} \qedhere
\end{align*}
\end{proof}

\subsection{Robustness of Counting Protocol by Balcer \& Cheu}
\label{sec:robust-hist}
In order to compute histograms with error $O(\log(1/\delta)/\eps^2)$,~\citet{BC20} presented a shuffle private protocol $\cP^\mathrm{zsum}_{\eps,\delta}$ for binary sums. Here, we prove that their binary sum protocol is robust shuffle private. As written, it only ensures privacy for $\eps \le 1$, but this is a limitation that is lifted by replacing instances of $\eps$ with $\sqrt{5} \cdot (e^\eps-1) / (e^\eps+1)$.
\begin{claim}
\label{claim:robust_bc}
For any $\eps > 0$, $\gamma \in (01,]$, and $\delta \in (0,1)$ with $n \geq 20 \left(\tfrac{e^\eps+1}{e^\eps-1}\right)^2\ln\left(\tfrac{2}{\delta}\right)$ users, $\cP^\mathrm{zsum}_{\eps,\delta}$  is $\left(\eps, 2\left(\tfrac{\delta}{2}\right)^\gamma, \gamma\right)$-robustly shuffle private.
\end{claim}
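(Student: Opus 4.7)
The plan is to reduce the robust shuffle privacy of $\cP^\mathrm{zsum}_{\eps,\delta}$ to a direct application of the Binomial Mechanism (Lemma~\ref{lem:b_noise}), mirroring the noise-tracking argument used inside Theorem~\ref{thm:ut-upper-bound-prelim}. The key fact I will use about the Balcer--Cheu protocol is that each honest user's randomizer contributes, on top of their one ``real'' bit, an amount of symmetric Bernoulli-$1/2$ noise whose total across $n$ honest users produces a sum whose distribution is $f(\vec{x}) + \Bin(n, 1/2)$ (up to the trivial offset). In particular, restricting to only the $\gamma' n \geq \gamma n$ honest users, the output of $\cS \circ \cR^{\gamma' n}$ is just the sum of those honest users' real bits plus $\Bin(\gamma' n, 1/2)$ noise.

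First, I would verify monotonicity in $\gamma$: adding more honest users can only inject additional Bernoulli-$1/2$ noise, and privacy is closed under post-processing, so it suffices to prove the claim at $\gamma' = \gamma$. Second, I would invoke Lemma~\ref{lem:b_noise} with the 1-sensitive function equal to the sum of honest users' bits, noise count $\ell = \gamma n$, and $p = 1/2$. To get target parameters $(\eps, \delta_\gamma)$ for $\delta_\gamma := 2(\delta/2)^{\gamma}$, the lemma requires
\[
\frac{\gamma n}{2} \;\geq\; 10 \cdot \left(\frac{e^\eps+1}{e^\eps-1}\right)^{2} \ln\!\left(\frac{2}{\delta_\gamma}\right).
\]
Third, I would simplify $\ln(2/\delta_\gamma) = \ln\!\left((2/\delta)^{\gamma}\right) = \gamma \ln(2/\delta)$, so that the requirement collapses to
\[
n \;\geq\; 20\cdot\left(\frac{e^\eps+1}{e^\eps-1}\right)^{2} \ln\!\left(\frac{2}{\delta}\right),
\]
which is precisely the hypothesis of the claim. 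Applying Lemma~\ref{lem:b_noise} then gives $(\eps, \delta_\gamma)$-differential privacy of $\cS \circ \cR^{\gamma n}$, as needed.

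The only real obstacle is verifying the first structural fact, namely that the marginal distribution of the shuffled output of $\cP^\mathrm{zsum}_{\eps,\delta}$ restricted to any $\gamma' n$ honest users is exactly $\sum_{i} x_i$ plus a $\Bin(\gamma' n, 1/2)$ noise term (or a distribution at least as ``noisy'' for the purposes of Lemma~\ref{lem:b_noise}). This is essentially by inspection of the \cite{BC20} randomizer — it generates a deterministic number of i.i.d.\ Bernoulli-$1/2$ noise messages per user, so their total across any subset of $\gamma' n$ users is binomial — and requires no concentration argument (unlike the Poisson case in Theorem~\ref{thm:ut-upper-bound-prelim}, where the factor $2\delta^\gamma$ arose from a tail bound on the noise count). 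Once this structural observation is in hand, the three-step computation above completes the proof.
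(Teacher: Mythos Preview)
Your overall strategy---reduce robust shuffle privacy to the Binomial Mechanism and exploit the identity $\ln(2/\delta_\gamma)=\gamma\ln(2/\delta)$---is exactly the paper's approach. The gap is in your ``by inspection'' structural claim about $\cP^\mathrm{zsum}_{\eps,\delta}$: the Balcer--Cheu randomizer does \emph{not} contribute symmetric $\tBer{1/2}$ noise. Each user sends their true bit together with one additional bit drawn from $\tBer{p}$ where
\[
p \;=\; 1-\frac{10}{n}\left(\frac{e^\eps+1}{e^\eps-1}\right)^{2}\ln\!\left(\frac{2}{\delta}\right),
\]
so the aggregate noise over $\gamma' n$ honest users is $\Bin(\gamma' n,p)$ with $p$ close to $1$, not $\Bin(\gamma' n,1/2)$. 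Your invocation of Lemma~\ref{lem:b_noise} with $p=1/2$ therefore does not apply to this protocol.

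The repair is short, because Lemma~\ref{lem:b_noise} as stated in the paper already handles arbitrary $p\in(0,1)$ via the factor $\min(p,1-p)$. With the correct $p$ one gets, for any $\gamma'\ge\gamma$,
\[
\gamma' n\cdot\min(p,1-p)\;=\;\gamma' n(1-p)\;=\;\gamma'\cdot 10\left(\tfrac{e^\eps+1}{e^\eps-1}\right)^{2}\ln\!\left(\tfrac{2}{\delta}\right)\;\ge\;10\left(\tfrac{e^\eps+1}{e^\eps-1}\right)^{2}\ln\!\left(\tfrac{2}{2(\delta/2)^\gamma}\right),
\]
and Lemma~\ref{lem:b_noise} yields $(\eps,2(\delta/2)^\gamma)$-privacy directly for every $\gamma'\ge\gamma$. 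This is how the paper argues; note it makes your separate monotonicity-by-post-processing step unnecessary. (That step is also a little loose as written: going from $\gamma' n$ to $\gamma n$ honest users is not literally a post-processing, since the differing coordinate of a neighboring pair may sit among the extra $(\gamma'-\gamma)n$ users---though the conclusion can be salvaged by splitting the binomial noise.) It is a pleasant coincidence that your $p=1/2$ calculation lands on the same threshold $n\ge 20(\tfrac{e^\eps+1}{e^\eps-1})^2\ln(2/\delta)$, but that numerical agreement does not validate the argument for the actual protocol.
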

\begin{proof}
In $\cP^\mathrm{zsum}_{\eps,\delta}$, each user reports their true bit and a bit drawn from $\mathbf{Ber}(p)$ where
\begin{align*}
p = 1 - \frac{10}{n}\cdot\left(\frac{e^\eps+1}{e^\eps-1}\right)^2\cdot\ln\left(\frac{2}{\delta}\right).
\end{align*}
For every $\gamma' \ge \gamma$ such that $\gamma'n \in \Z$, if only $\gamma' n$ users run this protocol, then on input $\vec{x}$, the output of the shuffler is a post-processing of the central algorithm that samples $\eta \sim \mathbf{Bin}(\gamma'n, p)$ and outputs $\left(\sum_{i=1}^{\gamma' n} x_i\right) + \eta$. It therefore suffices to prove a privacy guarantee for this quantity. We remark that Lemma \ref{lem:b_noise} is a special case of a more general lemma:
\begin{lemma}[Appendix C \cite{GGKPV19}]
Let $n,\ell \in \N$, $p \in (0,1)$ and let $f: \cX^n \rightarrow \Z$ be a 1-sensitive function, i.e. $|f(\vec{x}) - f(\vec{x}\,')| \le 1$ for all neighboring datasets $\vec{x},\vec{x}\,' \in \cX^n$.
For any $\eps > 0$ and $\delta \in (0,1]$, if
$$\ell \cdot \min(p, 1-p) \geq 10\left(\frac{e^\eps+1}{e^\eps-1}\right)^2\ln\left(\frac{2}{\delta}\right),$$
then the algorithm that on input $\vec{x}$ samples $\eta \sim \mathbf{Bin}(\ell, p)$ and outputs $f(\vec{x}) + \eta$ is $(\eps, \delta)$-differentially private.
\end{lemma}
We apply this lemma after making the following observation:
\begin{align*}
\gamma' n \cdot \min(p, 1-p)
&\geq \gamma' \cdot 10 \cdot\left(\frac{e^\eps+1}{e^\eps-1}\right)^2\cdot\ln\left(\frac{2}{\delta}\right)\\
&\ge \gamma \cdot 10 \cdot\left(\frac{e^\eps+1}{e^\eps-1}\right)^2\cdot\ln\left(\frac{2}{\delta}\right)\\
&= 10 \cdot\left(\frac{e^\eps+1}{e^\eps-1}\right)^2\cdot\ln\left(\frac{2}{2(\delta/2)^\gamma}\right).
\end{align*}
Therefore, $\cP^\mathrm{zsum}_{\eps,\delta}$ is $(\eps, 2(\delta/2)^\gamma, \gamma)$-robustly shuffle private.
\end{proof}

\end{document}